\newcommand{\ens}[1]{\ensuremath{#1}}
\newcommand{\ith}{\ens{i^{\mbox{\hspace{.2mm}\scriptsize th}}}}
\newcommand{\jth}{\ens{j^{\mbox{\hspace{.2mm}\scriptsize th}}}}
\newcommand{\kth}{\ens{k^{ \mbox{\hspace{.2mm}\scriptsize th}}}}
\newcommand{\cwid}[1]{}
\newcommand{\Ds}{Mergeable Dictionary}
\newcommand{\prob}{Mergeable Dictionary}
\newcommand{\probs}{Mergeable Dictionaries}
\definecolor{litegr}{rgb}{.7,.7,.7}
\definecolor{goodtogo}{rgb}{.7,0,0}
\definecolor{active}{rgb}{0,.75,0}
\definecolor{old}{rgb}{.55,.55,.0}
\definecolor{stylenote}{rgb}{.2,.4,.7}
\newcommand{\kwMs}{Make-Set}
\newcommand{\kwSpl}{Split}
\newcommand{\kwJoinadj}{Join}
\newcommand{\kwUnion}{Merge}
\newcommand{\kwSrc}{Search}
\newcommand{\kwFind}{Find}
\newcommand{\Ms}{\mbox{\textsc{\kwMs}}}
\newcommand{\Spl}{\mbox{\textsc{\kwSpl}}}
\newcommand{\Splx}[2]{\mbox{\textsc{\kwSpl(\ensuremath{#1,#2})}}}
\newcommand{\Joinadj}{\mbox{\textsc{\kwJoinadj}}}
\newcommand{\Union}{\mbox{\textsc{\kwUnion{}}}}
\newcommand{\Unionx}[2]{\mbox{\textsc{\kwUnion(\ensuremath{#1,#2})}}}
\newcommand{\Src}{\mbox{\textsc{\kwSrc}}}
\newcommand{\Srcx}[2]{\mbox{\textsc{\kwSrc(\ensuremath{#1,#2})}}}
\newcommand{\Find}{\mbox{\textsc{\kwFind}}}
\newcommand{\Findx}[1]{\mbox{\textsc{\kwFind(\ensuremath{#1})}}}
\newcommand{\Bslfsrc}{\mbox{\textsc{BSL-FSearch}}}
\newcommand{\Bslfsrcx}[2]{\mbox{\textsc{BSL-FSearch(\ensuremath{#1,#2})}}}
\newcommand{\Bslfspl}{\mbox{\textsc{BSL-FSplit}}}
\newcommand{\Bslfsplx}[1]{\mbox{\textsc{BSL-FSplit(\ensuremath{#1})}}}
\newcommand{\Bslfjoin}{\mbox{\textsc{BSL-FJoin}}}
\newcommand{\Bslfjoinx}[2]{\mbox{\textsc{BSL-FJoin(\ensuremath{#1,#2})}}}
\newcommand{\Bslfrew}{\mbox{\textsc{BSL-FRew}}}
\newcommand{\Bslfrewx}[2]{\mbox{\textsc{BSL-FRew(\ensuremath{#1,#2})}}}
\newcommand{\Bslsrcx}[2]{\mbox{\textsc{BSL-Search(\ensuremath{#1,#2})}}}
\newcommand{\Bslspl}{\mbox{\textsc{BSL-Split}}}
\newcommand{\Bslsplx}[2]{\mbox{\textsc{BSL-Split(\ensuremath{#1,#2})}}}
\newcommand{\Bslrew}{\mbox{\textsc{BSL-Rew}}}
\newcommand{\Bslrewx}[3]{\mbox{\textsc{BSL-Rew(\ensuremath{#1,#2,#3})}}}
\newcommand{\opcount}{4}
\newcommand{\opcountmonetext}{three}
\newcommand{\opcountmtwotext}{two}
\newcommand{\segments}{{\segment}s}
\newcommand{\segment}{segment}
\newcommand{\Segment}{Segment}
\newcommand{\cons}[1]{\ensuremath{c_{#1}}}
\newcommand{\cnsa}{\cons a}
\newcommand{\cnsb}{\cons b}
\newcommand{\cnsc}{\cons c}
\newcommand{\cnsd}{\cons d}
\newcommand{\cnse}{\cons e}
\newcommand{\cnsf}{\cons f}
\newcommand{\cnsg}{\cons g}
\newcommand{\cnsh}{\cons h}
\newcommand{\cnsj}{\cons j}
\newcommand{\invzero}{\ensuremath{\mathbf{(I0)}}}
\newcommand{\invone}{\ensuremath{\mathbf{(I1)}}}
\newcommand{\invtwo}{\ensuremath{\mathbf{(I2)}}}
\newcommand{\isodd}{\ensuremath{\sigma}}
\newcommand{\lasta}{z}
\newcommand{\lastb}{v}
\newcommand{\universe}{\ensuremath{\mathcal U}}
\newcommand{\maxrank}{\ensuremath{R}}
\newcommand{\prevheight}{\ensuremath{h'}}
\newcommand{\nodeheight}[1]{\ensuremath{h(#1)}}
\newcommand{\intheight}[1]{\ensuremath{H(#1)}}
\newcommand{\intmax}[1]{\ensuremath{\max(\intl #1)}}
\newcommand{\intmin}[1]{\ensuremath{\min(\intl #1)}}
\newcommand{\intsubs}[2]{\ensuremath{I(\set{#1},\set{#2})}}
\newcommand{\intnum}[2]{\ensuremath{|I(\set{#1},\set{#2})|}}
\newcommand{\intl}[1]{\ensuremath{#1}}
\newcommand{\beforeop}[1]{\ensuremath{#1'}}
\newcommand{\intweight}[1]{{\ensuremath{W(\intl #1)}}}
\newcommand{\interweight}[3]{{\ensuremath{W_{[#2,#3]}(#1)}}}
\newcommand{\setweight}[1]{{\ensuremath{W(\set #1)}}}
\newcommand{\nodeweight}[1]{{\ensuremath{w(\node #1)}}}
\newcommand{\nodenewweight}[1]{{\ensuremath{w'(\node #1)}}}
\newcommand{\noderank}[1]{\ensuremath{r(\node #1)}}
\newcommand{\nodenewrank}[1]{\ensuremath{r'(\node #1)}}
\newcommand{\node}[1]{\ensuremath{#1}}
\newcommand{\nodeposition}[2]{pos\ensuremath{_{\set #1}(\node #2)}}
\newcommand{\set}[1]{\ensuremath{#1}}
\newcommand{\collection}[1]{\ensuremath{\mathcal #1}}
\newcommand{\numgap}[2]{\ensuremath{g_{\set #1}(#2)}}
\newcommand{\agap}[1]{\ensuremath{a_{#1}}}
\newcommand{\leftofa}[1]{\ensuremath{a_{#1}'}}
\newcommand{\rightofa}[1]{\ensuremath{a_{#1}''}}
\newcommand{\bgap}[1]{\ensuremath{b_{#1}}}
\newcommand{\leftofb}[1]{\ensuremath{b_{#1}'}}
\newcommand{\rightofb}[1]{\ensuremath{b_{#1}''}}
\newcommand{\maxgap}[1]{\ensuremath{#1^{+}}}
\newcommand{\mingap}[1]{\ensuremath{#1^{-}}}
\newcommand{\leftof}[1]{\ensuremath{\gap #1'}}
\newcommand{\rightof}[1]{\ensuremath{\gap #1''}}
\newcommand{\intfun}[1]{\ensuremath{F(\intl #1)}}
\newcommand{\gap}[1]{\ensuremath{#1}}
\newcommand{\potlossa}[1]{\ens{\mathsf{pl}(\agap{#1})}}
\newcommand{\potlossb}[1]{\ens{\mathsf{pl}(\bgap{#1})}}
\newcommand{\maxpotlossa}[1]{\ensuremath{\alpha_{#1}}}
\newcommand{\maxpotlossb}[1]{\ensuremath{\beta_{#1}}}
\newcommand{\prebsltempl}[4]{\ensuremath{#1{[\raisebox{#4}{$#3\mapsfrom\hspace{-0.0mm}#2$}]}}}
\newcommand{\prebsl}[2]{\ensuremath{\prebsltempl{#1}{#2}{\scriptstyle}{0.3mm}}}
\newcommand{\prebslsub}[2]{\ensuremath{\prebsltempl{#1}{#2}{\scriptscriptstyle}{0.2mm}}}
\newcommand{\postbsltempl}[4]{\ensuremath{#1{[\raisebox{#4}{$#3#2\shortrightarrow\hspace{-0.0mm}$}]}}}
\newcommand{\postbsl}[2]{\ensuremath{\postbsltempl{#1}{#2}{\scriptstyle}{0.3mm}}}
\newcommand{\bsl}[1]{\ensuremath{#1}}
\newcommand{\pre}[2]{\text{pred}\ensuremath{_{#1}(#2)}}
\newcommand{\suc}[2]{\text{succ}\ensuremath{_{#1}(#2)}}
\newcommand{\lvlpre}[2]{\ensuremath{L_{#1}(#2)}}
\newcommand{\lvlsuc}[2]{\ensuremath{R_{#1}(#2)}}
\newcommand{\datast}[1]{\ensuremath{D_{#1}}}
\newcommand{\indatast}[1]{\ensuremath{L_{#1}}}
\newcommand{\hide}[1]{}
\newcommand{\locpotfun}[1]{\ensuremath{\Phi_{e}(#1)}}
\newcommand{\potfun}[1]{\ensuremath{\Phi(#1)}}
\newcommand{\plasetclosed}[3]{\ensuremath{S_{[#1,#2]}^#3}}
\newcommand{\plasetopen}[3]{\ensuremath{S_{(#1,#2)}^#3}}
\newcommand{\plasetsmall}[1]{\ensuremath{\plasetopen{0}{a}{#1}}}
\newcommand{\plaseta}[1]{\ensuremath{\plasetclosed{a}{a}{#1}}}
\newcommand{\plasetb}[1]{\ensuremath{\plasetclosed{b}{b}{#1}}}
\newcommand{\plasetfat}[1]{\ensuremath{\plasetopen{b}{2b}{#1}}}
\newcommand{\plasetobese}[1]{\ensuremath{\plasetclosed{2b}{2b+1}{#1}}}
\newcommand{\plats}[1]{\ensuremath{\mathcal P_{#1}}}
\newcommand{\vionum}[1]{\ensuremath{v_{#1}}}
\newcommand{\amcost}[1]{\ensuremath{\hat{c_{#1}}}}
\newcommand{\actcost}[1]{\ensuremath{c_{#1}}}
\newcommand{\op}[1]{#1}
\newcommand{\initbsl}[1]{\ensuremath{\Lambda_{#1}}}
\newcommand{\nodesatlevel}[1]{\ensuremath{N_{h}{(#1)}}}
\newcommand{\ranksatlevel}[1]{\ensuremath{N_{r}{(#1)}}}
\newcounter{count}
\newtheorem{thm}[count]{Theorem}
\newtheorem{lemma}[count]{Lemma}
\newtheorem{defn}[count]{Definition}
\title{\probs}
\author{John Iacono\thanks{Research supported by NSF grant CCR-0430849 and by a Alfred P.~Sloan Fellowship.} 
\and
\"Ozg\"ur \"Ozkan\thanks{Research supported by US Department of Education Grant P200A090157.}
}
\date{Department of Computer Science and Engineering\\ Polytechnic Institute of NYU\\2 MetroTech Center\\Brooklyn, NY 11201-3840 USA}
\begin{document}

\maketitle

\begin{abstract} 
A data structure is presented for the \Ds{} abstract data type, which supports the following operations on a collection of disjoint sets of totally ordered data: 
\op{Predecessor-Search}, \op{\kwSpl} and \op{\kwUnion}. While \op{Predecessor-Search} and \op{\kwSpl} work in the normal way, the novel operation is \op{\kwUnion}. While in a typical mergeable dictionary (e.g.~2-4 Trees), the \op{\kwUnion} operation can only be performed on sets that span disjoint intervals in keyspace, the structure here has no such limitation, and permits the merging of arbitrarily interleaved sets. 
Tarjan and Brown present a data structure~\cite{journals/jacm/BrownT79} which can handle arbitrary \op{\kwUnion} operations in $\mathcal O(\log n)$ amortized time per operation if the set of operations is restricted to exclude the \op{\kwSpl} operation. In the presence of \op{\kwSpl} operations, the amortized time complexity of their structure becomes $\Omega(n)$. A data structure which supports both \op{\kwSpl} and \op{\kwUnion} operations in $\mathcal O(\log^2 n)$ amortized time per operation was given by Farach and Thorup~\cite{journals/algorithmica/FarachT98}. 
In contrast, our data structure supports all operations, including \op{\kwSpl} and \op{\kwUnion}, in $O(\log n)$ amortized time, thus showing that interleaved \op{\kwUnion} operations can be supported at no additional cost vis-\`{a}-vis disjoint \op{\kwUnion} operations.
\end{abstract}

\section{Introduction} 

Consider the following operations on a data structure which maintains a dynamic collection \collection S of disjoint sets $\{\set{S_1}, \set{S_2}, \ldots\}$ which partition some totally ordered universal set \universe: 

\begin{itemize} 

\item $S\leftarrow \Findx{x}$: Returns the set $\set S \in \collection S$ that contains $x$. 

\item $p\leftarrow \Srcx{S}{x}$: Returns the largest element in \set S that is at most $x$. 

\item $(\set A,\set B)\leftarrow \Splx{S}{x}$: Splits \set S into two sets $\set A = \{y\in\set S\,|\,y\leq x\}$ and $\set B = \{y\in\set S\,|\,y> x\}$. \set S is removed from \collection S while \set A and \set B are inserted. 

\item $C\leftarrow \Unionx{A}{B}$: Creates $\set C = \set A \cup \set B$. \set C is inserted into \collection S while \set A and \set B are removed. 

\end{itemize} 

We call a data structure that supports these\hide{ \opcount{}} operations a \emph{\Ds{}}. In this paper we present a data structure, which implements these operations in amortized time $\mathcal O(\log n)$, where $n$ is the total number of items in \universe{}. What makes the concept of a \Ds{} interesting is that the \Union{} operation does not require that the two sets being merged occupy disjoint intervals in keyspace. As we discuss in full detail in Section~\ref{sec:previous}, a data structure for merging arbitrarily interleaved sets has appeared independently in the context of Union-Split-Find, Mergeable Trees and string matching in Lempel-Ziv compressed text. In all three cases, a $o(\log^2 n)$ bound on mergeable dictionary operations could not be achieved. We present a data structure that is able to break through this bound though use of a novel weighting scheme applied to an extended version of the Biased Skip List data structure~\cite{journals/algorithmica/BagchiBG05}. Another alternative would be to extend and use Biased Search Trees~\cite{journals/siamcomp/BentST85} but we believe extending Biased Skip Lists will be easier, at least in terms of presentation.

We first present a high-level description of the core data structure of the previous work, and show at a high level the method and motivation we use to improve the runtime. Given this description, we then can discuss in some detail the three aforementioned works. Finally, we present the full details of our result.

\subsection{High-Level Description} 

The basic idea of the structure is simple. As a first attempt we show how to achieve $\mathcal O(\log^2 n)$ time, which we outline here and fully present in Section~\ref{sec:LogSQ}. Store each set using an existing dictionary that supports \Src{}, \Spl{} and \Joinadj{}\footnote{\Joinadj{} merges two sets but requires that the sets span disjoint intervals in keyspace} in $\mathcal O(\log n)$ time (e.g.~2-4 trees). Thus, the only operation that requires a non-wrapper implementation is \Union{}. One first idea would be to implement \Union{} in linear time as in \emph{Merge-Sort}, but this performs poorly, as one would expect. A more intelligent idea is to use a sequence of searches to determine how to partition the two sets into sets of \emph{\segments{}} that span maximal disjoint intervals. Then, use a sequence of \Spl{}s to split each set into the \segments{} and a sequence  \Joinadj{} operations to piece together the \segments{} in sorted order. As the number of \segments{} between two sets being merged could be $\Theta(n)$, the worst-case runtime of such an implementation is $\mathcal O(n \log n)$, even worse than the $\mathcal O(n)$ of a brute-force merge. However, it is impossible to perform many \Union{}s with a high number of \segments{}, and an amortized analysis bears this out; there are only $\mathcal O(\log n)$ amortized \segments{} per \Union{}. Thus, since each \segment{} can be processed in $\mathcal O(\log n)$ time, the total amortized cost per \Union{} operation is $\mathcal O(\log^2 n)$. 

In \cite{klaithesis}, it was shown that there are sequences of operations that have $\Theta(\log n)$ amortized \segments{} per \Union{}. This, combined with the worst-case lower bound 
of $\Omega(\log n)$ for the dictionary operations needed to process each \segment{} seemingly gives a strong argument for a $\Omega(\log^2 n)$ lower bound, which was formally conjectured by Lai. 
It would appear that any effort to circumvent this impediment would require abandoning storing each set in sorted order. 
We show this is not necessary, as a weighting scheme allows us finesse the balance between the cost of processing each \segment{}, and the number of \segments{} to be processed; we, in essence, prevent the worst-case of these two needed events from happening simultaneously. 
Our scheme, combined with an extended version of Biased Skip Lists, allows us 
to speed up the 
processing of each \segment{} to $\mathcal O(1)$ when there are many of them, yet gracefully degrades to the information-theoretically mandated $\Theta(\log n)$ worst-case time when there are only a constant number of \segments{}. The details, however, are numerous. We show in Section~\ref{subsec:BSLExtOps} how to augment Biased Skip Lists to support weighted finger versions of operations that we need. Given this, in Section~\ref{sec:DS} a full description of our structure is presented, and in Section~\ref{sec:Analysis} the runtime analysis is proved. 

\subsection{Relationship to existing work} 
\label{sec:previous} 

The underlying problem addressed here has come up independently three times in the past, in the context of Union-Split-Find, Mergeable Trees and string matching in Lempel-Ziv compressed text. All three of these results, which were initially done independently of each other, bump up against the same $\mathcal O(\log^2n)$ issue with merging, and all have some variant of the $\mathcal O(\log^2n)$ structure outlined above at their core. While the intricacy of the latter two precludes us claiming here to reduce the squared logarithmic terms in their runtimes, we believe that we have overcome the fundamental obstacle towards this improvement.

\subsubsection{Searching in Lempel-Ziv} 

In the paper of Farach and Thorup, an algorithm is presented for string matching in a Lempel-Ziv compressed string \cite{journals/algorithmica/FarachT98}. They show how to search for a string of length $p$ in a compressed string of length $n$ that was compressed by a factor of $f$ in time $\mathcal O(p + n \log^2 f)$. The algorithm is complex, but at its heart needs a data structure that can hold a set \set S of at most $n$ integers in the range $1 \ldots \universe$, and can perform an operation that shifts all values in some interval in \set S by some specified integer amount, so long as all values remain the range $1 \ldots \universe$. They show how to do this in time $\mathcal O(\log \universe \log {N})$ using a variant of the $\mathcal O(\log^2n)$ structure described above. This variant allows a whole set to be shifted, and thus shifting an interval can be done with two splits and two merges in addition to a shift. The potential function they use to bound the number of \segments{} is the same one as in our presentation of the $\mathcal O(\log^2 n)$ structure. A simple extension of our structure can speed up the needed shifts by a 
$\log n$ factor. However, since they do not completely use the aforementioned shifting data structure as a black box (there is an ``unwinding" of some work performed there, among other subtleties), we leave the improvement of the runtime to $\mathcal O(p + n \log f)$ only as a conjectured application of our result. Recently, Pawe\l~\cite{DBLP:conf/esa/Gawrychowski11} achieved this bound using a different approach utilizing grammar-based compression.

\subsubsection{Mergable Trees}

In the paper of Georgiadis, Tarjan, Werneck \cite{conf/soda/GeorgiadisTW06} and the follow-up tech report which adds the authors Kaplan and Shafrir \cite{DBLP:journals/talg/GeorgiadisKSTW11}, the problem of \emph{Mergeable Trees} is studied. This paper is concerned with maintaining a dynamic collection of heaps, subject to operations which constitute the \emph{dynamic tree} ADT: \op{parent}, \op{root}, \op{nca}, \op{insert}, \op{link}, \op{cut}, \op{delete}, and one additional operation, \Union{}, which makes things interesting. In the \Union{} operation, two nodes are specified, and the two root-to-node paths, which are each in sorted order due to the heap property, are merged. The idea of a \emph{Mergeable Tree} ADT originated in an algorithm to compute the structure of a 2-maniford in $R^3$. (The original paper has an algorithm which is shown in \cite{journals/dcg/AgarwalEHW06} to take time $\Omega(\sqrt{n})$ per operation.) 
This ADT is a generalization of our \Ds{}, as if the heaps are restricted to be paths, both structures have identical functionality. They too obtain a $\mathcal O(\log n)$ amortized bound on the number of \segments{}, using the same potential function as in our $\mathcal O(\log^2n)$ presentation, albeit using link-cut trees \cite{journals/jcss/SleatorT83} as the underlying $\mathcal O(\log n)$ structure due to their need to have the operations be performed on a tree paths in a heap rather than a totally ordered set, thus obtaining a $\mathcal O(\log^2 n)$ amortized bound on their operations. We conjecture that using the weighting scheme presented in this paper will allow the development of a Mergeable Tree with $\mathcal O(\log n)$ runtime for all operations. This would require the development of a weighted variant of link-cut trees that support weighted finger searches. In effect, link-cut trees extend $(2,4)$ trees to allow operations on a tree topology, while biased skip lists allow sophisticated weighting operations. We would need a combination of both in order to achieve $\mathcal O(\log n)$ amortized time Mergable Trees. While we conjecture such a combination is possible, the details to be worked out are numerous. 

\subsubsection{Union-Split-Find} 

In the Master's thesis of Lai, the problem of \emph{Union-Split-Find} is studied \cite{klaithesis}. This is proposed as a variant of the classic \emph{Union-Find} data structure of \cite{journals/jacm/Tarjan75}. Our \Ds{} ADT implements \Find{} as \Src{} which is a stronger operation than \Find{} (e.g. \Find{} can be implemented by returning the maxima of a set). 
They propose the $\mathcal O(\log^2 n)$ structure outlined above, and show that its amortized performance is $\Omega(\log^2n)$. They conjecture (correctly) that there is a potential function that gives $\mathcal O(\log^2n)$ runtime, but do not discover it, instead listing several potential functions which do not work.  They show that there is a lower bound of $\Omega(\log n)$ for this problem which follows from dynamic connectivity lower bounds of Mihai P\v{a}tra\c{s}cu \cite{conf/stoc/PatrascuD04}. This lower bound indicates our use of the stronger $\Src{}$ instead of the weaker $\Find{}$ comes at no additional asymptotic amortized cost. They conjecture (incorrectly) that this problem has a lower bound of $\Omega(\log^2n)$ per operation; our $\mathcal O(\log n)$ result refutes this. Our results directly provide an amortized optimal $\mathcal O(\log n)$ time solution to their problem, while the best upper bound they could prove was $\mathcal O(n)$.

\subsection{Future Work} 

This paper opens up several avenues for future work. First, as previously stated we believe that our approach can remove a $\log n$ factor from the runtimes of the Lempel-Ziv searching algorithm and of Mergeable Trees. Both of these results will require integrating our result into each of these different and complicated results. But, since in both cases the same potential function is used as in the $\mathcal O(\log^2 n)$ structure presentation below, we believe that our data structure provides the fundamental breakthrough needed to improve these results.

Secondly, for simplicity we do not consider the dynamic case: our data structure always stores a collection of sets that partitions the same totally ordered set. Extending our result to allow insertion and deletion will require adding additional complexity to our weighting scheme. Finally, we note that the idea that arbitrarily interleaved dictionaries can be merged in $\mathcal O(\log n)$ amortized time is probably a surprising one, which at first glance probably appears to to impossible (which is supported by Lai's inability to get a $o(n)$ solution). While previous results had elements of the \Ds{} ADT, and a $\mathcal O(\log^2 n)$ solution to it, here is the first clear abstraction of it as a pure dictionary problem. 
The fact that the three results above were initially discovered independently of each other also speaks to the ``buried'' nature of the fundamental problem in the previous work. 
We hope that this result finds applications which were not considered because of the seeming impossibility at first glance of a $o(n)$ solution.

\section{A Simple Heuristic for \Union{} and the ${\cal O}(\log^2 n)$ Amortized Bound} 
\label{sec:LogSQ}

As mentioned previously, the main difficulty in designing a data structure for our problem with  $o(n)$ worst-case time complexity lies in being able to perform the \Union{} operation fast. This is confirmed by a lower bound of $\Omega(n)$ on the worst-case time complexity of merging two arbitrary sets \cite{conf/soda/DemaineLM00}. We will describe a heuristic for the \Union{} operation presented in \cite{conf/soda/DemaineLM00} and used in previous work \cite{DBLP:journals/talg/GeorgiadisKSTW11, klaithesis}, and show that the use of this heuristic yields $o(n)$ amortized bounds as a warm up.

Consider the \Unionx{A}{B} operation and a maximal subset in either set \set A or \set B such that all the elements of the other set are less than or greater than each element of the subset. We call this maximal subset a \segment. We can view the \Unionx{A}{B} operation as gluing the appropriate \segments{} of set \set A and \set B. Consider, for instance, the \textit{Merge} algorithm of \textit{Merge-Sort}, which could be used to implement the \Union{} operation. The \textit{Merge} algorithm linearly scans each \segment{} until it locates its maximum element. The \textit{\segment{} merging heuristic} is based on that idea that there are more efficient methods of locating the maximum element of a set than a linear scan.

Next, we make the notion of \segments{} slightly more precise, describe the heuristic, and describe the potential function which yields an upper bound of amortized ${\cal O}(\log^2 n)$ time.

\subsection{The \Segment{} Merging Heuristic} 
\label{subsec:LogSQIntSubApp} 

Define a \emph{\segment} of the \Unionx{A}{B} operation to be a maximal subset \intl S of either set \set A or set \set B such that no element in $(\set A \cup \set B) \setminus \intl S$ lies in the interval $[\intmin{S},\intmax{S}]$. 
  
Each set in the collection is stored as a balanced search tree (i.e.~2-4 tree) with level links. The \hide{\Ms{}, }\Find{}, \Src{}, and \Spl{} operations are implemented\footnote{See \cite{klaithesis} for a detailed description of this implementation.} in a standard way to run in ${\cal O}(\log n)$ worst-case time. The \Unionx{A}{B} operation is performed as follows: 
  
We first locate the minimum and maximum element of each \segment{} of the \Unionx{A}{B} operation using the \Src{} operation and the level links, then extract all these \segments{} using the \Spl{} operation, and finally we merge all the \segments{} in the obvious way using a standard \Joinadj{} operation. Therefore since each operation takes ${\cal O}(\log n)$ worst-case time, the total running time is ${\cal O}(T\cdot\log n)$ where $T$ is the number of \segments{}. 
  
We now analyze all the operations using the potential method \cite{amortizedcomplexity}, with respect to two parameters: $n$,\hide{the total number of \Ms{} operations} the size of the universe, and $m$, the total number of all operations. 
  
Let \datast i represent the data structure after operation $i$, where \datast 0 is the initial data structure. Operation $i$ has a cost of \actcost i and transforms $\datast{i-1}$ into $\datast i$. We have a potential function $\Phi : \{\datast i\} \rightarrow\mathbb{R}$ such that $\potfun{\datast 0} = 0$ and $\potfun{\datast i}\geq 0$ for all $i$. The amortized cost of operation $i$, $\amcost i$, with respect to $\Phi$ is defined as $\amcost i = \actcost i + \potfun{\datast i} - \potfun{\datast{i-1}}$. The total amortized cost of $m$ operations will be 
\[ 
\sum_{i=1}^m \amcost i = \sum_{i=1}^m (\actcost i + \potfun{\datast i} - \potfun{\datast{i-1}})= \sum_{i=1}^m \actcost i + \potfun{\datast n} - \potfun{\datast 0} 
\geq \sum_{i=1}^m \actcost i 
\] 
since $\potfun{\datast n}\geq 0$ and $\potfun{\datast 0} = 0$. Thus, the amortized cost will give us an upper bound on the worst-case cost.

Next, we describe a potential function which yields an amortized bound of ${\cal O}(\log^2 n)$ on the running time. This potential function was essentially used in \cite{DBLP:journals/talg/GeorgiadisKSTW11, journals/algorithmica/FarachT98} which are the only instances where a $o(n)$ solution has been presented.

\subsection{The Potential Function} 
\label{subsec:LogSQPotFun} 
We need to define some terminology before describing the potential function. Let \nodeposition{S}{x} be the position of $x$ in set $S$, or more formally $\nodeposition{S}{x} = |\{y\in \set S\,|\,y\leq x\}|$. Then $\numgap{S}{k}$, the size of the \kth\ gap of set \set S, is the difference of positions between the element of position $k$ and $k+1$ of set \set S in universe $U$. In other words, $\numgap{S}{k} = \nodeposition{U}{x} - \nodeposition{U}{y}$ where $\nodeposition{S}{x} = k$ and  $\nodeposition{S}{y} = k+1$. For the boundary cases, let $\numgap{S}{0} = \numgap{S}{|S|} = 1$. 
  
Recall that $D_i$ is the data structure containing our dynamic collection of disjoint sets, ${\cal S}^{(i)} = \{S_1^{(i)}, S_2^{(i)},\ldots\}$ after the \ith\ operation. Finally, let $\varphi(S) = \sum_{j=1}^{|S|-1} \log \numgap{S}{j}$. Then we define the potential after the \ith\ operation as follows: 
\[ 
\potfun{\datast i} = \cnsa\cdot\sum_{\set S \in \collection{S^{(i)}}} \varphi\left(S\right)\log n 
\] 
where \cnsa{} is a positive constant to be determined later. 
  
Note that since the collection of sets consists of the $n$ singleton sets, the data structure initially has 0 potential ($\potfun{\datast 0}=0$). Furthermore, because any gap has size at least 1, the data structure always has non-negative potential ($\potfun{\datast i} \geq 0,\ \forall i\geq 0$).

\subsection{The Amortized ${\cal O}(\log^2 n)$ Bound} 
\label{subsec:LogSQBound} 

The \hide{\Ms{}, }\Find{}, \Src{}, and \Spl{} operations have worst-case ${\cal O}(\log n)$ running times. The first \opcountmtwotext{} of these operations do not change the structure and therefore do not affect the potential. Observe that the \Spl{} operation can only decrease the potential. Thus, the amortized cost of all \opcountmonetext{} operations is ${\cal O}(\log n)$. 

Now, suppose the \ith\ operation is \Unionx{A}{B} where \set $A$ and \set $B$ are sets in \datast{i-1}. Assume w.l.o.g.~that the minimum element in $\set A\cup \set B$ is an element of \set $A$. Let $\intsubs{A}{B} = \{\intl A_1, \intl B_1, \intl A_2, \intl B_2,\ldots \}$ be the set of \segments{} of operation \Unionx{A}{B}, where $\intmax{A_i} < \intmin{A_j}$ and $\intmax{B_i} < \intmin{B_j}$ for $i<j$, and $\intmax{A_i}<\intmin{B_i}<\intmax{B_i}<\intmin{A_{i+1}}$ for all $i$. 
As previously noted, the worst-case cost of the \Union{} operation is ${\cal O}(\intnum{A}{B}\cdot\log n)$. Let $\agap i$ be the size of the gap between the maximum element of $\intl A_i$ and the minimum element of $\intl A_{i+1}$, or more formally let $\agap i = \numgap{\intl A_i \cup \intl A_{i+1}}{|\intl A_i|}$. Define $\bgap i$ similarly. Now, let

\begin{align*} 
\leftofa{i} &= \numgap{\intl A_i \cup \intl B_{i}}{|\intl A_i|}\\ 
\rightofa{i} &= \numgap{\intl B_i \cup \intl A_{i+1}}{|\intl B_i|} 
\end{align*} 
and 
\begin{align*} 
\leftofb{i} &= \numgap{\intl B_i \cup \intl A_{i+1}}{|\intl B_i|}\\ 
\rightofb{i} &= \numgap{\intl A_{i+1} \cup \intl B_{i+1}}{|\intl A_{i+1}|} 
\end{align*}

\begin{figure}
\centering 
\includegraphics[scale=.6]{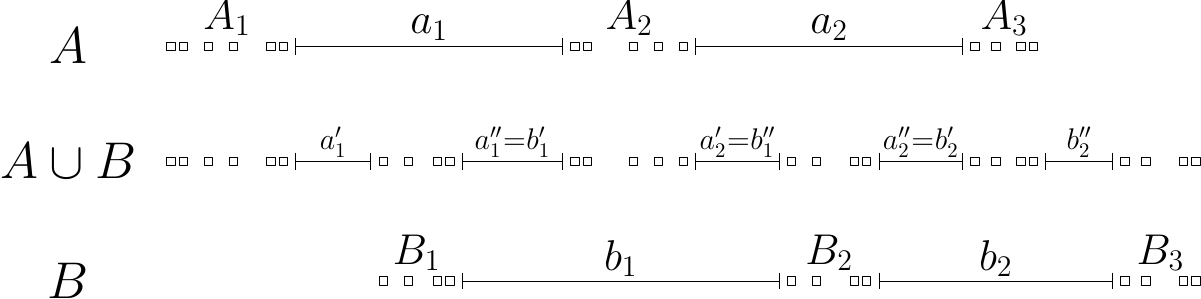} 
\caption{Gaps \agap i, \bgap i, \leftofa i, \rightofb i, \leftofb i, and \rightofb i are defined with respect to the \Unionx{A}{B} operation.} 
\label{fig:lgsqgaps} 
\end{figure} 
 
Note that $\rightofa{ i} = \leftofb{i}$ and $\leftofa{i} = \rightofb{{i-1}}$ (see Figure \ref{fig:lgsqgaps}). During the analysis we will take into account whether \intnum{A}{B} is odd or even. Let $\isodd = \intnum{A}{B} \bmod{2}$, and $R = \lfloor (\intnum{A}{B}-2)/2 \rfloor$. We are now ready to bound the amortized cost of the \Union{} operation. We have

\[\amcost i = \actcost i + \potfun{\datast i} - \potfun{\datast{i-1}}\] 

\begin{align*} 
\potfun{\datast{i-1}} &= \sum_{S\in \collection S \setminus\{\set A, \set B\}} \varphi(S)\cnsa\log n\\ 
& \quad\, +\left(\sum_i \varphi(\intl A_i) +\sum_i \varphi(\intl B_i)\right)\cnsa\log n  \\ 
& \quad\, +\left(\sum_{i=1}^R (\log \agap i + \log \bgap i) + \isodd\log \agap{R+1}\right)\cnsa\log n 
\end{align*} 
and 
\begin{align*} 
\potfun{\datast i} &= \sum_{S\in \collection S \setminus\{\set A, \set B\}} \varphi(S)\cnsa\log n\\ 
& \quad\, +\left(\sum_i \varphi(\intl A_i) +\sum_i \varphi(\intl B_i) + \log \leftofa{1}\right)\cnsa\log n  \\ 
& \quad\, +\frac{1}{2}\left(\sum_{i=1}^R \log \rightofa{ i} + \log \leftofb{i} + \log \rightofb{ i} + \log \leftofa{i+1}\right)\cnsa\log n\\ 
& \quad\, + \frac{1}{2}\left(\log \rightofa{{R+1}} + \log \leftofb{R+1}\right)\isodd \cnsa\log n. 
\end{align*} 
This gives us 
{\allowdisplaybreaks 
\begin{align*} 
\potfun{\datast i} - \potfun{\datast{i-1}} &= \left(\frac{1}{2}\left(\sum_{i=1}^R  \log \rightofa i \leftofb i \rightofb i \leftofa{i+1} - 2\log \agap i \bgap i\right)+\log \leftofa 1 \right)\cnsa\log n\\ 
& \quad\, +\left(\frac{1}{2}\left(\log \rightofa{R+1} + \log \leftofb{R+1}\right) -\log \agap{R+1} \right)\isodd \cnsa\log n  \\ 
&=\left(\frac{1}{2}\left(\sum_{i=1}^R  \log \rightofa i \leftofb i \rightofb i \leftofa i - 2\log \agap i \bgap i\right)+\frac{1}{2}\left(\log \leftofa 1 + \log \leftofa{R+1}\right)\right)\cnsa\log n\\ 
& \quad\, +\left(\frac{1}{2}\left(\log \rightofa{R+1} + \log \leftofb{R+1}\right) -\log \agap{R+1} \right)\isodd \cnsa\log n 
\end{align*} 
We have $\rightofa i, \leftofa i < \agap i \leq n$ and similarly $\rightofb i, \leftofb i < \bgap i \leq n$. Also note that since $\leftofa i + \rightofa i \leq \agap i$, we have $\log \leftofa i + \log \rightofa i \leq  \log \leftofa i + \log (\agap i - \leftofa i)\leq \log \agap i/2 + \log \agap i/2.$ Similarly, $\log \leftofb i + \log \rightofb i \leq \log \bgap i/2 + \log \bgap i/2.$ 
\begin{align*} 
\potfun{\datast i} - \potfun{\datast{i-1}}&\leq \frac{1}{2}\left(\sum_{i=1}^R  \log \rightofa i \leftofb i \leftofb i' \leftofa i - 2\log \agap i \bgap i\right)\cnsa\log n + {\cal O}(\log^2 n)\\ 
&= \frac{1}{2}\left(\sum_{i=1}^R \log \frac{ \leftofa i\cdot \rightofa i\cdot \leftofb i \cdot \rightofb i}{ \agap i \cdot \agap i\cdot \bgap i\cdot \bgap i}\right)\cnsa\log n + {\cal O}(\log^2 n)\\ 
&\leq \frac{1}{2}\left(\sum_{i=1}^R \log \frac{ \agap i/2\cdot \agap i/2\cdot \bgap i/2 \cdot \bgap i/2}{ \agap i\cdot \agap i\cdot \bgap i\cdot \bgap i}\right)\cnsa\log n + {\cal O}(\log^2 n)\\ 
&= \frac{1}{2}\left(\sum_{i=1}^R \log \frac{ 1}{ 16}\right)\cnsa\log n + {\cal O}(\log^2 n)\\ 
&= -\cnsb\cdot I(A,B)\cnsa\log n +{\cal O}(\log^2 n).
\end{align*}}

Recall that the worst-case cost of the \Union{} operation, $\actcost i$, is ${\cal O}(\intnum{A}{B}\log n)$. Let $\cnsc$ be a constant such that $\actcost i \leq \cnsc\cdot\intnum{A}{B}\log n$. Then the above bound yields 
\begin{align*} 
\amcost i &= \actcost i + \potfun{\datast i} - \potfun{\datast{i-1}}\\ 
&\leq \actcost i - \cnsb\cdot \intsubs{A}{B}\cnsa\log n + {\cal O}(\log^2 n)\\ 
&\leq \cnsc\cdot \intsubs{A}{B}\log n - \cnsb\cdot \intsubs{A}{B}\cnsa\log n + {\cal O}(\log^2 n)\\ 
&= {\cal O}(\log^2 n) \qquad\qquad\qquad \mbox{(set }\cnsa=\cnsc/\cnsb). 
\end{align*}

Thus, the amortized cost of the \Union{} operation is ${\cal O}(\log^2 n)$. Combined with the arguments before, this gives us the following theorem.

\begin{thm} 
\label{thm:logsq} 
The \prob{} problem can be solved such that a sequence of $m$ \hide{\Ms{}, }\Find{}, \Src{}, \Union{}, and \Spl{} operations\hide{, $n$ of which are \Ms{} operations, } can be executed in ${\cal O}(m\log^2 n)$  worst-case time. 
\end{thm}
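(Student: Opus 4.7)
The plan is to assemble the per-operation amortized bounds into a single bound on the total cost of a length-$m$ sequence. The engine is the potential function $\potfun{\datast i}$ of Section~\ref{subsec:LogSQPotFun}, which satisfies $\potfun{\datast 0}=0$ and $\potfun{\datast i}\geq 0$ for all $i$; the standard telescoping identity then gives $\sum_i \actcost i\leq\sum_i \amcost i$, so it suffices to show $\amcost i=\mathcal{O}(\log^2 n)$ uniformly in $i$ and in the operation type.

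First I would handle the non-\Union{} operations. \Find{} and \Src{} take $\mathcal{O}(\log n)$ worst-case time on the underlying balanced search tree with level links and leave the structure (hence $\Phi$) unchanged, so $\amcost i=\actcost i=\mathcal{O}(\log n)$. For \Spl{}, the worst-case cost is again $\mathcal{O}(\log n)$; splitting \set S at $x$ merely removes one gap's $\log\numgap{S}{k}$ contribution from $\varphi(\set S)$ while the surviving intra-set gaps keep their sizes, so $\potfun{\datast i}\leq\potfun{\datast{i-1}}$ and $\amcost i=\mathcal{O}(\log n)$.

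The real work for \Union{} is already carried out in Section~\ref{subsec:LogSQBound}. The key inequality $\leftofa i+\rightofa i\leq\agap i$, together with concavity of $\log$ giving $\log\leftofa i+\log\rightofa i\leq 2\log(\agap i/2)$ and the symmetric bound for \bgap i, collapses the potential difference to
\[
\potfun{\datast i}-\potfun{\datast{i-1}}\leq -\cnsb\cdot\intnum{A}{B}\cdot\cnsa\log n+\mathcal{O}(\log^2 n).
\]
Choosing $\cnsa$ large enough relative to the constant $\cnsc$ hidden in $\actcost i\leq\cnsc\cdot\intnum{A}{B}\log n$ cancels the $\intnum{A}{B}\log n$ contributions, leaving $\amcost i=\mathcal{O}(\log^2 n)$, independent of how many segments a single \Union{} happens to process.

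Summing amortized costs over all $m$ operations then yields the claimed $\mathcal{O}(m\log^2 n)$ total. The main conceptual obstacle---charging for a potentially unbounded segment count in a single \Union{} when no per-operation worst-case guarantee is available---has already been overcome by the gap-based choice of $\varphi$ in Section~\ref{subsec:LogSQPotFun}; what remains for a formal write-up is the routine bookkeeping sketched above, together with an explicit check that \Spl{} never increases $\Phi$ (straightforward, but worth recording since a split removes exactly one boundary gap while leaving the remaining gap sizes intact).
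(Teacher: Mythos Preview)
Your proposal is correct and follows essentially the same approach as the paper: the theorem in the paper is not given a separate proof but is stated as the conclusion of the amortized analysis in Section~\ref{subsec:LogSQBound}, and you have accurately assembled that analysis---the potential-method telescoping, the $\mathcal O(\log n)$ amortized bounds for \Find{}, \Src{}, and \Spl{} (with \Spl{} only decreasing $\Phi$), and the key gap inequality yielding $\amcost i=\mathcal O(\log^2 n)$ for \Union{}---into the required $\mathcal O(m\log^2 n)$ total.
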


In order to obtain a data structure with an amortized running time of $o(\log^2 n)$ per operation, we certainly need a new potential function. 
To see this, observe the case when we have two singleton sets $\set A$ and $\set B$ with elements $\node x\in \set A$ and $\node y\in \set B$ such that $\log(\nodeposition{U}{x} - \nodeposition{U}{y}) = \Omega(\log n)$. If we use the potential function defined in Section~\ref{subsec:LogSQPotFun}, the potential increase alone as a result of a \Unionx{A}{B} operation is $\Omega(\log^2 n)$. We want to eliminate the extra $\log n$ factor in the potential function but this implies we need to be able to join \segments{} in amortized ${\cal O}(1)$ time. We ultimately want a data structure such that each operation except \Union{} can be performed in worst-case ${\cal O}(\log n)$ time, and \Union{} can be performed in worst-case 
\[ 
\mathcal O\left(\log n + \sum_i \intfun{A_i} + \sum_j \intfun{B_j} \right) 
\] 
time where $\intl A_i$ and $\intl B_j$ are \segments{} involved in the operation, \intfun{A_i} and \intfun{B_j} denote the time it takes to process \segments{} $\intl A_i$  and $\intl B_j$ respectively, and $\sum_i \intfun{A_i}+ \sum_j \intfun{B_j}$ is no more than the decrease in potential.

In the next section, we describe biased skip lists, the underlying data structure we will be using in our data structure.

\section{Biased Skip Lists with Extended Operations} 
\label{sec:BSL} 
Biased skip lists are a variant of skip lists \cite{journals/cacm/Pugh90} which we assume the reader is familiar with. Biased skip lists as described in \cite{journals/algorithmica/BagchiBG05} are missing some operations which will be vital in the implementation of our data structure. Therefore, in order to be able to design a highly tuned \Union{} operation, we will extend biased skip lists. 

First, we describe essential biased skip list details. The reader is referred to \cite{journals/algorithmica/BagchiBG05} for further details on biased skip lists.

\subsection{Biased Skip Lists} 
\label{subsec:BSLBSL}

We will first cover basic definitions followed by the three key invariants of the structure.

\paragraph{Definitions} 
A biased skip list (BSL) $\bsl S$ stores an ordered set $\set X$ where each element $x\in X$ corresponds to a node\footnote{We will use the terms ``element", ``node", ``key", and ``item" interchangeably; the context clarifies any ambiguity.} $\node x\in \bsl S$ with \textit{weight} $\nodeweight x$, which is user-defined, and integral \textit{height} $\nodeheight x$, which is initially computed from the weight of the node. For our purposes, we will assume that the weights are bounded from below by 1 and bounded from above by a polynomial in $n$. 

Each node $\node x \in \bsl S$ is represented by an array of length $\nodeheight x +1$ called the \textit{tower} of node \node x. The \textit{level-j predecessor}, \lvlpre{j}{x}, of $x$ is the largest node $\node k$ in \bsl S such that $\node k<x$ and $\nodeheight k \geq \node j$. The \textit{level-j successor}, \lvlsuc{j}{x}, is defined symmetrically. The \jth\ element of the tower of node \node x, contains pointers to the \jth\ elements of towers of node \lvlpre{j}{x} and node \lvlsuc{j}{x} with the exception of towers of adjacent nodes where pointers between any pair of adjacent nodes $x$ and $y$ on level $\min(\nodeheight x,\nodeheight{y})-1$ are nil and the pointers below this level are undefined. Node levels progress from top to bottom. Two distinct elements $\node x$ and $\node y$ are called \textit{consecutive} if and only if they linked together in $\set S$; or equivalently if and only if for all $\node x<\node z<\node y$, $\nodeheight z< \min(\nodeheight x,\nodeheight y)$. A \textit{plateau} is a maximal set of consecutive nodes of the same height. The \textit{rank} of a node \node x is defined as $\noderank x = \lfloor\log_a \nodeweight x \rfloor$ where $a$ is a constant to be specified later. For our purposes, we will set $a=2$.

Additionally, let $\pre{X}{x}$ be the predecessor of $x$ in set \set X, and let $\suc{X}{x}$ be the successor of $x$ in set \set $X$. Let $\intheight X = \max_{x\in X} \nodeheight x$. Let $\prebsl{S}{j} = \{\node x\in \bsl S\,|\,x\leq j\}$ and $\postbsl{S}{j} = \{\node x\in \bsl S\,|\,x> j\}$. Let $\intweight S= \sum_{\node x\in \set S} \nodeweight x$. Also let $\interweight{S}{i}{j}= \sum_{\node x \in \set S; i\leq \node x \leq j} \nodeweight x$.

For convenience, we imagine sentinel nodes $-\infty$ and $+\infty$ of height \intheight S at the beginning and end of biased skip list \bsl S. These sentinels are not actually stored or maintained. 

 The \textit{left profile} of \node x in a biased skip list \bsl S is defined as $\{\lvlpre{j}{x}\,|\,\nodeheight{\lvlpre{j}{x}} = j\}$. Similarly the \textit{right profile} of \node x in a biased skip list \bsl S is defined as $\{\lvlsuc{j}{x}\,|\,\nodeheight{\lvlsuc{j}{x}} = j\}$. The \textit{profile} of node $x$ in a biased skip list \bsl S is the union of its left profile and right profile. 

 The \textit{left cover} of a biased skip list \bsl S is defined as $\{\intmin{S}\}\cup \{\lvlsuc{j}{\intmin{S}}\,|\,\nodeheight{\lvlsuc{j}{\intmin{S}}} = j,\ j>\nodeheight{\intmin{S}} \}$. 
Similarly, the \textit{right cover} of a biased skip list \bsl S is defined as $\{\intmax{S}\}\cup \{\lvlpre{j}{\intmax{S}}\,|\,\nodeheight{\lvlpre{j}{\intmax{S}}} = j,\ j>\nodeheight{\intmax{S}} \}$. 
The \textit{cover} of biased skip list \bsl S is the union of its left cover and right cover.

\paragraph{Invariants} 
The three invariants of biased skip lists are listed below. Note that $a$ and $b$ can be suitable constants satisfying the definition of $(a,b)$-biased skip lists. For our purposes it is sufficient to set $a=2, b=6$. 
\begin{defn} 
For any $a$ and $b$ such that $1<a\leq \lfloor\frac{b}{3}\rfloor$, an $(a,b)$-biased skip list is a biased skip list with the following properties: 
\begin{description} 
\item {\invzero} Each item $x$ has height $\nodeheight x\geq \noderank x$. 
\item {\invone} There are never more than $b$ consecutive items of any height. 
\item {\invtwo} For each node $x$ and for all $i$ such that $\noderank x < i \leq \nodeheight x$, there are at least $a$ nodes of height $i-1$ between $x$ and any consecutive node of height at least $i$. 
\end{description} 
\end{defn} 
\noindent In the remainder of the paper, we will refer to $(2,6)$-biased skip lists simply as biased skip lists. 
\subsection{Operations} 
\label{subsec:BSLOps} 

We now describe the original biased skip list operations we will be using in our data structure. 

\begin{itemize} 

 \item $p \leftarrow \Bslsrcx{S}{i}$: Performs a standard search in biased skip list $\bsl S$ using search key $i$. This operation runs in worst-case ${\cal O}(\log n)$ time. 
 \item $p\leftarrow \Bslfsrcx{i}{j}$: Starting from a given finger\footnote{This operation is originally described with three arguments in \cite{journals/algorithmica/BagchiBG05} as $FingerSearch(X,i,j)$. However, note that $X$ is a redundant argument here as we already have a pointer to an element of $X$, namely, to $i$.} to a node $\node i$ in some biased skip list $\bsl S$ perform a predecessor search in  $\bsl S$ using $j$ as the search key. This operation runs in 
\[ 
{\cal O}\left(1 + \log\frac{\interweight{S}{i}{\suc{S}{j}}}{\min(\nodeweight{i},\nodeweight{\pre{S}{j}}, \nodeweight{\suc{S}{j}})}\right) 
\] worst case time. 
 \item $(A,\bsl B)\leftarrow \Bslsplx{S}{i}$: Splits the biased skip list $\bsl S$ at $i$ into two biased skip lists \bsl A and \bsl B storing sets $\{x\in S\,|\, x\leq i\}$ and $\{x\in S\,|\,x> i\}$ respectively, and returns an ordered pair of handles to $\bsl A$ and $\bsl B$. This operation runs in worst-case ${\cal O}(\log n)$ time. 

\item \Bslrewx{S}{i}{w}: Changes the weight of node $\node i\in \bsl S$ to $w$. This operation runs in worst-case ${\cal O}(\log n)$ time.

\end{itemize} 

\subsection{Extended Operations} 
\label{subsec:BSLExtOps} 
Biased skip lists support finger searches, however we need to extend biased skip lists to also support finger split, finger join, and finger reweight operations.

\paragraph{Finger Split} Given a pointer to node $\node f\in\bsl S$, \Bslfsplx{f} splits biased skip list $\bsl S$ into two biased skip lists, \bsl A and \bsl B, storing sets $\{\node x\in \bsl S\,|\, \node x\leq \node f\}$ and $\{\node x\in \bsl S\,|\, \node x> \node f\}$ respectively, and returns an ordered pair of handles to $\bsl A$ and $\bsl B$.

\vspace{5mm} 
\noindent $\underline{(A,\bsl B)\leftarrow\Bslfsplx{f}}$: 
\begin{enumerate} 
\item Disconnect the pointers between the each node in the right profile of node \node f and the left profile of node \node $f'=\suc{S}{f}$, effectively splitting \bsl S and forming $\bsl A$ and $\bsl B$. More precisely, disconnect pointers between the \jth\ level of node \lvlsuc{j}{f} and \lvlpre{j}{f'} where $j\geq\min(\nodeheight{f},\nodeheight{f'})$. Pointers below this level are already null. 

\item Restore \invtwo{} in $\bsl A$.
We will process the nodes of the right cover of \bsl A after Step~1 in the order of increasing height. Denote the current node being processed by \node u. Let \prevheight{} be the height of the node which was most recently processed. 
\begin{enumerate} 
\item If $\node u = \node f$, then set $\prevheight = \nodeheight u$ and demote the height of \node u to \noderank u. 

\item  If \invtwo{} is not violated at \node u, then stop if $\nodeheight u>\min(\intheight{\prebsl{S}{f}},\intheight{\postbsl{S}{f}})$, and set $\prevheight = \nodeheight u$ and iterate with the next node, \lvlpre{\nodeheight{u}+1}{A}, otherwise\footnote{Note that \lvlpre{\nodeheight{u}+1}{A} can be computed in $\mathcal O(b) = \mathcal O(1)$ time due to \invone{}; and $\min(\intheight{\prebsl{S}{f}},\intheight{\postbsl{S}{f}})$ can be computed during Step~1.}. 

\item If \invtwo{} is violated at node \node u, then demote the height of $u$ to $\max(\noderank u, \prevheight)$. Set $\prevheight$ to the height of \node u before the demotion. 

\begin{itemize} 
\item If a demotion causes an \invone{} violation at the new height of \node u by creating a plateau of $b'>b$ nodes, then promote the height of the median of these nodes by 1, and iterate at the level above to check for percolating \invone{} violations. 
\item Once all the \invone{} violations are fixed, iterate with the next node, \lvlpre{\prevheight +1}{A}, to fix the next potential \invtwo{} violation. 
\end{itemize} 

\end{enumerate} 

Restore \invtwo{} in $\bsl B$ essentially symmetrically. 
\end{enumerate} 

\paragraph{Correctness} All the pointers in \bsl S connecting any node in \bsl A and any node in \bsl B are precisely those described in Step 1. Therefore Step 1 splits the nodes of $\bsl A$ and $\bsl B$ correctly. 

We need to ensure that all the invariants are preserved. When we perform demotions in Step 2 we make sure that we do not demote the height of any node lower than its rank. Therefore \invzero{} is preserved. Note that removing predecessors or successors cannot cause an \invone{} violation. 

Observe that in $\bsl A$, \invtwo{} can only be violated at the nodes in the right cover of \bsl A after Step~1, once per level. A symmetric argument holds for \invtwo{} violations in \bsl B. 
When we demote a node \node u this fixes the \invtwo{} violation at that node because the height of the node is either demoted to \noderank u which by definition implies the violation is fixed, or to \prevheight{}. 
If the height of node \node u is demoted to \prevheight{}, this implies that there was another node \node{u'} with height \prevheight{}, consecutive to \node u, before the \Bslfspl{} operation. 
The only way we change the height of any node between node \node u and node \node{u'} during \Bslfsplx{f} is an \invone{} promotion, which, cannot cause an \invtwo{} violation. Note that there cannot be an \invtwo{} violation at these nodes since they are not in the right cover of \bsl A after Step~1. We assume there were no \invtwo{} violations before the \Bslfspl{} operation. Then it holds that there can be no \invtwo{} violations at node \node u at any level less than or equal to \prevheight{}. Therefore, demoting \node u to level \prevheight{} fixes any \invtwo{} violations at this node. 

Any of the demotions may cause an \invone{} violation which are also fixed by Step~2. Because we promote the median, this cannot cause an \invtwo{} violation since $\lfloor b'/2\rfloor \geq \lfloor b'/3\rfloor \geq a$. Also note that the \invone{} promotions caused by the demotion of a node never percolate higher than the height of the node before the demotion. Thus, nodes on the right cover of \bsl A after Step~1 that have not yet been processed during Step~2 are not removed from the right cover due to an \invone{} promotion. 

Observe that if a node \node u in the right cover of \bsl A after Step~1 has height greater than $\min(\intheight{\prebsl{S}{f}},$ $\intheight{\postbsl{S}{f}})$ and there is no \invtwo{} violation at this node, then no other node of greater height in \bsl A can have an \invtwo{} violation. Symmetric arguments again apply for \bsl B. Thus, by induction, iterating Step 2 fixes all \invtwo{} violations.

\paragraph{Finger Join} 
Given pointers to $\node \ell$ and $\node r$, the maximum and minimum nodes of two distinct biased skip lists $\bsl A$ and $\bsl B$ respectively, \Bslfjoinx{\ell}{r} returns a new biased skip list $\bsl S$ containing all elements of $\bsl A$ and $\bsl B$ assuming $\node \ell < \node r$. $\bsl A$ and $\bsl B$ are destroyed in the process.

\vspace{5mm} 
\noindent $\underline{\bsl S \leftarrow \Bslfjoinx{\ell}{r}}$:
\begin{enumerate} 
\item Connect pointers between each node in the right cover of \bsl A and the left cover of \bsl B, effectively joining \bsl A and \bsl B, and forming \bsl S. 
More precisely, create pointers between the \jth\ level of node \lvlsuc{j}{r'} and \lvlpre{j}{\ell'} for all $j\geq\min(\nodeheight{\ell},\nodeheight{r})$, where $\ell'$ is the $+\infty$ sentinel node of \bsl A, and $r'$ is the $-\infty$ sentinel node of \bsl B. Pointers below this level need to be null. 

\item Restore \invone{} in \bsl S. For $j = \max(\nodeheight{\ell},\nodeheight{r})$ up through $\min(\intheight A,\intheight B)$, if there is an \invone{} violation at level $j$ caused by a plateau of $b'>b$ nodes, then promote the height of the median of these nodes by 1, and iterate at the next level. For $j = \min(\intheight{A},\intheight{B})+1$ up through $\max(\intheight A,\intheight B)$, if there is no \invone{} violation at level $j$, then stop\footnote{$\min(\intheight{A},\intheight{B})$ can be computed during Step~1.}. Otherwise promote the median node of the plateau of $b'>b$ nodes causing the violation to restore \invone{} and iterate at the next level. 
\end{enumerate} 

\paragraph{Correctness} 
Joining \bsl A and \bsl B only affects the right cover of \bsl A and the left cover of \bsl B. Therefore, Step 1 connects the two biased skip lists \bsl A and \bsl B and forms \bsl S correctly. Joining two biased skip lists cannot create any \invtwo{} violations assuming there were no such violations before the operation; and fixing \invone{} violations cannot create \invtwo{} violations since $\lfloor b'/2\rfloor \geq\lfloor b'/3\rfloor \geq a$. Therefore, \invtwo{} is preserved. Note that after Step 1, any level in the range $[\max(\nodeheight{\ell}, \nodeheight{r}),\max(\intheight{A},\intheight{B})]$ could have at most $2b+1$ ($b$ from \bsl A, $b$ from \bsl B, and $1$ due to a promotion from the level below) consecutive nodes of the same height; which is fixed in Step 2. By induction, iterating Step 2 fixes all \invone{} violations.

\paragraph{Finger Reweight} 
Given a pointer to a node $\node f \in \bsl S$, changes its weight to $w$ while preserving invariants \invzero{}, \invone{}, and \invtwo{} of the biased skip list containing $\node f$.

\vspace{5mm} 
\noindent$\underline{\Bslfrewx{f}{w}}$: 
\begin{enumerate} 
\item Let \nodenewrank f be the new rank of \node f. If $\nodenewrank f = \noderank f$, then stop. 
\item If $\nodenewrank f > \noderank f$ and $\nodeheight f \geq \nodenewrank f$, then stop. 
\item If $\nodenewrank f > \noderank f$ and $\nodeheight f < \nodenewrank f$, then promote the height of $f$ to \nodenewrank f. Restore \invtwo{} as in Step 2 of \Bslfspl{} but start from the first node in the left profile of \node f that has height greater than \nodeheight f; and symmetrically from the first node in the right profile of \node f that has height greater than \nodeheight f. Then, restore \invone{} as in Step 2 of \Bslfjoin{} but start from level \nodenewrank f. 
\item If $\nodenewrank f < \noderank f$, then demote the height of \node f to \nodenewrank f. Restore \invone{} as in Step 2 of \Bslfjoin{} but starting from \nodenewrank f. Then, restore \invtwo{} as in Step 2 of \Bslfspl{}, but starting from the first node in the left profile of \node f that has height greater than \nodeheight f; and symmetrically from the first node in the right profile of \node f that has height greater than \nodeheight f. 
\end{enumerate} 

\paragraph{Correctness} If the rank of node \node f does not change, there are no structural changes to the biased skip list and therefore Step 1 is correct. 

If the rank of \node f increases but it is still less than its height, then \invzero{} is preserved. By not changing the height of the node we ensure that \invone{} and \invtwo{} are preserved as well. Therefore, Step 2 is also correct.

If the rank of a node \node f becomes greater than its height, then \invzero{} is violated and we promote \node f to its new rank to fix the violation. Observe that this promotion can cause \invtwo{} to be violated at the nodes in the left profile of \node f that have height greater than the old height of \node f; and symmetrically at the nodes in the right profile of \node f that have height greater than the old height of \node f, once per level. Step 3, by the correctness of Step 2 of \Bslfspl{}, fixes all \invtwo{} violations. The promotion can also cause \invone{} to be violated at level \nodenewrank f. Step 3, by the correctness of Step 2 of \Bslfjoin{}, fixes all \invone{} violations. Therefore, Step 3 is correct. 

If the rank of a node \node f decreases, we demote \node f to its new rank so \invtwo{} cannot be violated at this node. Observe that this demotion can cause \invone{} to be violated at level \nodenewrank f. Step 4, by the correctness of Step 2 of \Bslfjoin{}, fixes all \invone{} violations. The demotion can also cause \invtwo{} to be violated at the nodes in the left profile of \node f that have height greater than the old height of \node f; and symmetrically at the nodes in the right profile of \node f that have height greater than the old height of \node f, once per level. Step 4, by the correctness of Step 2 of \Bslfspl{}, fixes all \invtwo{} violations. Therefore, Step 4 is correct. 

Since steps 1-4 exhaust all possible scenarios, all the invariants are preserved and the \Bslfrew{} operation is correct.

Before moving on, we need to analyze the time complexity of these new operations.

\subsection{The Analysis of Extended Operations} 
\label{subsec:BSLAnalysisExtOps}

We now analyze the time complexity of the extended operations described above using the potential method.

The extended operations, \Bslfspl{}, \Bslfjoin{}, and \Bslfrew{}, are executed on the elements of a set of biased skip lists. Let \indatast k be the set of biased skip lists after the \kth\ operation, where \indatast 0 is the initial set of biased skip lists we are given. 

Let \plats k be the set of plateaus which contain an element in the cover of some biased skip list in \indatast{k}. For a plateau $p$, let $|p|$ be the number of nodes contained in plateau $p$. We define the following sets of plateaus. Let $\plasetclosed{x}{y}{k} = \{p\in \plats k\,|\, x\leq |p|\leq y \}$ and $\plasetopen{x}{y}{k} = \{p\in \plats k\,|\, x< |p|< y \}$. 
We can now define a potential function as follows. 
\[ 
\locpotfun{\indatast k} = \cnsg\cdot(8|\plasetsmall k| + 4|\plaseta k| + |\plasetb k| + 3|\plasetfat k| + 5|\plasetobese k|)
\] 
where \cnsg\ is some constant. Observe that $\locpotfun{\indatast k}\geq 0$ for any $k$. 
For each extended operation, we will use this potential function to prove upper bounds on the amortized time complexity of the operation as well as worst-case time complexity of a sequence of operations.

\begin{lemma} 
\label{lem:costofbslfsplit} 
The $(\bsl A, \bsl B) \leftarrow \Bslfsplx{f}$ operation, where $\node f\in \bsl S$, has an amortized time complexity of 
\[ {\cal O}(\min(\intheight{\beforeop{A}},\intheight{\beforeop{B}}) - \min(\noderank{\intmax{\beforeop{A}}}, \noderank{\intmin{\beforeop{B}}})+1)\] 
where $\beforeop{A} = \prebsl{S}{f}$ and $\beforeop{B}=\postbsl{S}{f}$. 
\end{lemma}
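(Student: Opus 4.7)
The plan is to apply the potential method with $\locpotfun{\cdot}$, bounding the amortized cost of \Bslfsplx{f} by separately accounting for (i) Step~1 (pointer severing), (ii) the cover walks in Step~2, (iii) the demotion work in Step~2(c), and (iv) the cascading \invone{} promotions triggered in Step~2(c). I would argue that parts (i)--(iii) together have $O(L)$ worst-case cost, where $L = \min(\intheight{\beforeop A},\intheight{\beforeop B}) - \min(\noderank{\intmax{\beforeop A}},\noderank{\intmin{\beforeop B}}) + 1$, while part (iv) has $O(1)$ amortized cost once $\cnsg$ is chosen large enough.

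For (i), Step~1 only severs pointers at levels $j$ satisfying $\min(\nodeheight{f},\nodeheight{f'}) \leq j \leq \min(\intheight{\beforeop A},\intheight{\beforeop B})$, because outside this range at least one side already lacks a node of height $j$ and hence no cross-pointers exist. Invariant \invzero{} combined with $\intmax{\beforeop A}=\node{f}$ and $\intmin{\beforeop B}=\node{f'}$ shows that this range has size $O(L)$. For (ii), the definition of a cover yields at most one cover node per level, and the stopping condition halts the walk once its current node exceeds height $\min(\intheight{\beforeop A},\intheight{\beforeop B})$, so each cover is traversed in $O(L)$ time with $O(1)$ local work per cover node. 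For (iii), I would use the telescoping observation that if $u_1,u_2,\ldots,u_k$ are the successive cover nodes with pre-operation heights $h_1<h_2<\cdots<h_k$, then by the algorithm the $i$-th node is demoted to $\max(\noderank{u_i},h_{i-1})$; the total demotion work therefore telescopes to at most $h_k - \noderank{u_1} = O(L)$. Summing (i)--(iii) gives the $O(L)$ worst-case term of the bound, exclusive of \invone{} repairs.

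Part (iv) is the technical heart. Each demotion inserts one node into a plateau at its destination level, possibly raising it to size $b+1$ and kicking off a median-promotion cascade upward. The weights $8,4,1,3,5$ in $\locpotfun{\cdot}$ are chosen so that converting a size $b+1$ plateau (weight $3$) back to size $b$ (weight $1$) by a median-promotion releases $2\cnsg$ units of potential, which pays for the $O(1)$ cost of the promotion and absorbs any gain incurred when the promoted node joins the plateau at the level above. The main obstacle will be the sub-case in which the cascade passes through several consecutive levels whose plateaus are already exactly of size $b$: for each intermediate promotion the per-step potential change is zero, so instead one must charge the entire cascade against the single global credit generated when the triggering demotion initially inflated its destination plateau from size $b$ to $b+1$, and verify by a careful case analysis on source and destination plateau sizes that the book balances. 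Once this accounting is completed, replicating the argument symmetrically for the left cover of $\bsl B$ yields the claimed amortized bound on \Bslfsplx{f}.
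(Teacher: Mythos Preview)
Your decomposition into (i)--(iv) is reasonable, but the argument has a genuine gap at step (ii), and this gap propagates into (iii).

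You claim the cover walk halts once the current node's height exceeds $\min(\intheight{\beforeop A},\intheight{\beforeop B})$. That is not what Step~2(b) says: the walk stops only if \emph{both} $\nodeheight u>\min(\intheight{\beforeop A},\intheight{\beforeop B})$ \emph{and} there is no \invtwo{} violation at $u$. Violations above this threshold do occur. Concretely, suppose $A$ is the taller side and let $u_j$ be the first cover node with $h_j>\intheight{\beforeop B}$. If $h_{j-1}<\intheight{\beforeop B}<h_j$, then at any level $i\in(h_{j-1},\intheight{\beforeop B}]$ the right consecutive of $u_j$ in $S$ may well have been a node of $B$; once $B$ is severed, \invtwo{} at $u_j$ can fail at such a level, and Step~2(c) demotes $u_j$. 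The walk therefore proceeds above the threshold, and your telescoping bound in (iii), which yields $h_k-\noderank{u_1}$, is no longer $O(L)$ since $h_k$ can be as large as $\intheight{\beforeop A}$. The paper handles exactly this: it does not try to cap the walk at $\min(\intheight{\beforeop A},\intheight{\beforeop B})$, but instead shows that \emph{every} demotion and every promotion occurring above that level contributes at most $-\cnsg$ to $\locpotfun{\cdot}$, so the work there is paid by potential drop. Your plan uses the potential only for promotion cascades; you also need it for the high demotions.

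Two further inaccuracies in (iv). First, a median promotion of a size-$(b{+}1)$ plateau does not produce a size-$b$ plateau; it splits into two plateaus of size roughly $(b{+}1)/2\in(a,b)$, and the crucial point (which drives the paper's bookkeeping) is that only one of the two resulting plateaus remains on the cover. That is what makes the first-part contribution $-3\cnsg$ rather than the $-2\cnsg$ you compute. Second, your worry about cascades through levels whose plateaus sit at exactly size $b$ is unfounded: combining the split of the overfull plateau below (worth $-3\cnsg$) with the growth of the size-$b$ plateau above to size $b{+}1$ (worth $+2\cnsg$) still nets $-\cnsg$ per promotion, so no separate charging scheme is needed there. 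Finally, you do not account for the potential increase caused by Step~1 itself (new cover plateaus appear on the right cover of $A$ and the left cover of $B$), nor the potential change from the demotions; both are $O(L)$ and must be folded into the amortized bound.
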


\begin{proof} 
Let \Bslfsplx{f} be the \kth\ operation. Observe that Step 1 takes $\mathcal O(\min(\intheight{\beforeop{A}},\intheight{\beforeop{B}}) - \min(\nodeheight{\intmax{\beforeop{A}}}, \nodeheight{\intmin{\beforeop{B}}})+1)$ time. 
Step 2 takes constant time at each level from level $\min(\noderank{\intmax{\beforeop{A}}}, \noderank{\intmin{\beforeop{B}}})$ to level $\min(\intheight{\beforeop{A}},\intheight{\beforeop{B}})$. We will show that the time spent by Step~2 on levels greater than $\min(\intheight{\beforeop{A}},\intheight{\beforeop{B}})$ is essentially negligible by showing that it equals the decrease in potential at these levels. 

We now analyze the contribution of Step 1 and Step 2 to the potential change, $\locpotfun{\indatast k} - \locpotfun{\indatast{k-1}}$. 

\paragraph{Step 1}

Due to the split, the plateaus on the cover of \bsl S become plateaus on the covers of \bsl A and \bsl B. Additionally, the plateaus on levels less than or equal to $\min(\intheight{\beforeop{A}}, \intheight{\beforeop{B}})$ on the right cover of \bsl A and left cover of \bsl B are added to \plats k. 
Therefore, the contribution of Step 1 to the potential change is at most $\mathcal O(\min(\intheight{\beforeop{A}},\intheight{\beforeop{B}}) - \min(\nodeheight{\intmax{\beforeop{A}}},\nodeheight{\intmin{\beforeop{B}}})+1)$. 

\paragraph{Step 2} 
We now look at the contribution of Step 2 to the potential change, $\locpotfun{\indatast k} - \locpotfun{\indatast{k-1}}$. Note that at any level less than or equal to $\min(\intheight{\beforeop{A}}, \intheight{\beforeop{B}})$, the potential increase can be at most a constant. Therefore, the maximum contribution of Step~2 to the potential change in these levels is $\mathcal O(\min(\intheight{\beforeop{A}}, \intheight{\beforeop{B}}) - \min(\noderank{\intmax{\beforeop{A}}},\noderank{\intmin{\beforeop{B}}}))$. Next, we bound the contribution of Step~2 to the potential change in levels greater than $\min(\intheight{\beforeop{A}}, \intheight{\beforeop{B}})$. 

\paragraph{Demotions} 
Consider a demotion operation to restore \invtwo{} at some node \node x. This demotion could cause a change in potential in two ways. 

First, the plateau $p'$ which was causing the \invtwo{} violation could have had less than $a$ nodes, and now has more. 

Note that since we demote node \node x to the level of $p'$, there must be a plateau of at least $a$ nodes (due to \invtwo{}) on the other side of node \node x. Therefore, $p'$ will have at least $a+1$ nodes. Also, the plateau on the other side of node \node x cannot have more than $b$ nodes. Therefore, $p'$ will have at most $b+a$ nodes. This implies that $p'$ can only  have between $a+1$ and $b+a$ nodes. 
If $p'$ has between $a+1$ and $b-1$ nodes, the contribution of this part to the potential change is $-8\cnsg$. 
If $p'$ has $b$ nodes, the contribution of this part to the potential change is $-8\cnsg + \cnsg = -7\cnsg$. 
If $p'$ has between $b+1$ and $b+a$ nodes, the contribution of this part to the potential change is $-8\cnsg + 3\cnsg = -5\cnsg$. 
Therefore, the maximum contribution of this part to the potential change is $-5\cnsg$.

Note that $p'$ might not exist (have zero nodes) prior to the demotion of $x$. In this case, the maximum contribution of this part to the potential change is $3\cnsg$.  However, this case is only possible if $p'$ has height less than or equal to $\min(\intheight{\beforeop{A}},\intheight{\beforeop{B}})$.

Second, the demotion of $x$ could cause the plateau $p''$, which $x$ was a part of before the demotion, to have less nodes. 
If $p''$ had $2b$ nodes or more and now has between $b+1$ and $2b-1$ nodes, the contribution of this part to the potential change is $-5\cnsg + 3\cnsg = -2\cnsg$. 
If $p''$ had between $b+1$ and $2b-1$ nodes and now has $b$ nodes, the contribution of this part to the potential change is $-3\cnsg + \cnsg = -2\cnsg$. 
If $p''$ had $b$ nodes and now has between $a+1$ and $b-1$ nodes, the contribution of this part to the potential change is $-\cnsg$. 
If $p''$ had between $a+1$ and $b-1$ nodes and now has $a$ nodes, the contribution of this part to the potential change is $4\cnsg$. 
If $p''$ had $a$ nodes and now has between $1$ and $a-1$ nodes, the contribution of this part to the potential change is $-4\cnsg + 8\cnsg = 4\cnsg$. 
Additionally, if $p''$ had between $1$ and $a-1$ nodes and now has zero nodes, the contribution of this part to the potential change is $-8\cnsg$. 

Therefore, combining the first and second part, the maximum contribution of a demotion to the potential change is $7\cnsg$ on levels less than or equal to $\min(\intheight{\beforeop{A}},\intheight{\beforeop{B}})$, and $-\cnsg$ on levels greater. 

Note that the demotion of \node x could cause a high number of plateaus which did not have any elements in the cover of their biased skip list to now have an element in the cover and thus enter \plats k. In order for this case to occur, the height of \node x must be demoted at least two levels. By the description of Step~2, this implies there were no nodes of height $\nodeheight x - 1$ in the biased skip list containing \node x after Step~1. Since \bsl S has no \invtwo{} violations before Step~1, this implies there must be nodes of height $\nodeheight x - 1$ in the other biased skip list. Therefore, this case is only possible in levels less than or equal to $\min(\intheight{\beforeop{A}},\intheight{\beforeop{B}})$. 

\paragraph{Promotions} 
Consider a promotion operation at a node \node x to restore an \invone{} violation on the plateau $p'$ node \node x is on. This promotion could cause a change in potential in two ways. 

First, the plateau $p'$ could have had between $b+1$ and $2b+1$ nodes, and now has less. 
If $p'$ had between $b+1$ and $2b-1$ nodes, then the promotion of node \node x splits $p'$ into two plateaus. Only one of these plateaus remain in the cover of the biased skip list unless nodes of $p'$ were at the highest level of the biased skip list. 
The plateau remaining in the cover must have size greater than $a$ and less than $b$. Therefore, $p'$ will now have between $a+1$ and $b-1$ nodes, and the contribution of this part to the potential change is $-3\cnsg$. 
In the special case of both plateaus remaining in the cover, then both of them will have between $a+1$ and $b-1$ nodes, and the contribution of this part to the potential change is $-3\cnsg$. 
If $p'$ has $2b$ or more nodes, then the promotion of \node x splits $p'$ into two plateaus. Only one of these plateaus remain in the cover of the biased skip list unless nodes of $p'$ were at the highest level of the biased skip list. 
The plateau remaining in the cover must have size either $b-1$ or $b$. 
If it has $b-1$ nodes, the contribution of this part to the potential change is $-5\cnsg$. 
If it has $b$ nodes, the contribution of this part to the potential change is $-5\cnsg + \cnsg = -4\cnsg$. 
In the special case of both plateaus remaining in the cover, then either one of them has $b-1$ nodes, and the other one  has $b$ nodes, and the contribution of this part to the potential change is $-5\cnsg + \cnsg = -4\cnsg$; or both of them have $b$ nodes, and the contribution of this part to the potential change is $-5\cnsg + 2\cnsg = -3\cnsg$.

Second, the promotion of $x$ could cause the plateau $p''$, which $x$ becomes a part of after the promotion, to have more nodes. 
If $p''$ had more than $2b-1$ nodes, note that it must have had at most $2b$ nodes. Promotion of $x$ increases the size of $p''$ to $2b+1$. Thus, it does not change its set and the contribution of this part to the potential change is $0$. 
Note that, in general, if the promotion does not change the set of $p''$, then the contribution of this part to the potential change is $0$. 
If $p''$ had\hide{between $b+1$ and} $2b-1$ nodes and now has $2b$  nodes, the contribution of this part to the potential change is $-3\cnsg + 5\cnsg = 2\cnsg$. 
If $p''$ had $b$ nodes and now has $b+1$ nodes, the contribution of this part to the potential change is $-\cnsg + 3\cnsg = 2\cnsg$. 
If $p''$ had\hide{ between $a+1$ and} $b-1$ nodes and now has $b$ nodes, the contribution of this part to the potential change is $\cnsg$. 
If $p''$ had $a$ nodes and now has $a+1$ nodes, the contribution of this part to the potential change is $-4\cnsg$. 
If $p''$ had\hide{ between $1$ and} $a-1$ nodes and now has $a$ nodes, the contribution of this part to the potential change is $-8\cnsg + 4\cnsg = -4\cnsg$. 
Additionally, if $p''$ had zero nodes and now has $1$ node, the contribution of this part to the potential change is $8\cnsg$. However, this can only happen if an earlier demotion caused the only node of $p''$ to be demoted. Therefore, $p''$ first has $1$ node; then a demotion causes $p''$ to have $0$ nodes; and then a promotion associated with that demotion causes $p''$ to have $1$ node again. The effect on the potential change is zero. 

Therefore, combining the first and second part, the maximum contribution of a promotion to the potential change is $-\cnsg$. 

Let constant \cnsh\ be an upper bound on the worst-case running time of any single demotion or promotion operation. Let \amcost k and \actcost k respectively be the amortized and worst-case running time of \Bslfsplx{f} and let \vionum k be the number of violations that are restored during Step~2 above level $\min(\intheight{\beforeop{A}},\intheight{\beforeop{B}})$. 
Then we have $\actcost k \leq {\cal O}(\min(\intheight{\beforeop{A}},\intheight{\beforeop{B}}) - \min(\noderank{\intmax{\beforeop{A}}}, \noderank{\intmin{\beforeop{B}}})+1) + \cnsh\cdot \vionum k$. 
Combining the bounds on the maximum contribution of Step~1 and Step~2 to the potential change and setting $\cnsg = \cnsh$, we have $\locpotfun{\indatast k} - \locpotfun{\indatast{k-1}} \leq -\vionum k \cnsh + {\cal O}(\min(\intheight{\beforeop{A}},\intheight{\beforeop{B}}) - \min(\noderank{\intmax{\beforeop{A}}}, \noderank{\intmin{\beforeop{B}}}))$. 
This yields 
\begin{align*} 
\amcost{k} &= \actcost k + \locpotfun{\indatast k} - \locpotfun{\indatast{k-1}}\\ 
&\leq \cnsh\vionum k + \locpotfun{\indatast k} - \locpotfun{\indatast{k-1}}+ {\cal O}(\min(\intheight{\beforeop{A}},\intheight{\beforeop{B}}) - \min(\noderank{\intmax{\beforeop{A}}}, \noderank{\intmin{\beforeop{B}}})+1)\\ 
&\leq \cnsh\vionum k - \cnsh\vionum k +  {\cal O}(\min(\intheight{\beforeop{A}},\intheight{\beforeop{B}}) - \min(\noderank{\intmax{\beforeop{A}}}, \noderank{\intmin{\beforeop{B}}})+1) \\ 
&= {\cal O}(\min(\intheight{\beforeop{A}},\intheight{\beforeop{B}}) - \min(\noderank{\intmax{\beforeop{A}}}, \noderank{\intmin{\beforeop{B}}})+1). 
\end{align*} 
This completes the proof. 
\end{proof}

\begin{lemma} 
\label{lem:wccostofbslfsplit} 
Given a set $\indatast 0=\{\initbsl 1, \initbsl 2,\ldots, \initbsl m\}$ of biased skip lists, a sequence of $(S_k, T_k) \leftarrow \Bslfsplx{f_k}$ operations, for $1\leq k \leq t$ where $f_k\in \set U_k$  and $\set U_k \in \indatast{k-1}$, can be executed in worst-case time 
\[ 
\mathcal O\left(\sum_{k=1}^t (\min(\intheight{\beforeop{S_k}},\intheight{\beforeop{T_k}}) - \min(\noderank{\intmax{\beforeop{S_k}}},\noderank{\intmin{\beforeop{T_k}}})+1)\right)+
\] 
\[\mathcal O\left(\sum_{i=1}^{m}(\intheight{\initbsl i} - \min(\nodeheight{\intmin{\initbsl i}},\nodeheight{\intmax{\initbsl i}})+1)\right) 
\] 
where $\beforeop{S_k} = \prebsl{{U_k}}{f_k}$ and $\beforeop{T_k}=\postbsl{{U_k}}{f_k}$. 
\end{lemma}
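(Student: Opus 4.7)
The plan is to derive the worst-case bound for a sequence of finger splits by summing the amortized bounds from Lemma~\ref{lem:costofbslfsplit} and then accounting for the initial potential of the given family of biased skip lists. By the definition of amortized cost with respect to $\locpotfun{\cdot}$, we have
\[
\sum_{k=1}^t \actcost k \;=\; \sum_{k=1}^t \amcost k \;+\; \locpotfun{\indatast 0} \;-\; \locpotfun{\indatast t},
\]
and since $\locpotfun{\indatast t} \geq 0$, the worst-case total cost is at most $\sum_k \amcost k + \locpotfun{\indatast 0}$. Lemma~\ref{lem:costofbslfsplit} bounds each $\amcost k$ by $\mathcal O(\min(\intheight{\beforeop{S_k}},\intheight{\beforeop{T_k}}) - \min(\noderank{\intmax{\beforeop{S_k}}},\noderank{\intmin{\beforeop{T_k}}})+1)$, which directly yields the first summation in the lemma statement.

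What remains is to bound $\locpotfun{\indatast 0}$ by the second summation. The potential $\locpotfun{\indatast 0}$ is, up to the constant $\cnsg$, a weighted count of plateaus that intersect the cover of some initial biased skip list $\initbsl i$. The key observation is that the cover of $\initbsl i$ is the union of the left cover and the right cover, and each of these contains at most one node per level from $\min(\nodeheight{\intmin{\initbsl i}},\nodeheight{\intmax{\initbsl i}})$ up to $\intheight{\initbsl i}$. Since each such node belongs to exactly one plateau, the total number of plateaus intersecting the cover of $\initbsl i$ is $\mathcal O(\intheight{\initbsl i} - \min(\nodeheight{\intmin{\initbsl i}},\nodeheight{\intmax{\initbsl i}})+1)$. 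Each plateau contributes only a constant to $\locpotfun{\indatast 0}$ (at most $8\cnsg$), so summing over all $i$ from $1$ to $m$ gives exactly the second expression in the lemma.

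Combining the two bounds establishes the claimed worst-case running time. The step requiring the most care is the cover counting: one must verify that the potential function indeed only charges plateaus that appear on a cover (not all plateaus of the structure), and that the profiles have a single plateau representative per level, so the $\mathcal O(\intheight{\initbsl i} - \min(\nodeheight{\intmin{\initbsl i}},\nodeheight{\intmax{\initbsl i}})+1)$ estimate is valid. Once this observation is in hand the rest is just telescoping and a direct appeal to Lemma~\ref{lem:costofbslfsplit}.
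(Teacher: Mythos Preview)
Your proposal is correct and follows essentially the same approach as the paper: telescope the amortized bounds from Lemma~\ref{lem:costofbslfsplit}, drop the nonnegative final potential, and bound $\locpotfun{\indatast 0}$ by $\mathcal O(|\plats 0|)$ via the cover-counting argument. If anything, you give slightly more justification for the initial-potential bound than the paper does.
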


\begin{proof} 
Let \amcost k and \actcost k respectively be the amortized and worst-case running time of \Bslfsplx{f_k}. Note that $\locpotfun{\indatast 0} = \mathcal O(|\plats 0|) = \mathcal O\left(\sum_{i=1}^{m}(\intheight{\initbsl i} - \min(\nodeheight{\intmin{\initbsl i}},\nodeheight{\intmax{\initbsl i}})+1)\right)$. Then we have 
{\allowdisplaybreaks 
\begin{align*} 
\sum_{k=1}^{t} \amcost{k} &= \sum_{k=1}^{t} (\actcost k + \locpotfun{\indatast k} - \locpotfun{\indatast{k-1}})\\ 
\sum_{k=1}^{t} \amcost{k} &=\sum_{k=1}^{t} \actcost k + \locpotfun{\indatast{t}} - \locpotfun{\indatast{0}}\\ 
\sum_{k=1}^{t} \actcost k &=\sum_{k=1}^{t} \amcost{k} - \locpotfun{\indatast{t}}+ \locpotfun{\indatast{0}}\\ 
\sum_{k=1}^{t} \actcost k &\leq\sum_{k=1}^{t} \amcost{k} + \locpotfun{\indatast{0}}\\ 
&= \mathcal O\left(\sum_{k=1}^t (\min(\intheight{\beforeop{S_k}},\intheight{\beforeop{T_k}}) - \min(\noderank{\intmax{\beforeop{S_k}}},\noderank{\intmin{\beforeop{T_k}}})+1)\right)&& \mbox{By Lemma}~\ref{lem:costofbslfsplit}\\ 
& \hspace{4mm} + \locpotfun{\indatast{0}}\\ 
&= \mathcal O\left(\sum_{k=1}^t (\min(\intheight{\beforeop{S_k}},\intheight{\beforeop{T_k}}) - \min(\noderank{\intmax{\beforeop{S_k}}},\noderank{\intmin{\beforeop{T_k}}})+1)\right)\\ 
& \hspace{4mm}  +\mathcal O\left(\sum_{i=1}^{m}(\intheight{\initbsl i} - \min(\nodeheight{\intmin{\initbsl i}},\nodeheight{\intmax{\initbsl i}})+1)\right). 
\end{align*} 
} 
\end{proof}

\begin{lemma} 
\label{lem:costofbslfjoin} 
The $\bsl S \leftarrow \Bslfjoinx{\ell}{r}$ operation, where $\ell\in A$, $r\in B$, has an amortized time complexity of 
\[{\cal O}\left(\min(\intheight{A},\intheight{B}) - \min(\nodeheight{\intmax{A}}, \nodeheight{\intmin{B}})+1\right).\]  
\end{lemma}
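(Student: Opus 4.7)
The plan is to mirror the proof of Lemma~\ref{lem:costofbslfsplit}. Let $\Bslfjoinx{\ell}{r}$ be the $k$-th operation, set $h^* = \min(\intheight{A}, \intheight{B})$ for brevity, and note that $\nodeheight{\intmax{A}} = \nodeheight{\ell}$ and $\nodeheight{\intmin{B}} = \nodeheight{r}$. For the actual cost, Step~1 installs pointers only at levels $j \in [\min(\nodeheight{\ell}, \nodeheight{r}), h^*]$, since above $h^*$ at most one of $A, B$ contributes nodes; this takes $\mathcal O(h^* - \min(\nodeheight{\ell},\nodeheight{r}) + 1)$, which is within the claimed bound. Step~2's unconditional sweep over $[\max(\nodeheight{\ell},\nodeheight{r}), h^*]$ does $O(1)$ work per level, also within the bound. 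The remaining work is the upper-range cascade in $(h^*, \max(\intheight{A},\intheight{B})]$, consisting of $\vionum{k}$ promotions of cost at most $\cnsh$ each, which I will pay for via a drop in $\Phi_e$.

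I then account for the potential change $\locpotfun{\indatast k} - \locpotfun{\indatast{k-1}}$ step by step. Step~1 causes the plateaus sitting on the right cover of $A$ and the left cover of $B$ at levels strictly below $h^*$ to become internal to $S$ (possibly after being fused with their counterparts on the other side), so they leave $\plats{k}$; meanwhile the plateaus of the taller list at levels above $h^*$ retain their sizes and their membership in $\plats{k}$. Each of the $O(h^* - \min(\nodeheight{\ell},\nodeheight{r})+1)$ affected levels shifts the potential by at most an additive constant, so Step~1's contribution fits in the claimed bound. In Step~2, a promotion at some level $j > h^*$ fires only because the cascade from level $j-1$ has turned a pre-existing size-$b$ plateau of the taller list (counted in $\plasetb{k-1}$) into a transient size-$(b+1)$ plateau; promoting the median replaces it with two plateaus of sizes inside the unpenalized interval $(a,b)$, and these halves either remain attached to the appropriate cover of $S$ or drop out of $\plats{k}$ entirely. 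Replaying the per-promotion case analysis of Lemma~\ref{lem:costofbslfsplit} in this setting yields a net drop of at least $\cnsg$ per upper-range promotion, with the lone exception of possibly creating a new singleton plateau when a cascade extends $S$'s height beyond $\max(\intheight{A},\intheight{B})$, which happens at most once per $\Bslfjoin$ and costs an additive $O(\cnsg)$.

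Setting $\cnsg = \cnsh$ as in Lemma~\ref{lem:costofbslfsplit} absorbs the $O(\cnsh)$ cost of each upper-range promotion into its $\Omega(\cnsg)$ potential drop, so
\[
\amcost{k} = \actcost{k} + \locpotfun{\indatast k} - \locpotfun{\indatast{k-1}} \leq \mathcal O(h^* - \min(\nodeheight{\ell},\nodeheight{r}) + 1),
\]
which is the claimed amortized bound. The hard part is the per-promotion case analysis in the upper range: I must confirm that each of the two half-plateaus created by a median promotion either remains in $\plats{k}$ via the correct cover (left half via the left cover of $S$, right half via the right cover) or cleanly exits, carefully bound the contribution of the transient size-$(b+1)$ plateau against both possibilities, and handle the boundary event in which the cascade creates a brand-new top level of $S$.
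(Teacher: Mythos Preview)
Your proposal is correct and follows essentially the same approach as the paper's proof: bound Step~1 and the lower-range part of Step~2 directly by the claimed quantity, then show each upper-range promotion releases at least $\cnsg$ of potential (with one exceptional last promotion), so setting $\cnsg=\cnsh$ absorbs the cascade. The only notable difference is that the paper's upper-range case analysis is simpler than replaying Lemma~\ref{lem:costofbslfsplit}, since in a join the overflowing plateau $p'$ always has exactly $b+1$ nodes (never up to $2b+1$ as in a split), and the paper phrases the exceptional $O(1)$ event simply as the final promotion where $p''$ fails to overflow, rather than as creating a new top level.
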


\begin{proof} 
Let \Bslfjoinx{\ell}{r} be the \kth\ operation. Observe that Step 1 takes $\mathcal O(\min(\intheight{A},\intheight{B}) - \min(\nodeheight{\intmax{A}},\nodeheight{\intmin{B}})+1)$ time. 
Then, Step 2 takes constant time at each level from level $\min(\nodeheight{\intmax{A}}, \nodeheight{\intmin{B}})$ to level $\min(\intheight{A},\intheight{B})$. We will show that the time spent by Step~2 on levels greater than $\min(\intheight{A},\intheight{B})$ is essentially negligible by showing that it equals the decrease in potential at these levels. 

We now analyze the contribution of Step 1 and Step 2 to the potential change, $\locpotfun{\indatast k} - \locpotfun{\indatast{k-1}}$. 

\paragraph{Step 1} 

Due to the join, the plateaus on levels greater than $\min(\intheight A, \intheight B)$ as well as the plateaus on left cover of \bsl A and right cover of \bsl B becomes plateaus on the cover of \bsl S. The plateaus on levels less than or equals to $\min(\intheight A, \intheight B)$ on the right cover of \bsl A and on the left cover of \bsl B are not on the cover of \bsl S and thus are not added to \plats k. 

The only way Step~1 can increase the potential is if all the plateaus on the covers of \bsl A and \bsl B becomes plateaus on the cover of \bsl S and the plateaus containing nodes \intmax A and \intmin B merge and form a larger plateau with more weight with respect to the potential function. 
Therefore, the contribution of Step 1 to the potential change is at most $3\cnsg$.

\paragraph{Step 2} 

Note that Step 2 only involves promotions. Any promotion on level less than or equal to $\min(\intheight A, \intheight B)$ will increase the potential by at most a constant and total contribution to the potential change is bounded by $\mathcal O(\min(\intheight{A},\intheight{B}) - \min(\nodeheight{\intmax{A}}, \nodeheight{\intmin{B}})+1)$. We will focus on promotions which occur on higher levels. 

Consider a promotion operation at a node $x$ to restore an \invone{} violation on the plateau $p'$ node \node x is on where $p'$ is on a level greater than $\min(\intheight A, \intheight B)$. The promotion could cause a change in potential in two ways.

First, the plateau $p'$ could have had $b+1$ nodes, then the promotion of node \node x splits $p'$ into two plateaus with between $a+1$ and $b-1$ nodes each. Then, the contribution of this part to the potential change is $-3\cnsg$.

Second, the promotion of $x$ could cause the plateau $p''$, which $x$ becomes a part of after the promotion, to have more nodes. Note that due to \invone{}, $p''$ could not have had more than $b$ nodes. If $p''$ had $b$ nodes and now has $b+1$ nodes, the contribution of this part to the potential change is $-\cnsg + 3\cnsg = 2\cnsg$. If $p''$ had less than $b$ nodes and now has at most $b$ nodes, the potential could increase by at most a constant, but the promotion will not cause an \invone{} violation at $p''$, and Step 2 will terminate.

Therefore, combining the first and second part, the maximum contribution of a promotion to the potential change is $-\cnsg$ except possibly for the last promotion.

Recall that the worst-case running time of any single demotion or promotion operation is bounded by constant \cnsh. Let \amcost k and \actcost k respectively be the amortized and worst-case running time of \Bslfjoinx{\ell}{r} and let \vionum k be the number of violations that are restored during Step~2 above level $\min(\intheight{A},\intheight{B})$. 
Then we have $\actcost k \leq {\cal O}(\min(\intheight{A},\intheight{B}) - \min(\nodeheight{\intmax{A}}, \nodeheight{\intmin{B}})+1) + \cnsj\cdot \vionum k$. 
Combining the bounds on the maximum contribution of Step~1 and Step~2 to the potential change and setting $\cnsg = \cnsj$, we have $\locpotfun{\indatast k} - \locpotfun{\indatast{k-1}} \leq -\vionum k \cnsj + {\cal O}(\min(\intheight{A},\intheight{B}) - \min(\nodeheight{\intmax{A}}, \nodeheight{\intmin{B}})+1)$. 
This yields 

\begin{align*} 
\amcost{k} &= \actcost k + \locpotfun{\indatast k} - \locpotfun{\indatast{k-1}}\\ 
&\leq \cnsj\vionum k + \locpotfun{\indatast k} - \locpotfun{\indatast{k-1}}+ {\cal O}(\min(\intheight{A},\intheight{B}) - \min(\nodeheight{\intmax{A}}, \nodeheight{\intmin{B}})+1)\\ 
&\leq \cnsj\vionum k - \cnsj\vionum k +  {\cal O}(\min(\intheight{A},\intheight{B}) - \min(\nodeheight{\intmax{A}}, \nodeheight{\intmin{B}})+1) \\ 
&= {\cal O}(\min(\intheight{A},\intheight{B}) - \min(\nodeheight{\intmax{A}}, \nodeheight{\intmin{B}})+1). 
\end{align*} 
This completes the proof. 
\end{proof}

\begin{lemma} 
\label{lem:wccostofbslfjoin} 

Given a set $\indatast 0=\{\initbsl 1, \initbsl 2,\ldots, \initbsl m\}$ of biased skip lists, and a sequence of $U_k \leftarrow \Bslfjoinx{\ell_k}{r_k}$ operations, for $1\leq k \leq t$ where $\ell_k\in \set S_k$, $r_k\in \set T_k$  and $\set S_k,\set T_k \in \indatast{k-1}$, can be executed in worst-case time 
\[ 
\mathcal O\left(\sum_{k=1}^t (\min(\intheight{S_k},\intheight{T_k}) - \min(\nodeheight{\intmax{S_k}},\nodeheight{\intmin{T_k}})+1)\right)+
\] 
\[\mathcal O\left(\sum_{i=1}^{m}(\intheight{\initbsl i} - \min(\nodeheight{\intmin{\initbsl i}},\nodeheight{\intmax{\initbsl i}})+1)\right). 
\] 
\end{lemma}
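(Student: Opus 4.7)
\bigskip

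\noindent\textbf{Proof proposal.} The plan is to mirror the structure of the proof of Lemma~\ref{lem:wccostofbslfsplit}, converting the per-operation amortized bound from Lemma~\ref{lem:costofbslfjoin} into a worst-case bound on the entire sequence via a telescoping argument on the potential function $\locpotfun{\cdot}$ defined in Section~\ref{subsec:BSLAnalysisExtOps}. Let $\amcost{k}$ and $\actcost{k}$ denote the amortized and worst-case running times of the $k$-th finger join, respectively. By the definition of amortized cost, we have $\sum_{k=1}^t \actcost{k} = \sum_{k=1}^t \amcost{k} + \locpotfun{\indatast{0}} - \locpotfun{\indatast{t}}$, and since $\locpotfun{\indatast{t}} \geq 0$, this yields $\sum_{k=1}^t \actcost{k} \leq \sum_{k=1}^t \amcost{k} + \locpotfun{\indatast{0}}$.

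Next, I would apply Lemma~\ref{lem:costofbslfjoin} termwise to bound the sum of amortized costs by the first summation in the lemma statement, namely $\mathcal O\left(\sum_{k=1}^t (\min(\intheight{S_k},\intheight{T_k}) - \min(\nodeheight{\intmax{S_k}},\nodeheight{\intmin{T_k}})+1)\right)$. This works uniformly across the sequence because the potential function is defined globally on the entire set of biased skip lists currently in existence, so the amortized analysis of Lemma~\ref{lem:costofbslfjoin} composes cleanly across consecutive operations, regardless of whether the lists $\set S_k$ and $\set T_k$ were produced by the input or by earlier finger joins in the sequence.

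The remaining ingredient is to bound the initial potential $\locpotfun{\indatast{0}}$. Since each plateau that contributes to the potential must contain at least one node in the cover of some $\initbsl i$, and the number of such plateaus per list is at most the height range from $\min(\nodeheight{\intmin{\initbsl i}},\nodeheight{\intmax{\initbsl i}})$ up to $\intheight{\initbsl i}$, I get $|\plats 0| = \mathcal O\left(\sum_{i=1}^{m}(\intheight{\initbsl i} - \min(\nodeheight{\intmin{\initbsl i}},\nodeheight{\intmax{\initbsl i}})+1)\right)$, and each plateau contributes at most a constant to the potential; this is exactly the second summation in the lemma statement. Combining the two bounds completes the proof.

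I do not expect a significant obstacle here, since the argument is a direct analogue of Lemma~\ref{lem:wccostofbslfsplit}; the only subtlety is verifying that the global potential function supports the telescoping across a sequence whose intermediate lists are dynamically created, which follows because Lemma~\ref{lem:costofbslfjoin} accounts for all newly covered plateaus via its potential-change analysis of Step~1 and Step~2 of \Bslfjoin{}.
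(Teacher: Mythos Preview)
Your proposal is correct and follows essentially the same approach as the paper: telescope the amortized bounds from Lemma~\ref{lem:costofbslfjoin} using $\locpotfun{\cdot}$, drop the nonnegative final potential, and bound $\locpotfun{\indatast 0}$ by $\mathcal O(|\plats 0|)$ via the cover heights of the initial lists. The paper's proof is in fact a near-verbatim copy of the proof of Lemma~\ref{lem:wccostofbslfsplit} with Lemma~\ref{lem:costofbslfjoin} substituted in, exactly as you anticipated.
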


\begin{proof} 
Let \amcost k and \actcost k respectively be the amortized and worst-case running time of \Bslfjoinx{\ell_k}{r_k}. Note that $\locpotfun{\indatast 0} = \mathcal O(|\plats 0|) = \mathcal O\left(\sum_{i=1}^{m}(\intheight{\initbsl i} - \min(\nodeheight{\intmin{\initbsl i}},\nodeheight{\intmax{\initbsl i}})+1)\right)$. Then we have 

{\allowdisplaybreaks 
\begin{align*} 
\sum_{k=1}^{t} \amcost{k} &= \sum_{k=1}^{t} (\actcost k + \locpotfun{\indatast k} - \locpotfun{\indatast{k-1}})\\ 
\sum_{k=1}^{t} \amcost{k} &=\sum_{k=1}^{t} \actcost k + \locpotfun{\indatast{t}} - \locpotfun{\indatast{0}}\\ 
\sum_{k=1}^{t} \actcost k &=\sum_{k=1}^{t} \amcost{k} - \locpotfun{\indatast{t}}+ \locpotfun{\indatast{0}}\\ 
\sum_{k=1}^{t} \actcost k &\leq\sum_{k=1}^{t} \amcost{k} + \locpotfun{\indatast{0}}\\ 
&= \mathcal O\left(\sum_{k=1}^t (\min(\intheight{S_k},\intheight{T_k}) - \min(\nodeheight{\intmax{S_k}},\nodeheight{\intmin{T_k}})+1)\right)&& \mbox{By Lemma}~\ref{lem:costofbslfjoin}\\ 
& \hspace{4mm} + \locpotfun{\indatast{0}}\\ 
&= \mathcal O\left(\sum_{k=1}^t (\min(\intheight{S_k},\intheight{T_k}) - \min(\nodeheight{\intmax{S_k}},\nodeheight{\intmin{T_k}})+1)\right)\\ 
& \hspace{4mm}  +\mathcal O\left(\sum_{i=1}^{m}(\intheight{\initbsl i} - \min(\nodeheight{\intmin{\initbsl i}},\nodeheight{\intmax{\initbsl i}})+1)\right).
\end{align*} 
} 
\end{proof}

\begin{lemma} 
\label{lem:costofbslfreweight} 
The $\Bslfrewx{f}{w}$ operation, where $f\in S$, has a worst-case and amortized time complexity of 
\[{\cal O}(\max(\intheight{S},\nodenewrank f) - \min(\nodeheight f, \nodenewrank f)+1).\] 
\end{lemma}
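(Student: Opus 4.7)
The plan is to proceed by case analysis matching the four cases in the pseudocode of \Bslfrew{}, and in each case reuse the potential function $\locpotfun{\cdot}$ and the per-violation accounting from the proofs of Lemmas~\ref{lem:costofbslfsplit} and~\ref{lem:costofbslfjoin}.

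First I would dispense with Cases~1 and~2: if $\nodenewrank f = \noderank f$, or the rank increases but $\nodeheight f \geq \nodenewrank f$, no structural change occurs and only the stored weight and rank at $\node f$ are updated in $\mathcal O(1)$ time. This trivially satisfies the claimed bound, which is always at least $1$.

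The interesting work happens in Cases~3 and~4, which are essentially symmetric, so I would write Case~3 in full and derive Case~4 by the same argument with promotion and demotion swapped. In Case~3, extending the tower of $\node f$ from $\nodeheight f$ up to $\nodenewrank f$ costs $\mathcal O(\nodenewrank f - \nodeheight f + 1)$ pointer operations. The subsequent \invtwo{} sweep reuses Step~2 of \Bslfspl{}, starting from the first nodes in the left and right profiles of $\node f$ above $\nodeheight f$ and terminating no later than $\intheight S$. The \invone{} sweep reuses Step~2 of \Bslfjoin{}, starting at level $\nodenewrank f$ and possibly propagating one level at a time upward, terminating no later than $\max(\intheight S, \nodenewrank f)$. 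Charging $\mathcal O(1)$ work to each level in the interval $[\nodeheight f, \max(\intheight S, \nodenewrank f)]$ gives a raw worst-case cost of $\mathcal O(\max(\intheight S, \nodenewrank f) - \nodeheight f + 1)$, which is exactly the claimed bound in this case.

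For the amortized bound under the same $\locpotfun{\cdot}$, I would invoke the per-violation potential-drop accounting already carried out in the two earlier lemmas: each \invtwo{} demotion above $\intheight S$ drops the potential by at least $\cnsg$, as does each \invone{} promotion above $\nodenewrank f$. Setting $\cnsg$ equal to the constant $\cnsh$ that bounds the per-violation worst-case cost, any levels beyond the base interval are paid for by the potential decrease, and the amortized bound collapses to the same quantity. The main obstacle I expect is verifying that the two sweeps interact cleanly under a single potential, since a reweight can simultaneously create \invtwo{} violations on the profile plateaus of $\node f$ and an \invone{} violation at level $\nodenewrank f$. I would resolve this by observing that the plateaus touched by the two sweeps are disjoint---the profile plateaus lie strictly to the left or right of $\node f$, while the plateau causing the \invone{} violation contains $\node f$---so the per-violation drops from Lemmas~\ref{lem:costofbslfsplit} and~\ref{lem:costofbslfjoin} compose additively without double-counting, and the amortized and worst-case bounds coincide as stated.
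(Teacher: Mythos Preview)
Your case analysis and the worst-case accounting are essentially the paper's argument: constant work per level in the interval $[\min(\nodeheight f,\nodenewrank f),\max(\intheight S,\nodenewrank f)]$, giving the stated bound directly. The paper's proof handles all four cases in one sentence rather than separately, but the content is the same.

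Where you diverge is the amortized part, and there you over-engineer. The paper does not reuse the per-violation drop accounting from Lemmas~\ref{lem:costofbslfsplit} and~\ref{lem:costofbslfjoin} at all; it simply observes that no plateau at a level below $\min(\nodeheight f,\nodenewrank f)$ is touched, and each touched level contributes $\mathcal O(1)$ to the potential change, so $\locpotfun{\indatast k}-\locpotfun{\indatast{k-1}}=\mathcal O(\max(\intheight S,\nodenewrank f)-\min(\nodeheight f,\nodenewrank f))$ and the amortized cost matches the worst-case bound. Your talk of ``levels beyond the base interval'' being paid for by potential drops is aimed at a phenomenon that does not occur here: after the operation the structure has height exactly $\max(\intheight S,\nodenewrank f)$, so there is no work above that level to amortize away. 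What your write-up actually needs, and does not state, is the bound on the potential \emph{increase} within the interval; once you add that one-line observation, the drop machinery becomes superfluous and your argument collapses to the paper's.
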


\begin{proof} 
\Bslfrewx{f}{w} only spends constant time at each level between $\min(\nodeheight f, \nodenewrank f)$ and $\max(\nodeheight f, \nodenewrank f)$ for promoting or demoting \node f; and at most constant time at each level between $\min(\nodeheight f, \nodenewrank f)$ and $\max(\intheight{S},\nodenewrank f)$ for restoring invariants. Therefore, the worst-case complexity of \Bslfrewx{f}{w} is ${\cal O}(\max(\intheight{S},\nodenewrank f) - \min(\nodeheight f, \nodenewrank f)+1)$. 

Let \Bslfrewx{f}{w} be the \kth\ operation. Note that no plateaus that are on levels less than $\min(\nodeheight f, \nodenewrank f)$ are affected by the operation. Since the potential increase associated with each level is bounded by a constant, we have $\locpotfun{\indatast{k}} - \locpotfun{\indatast{k-1}} = \mathcal O(\max(\intheight{S},\nodenewrank f) - \min(\nodeheight f, \nodenewrank f))$. Thus, the lemma follows. 
\end{proof}

\begin{lemma} 
\label{lem:logncostofextended} 
The \Bslfsrc{}, \Bslfspl{}, \Bslfjoin{}, and \Bslfrew{} operations all have a worst-case time complexity of ${\cal O}(\log n)$. 
\end{lemma}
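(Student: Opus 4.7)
The plan is to apply Lemmas~\ref{lem:costofbslfsplit}, \ref{lem:costofbslfjoin}, and~\ref{lem:costofbslfreweight} together with the standard finger-search bound for \Bslfsrc{}, after establishing that every biased skip list $\bsl{S}$ in our setting has $\intheight{S} = \mathcal{O}(\log n)$.

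First I would prove the height bound. Since weights are bounded above by a fixed polynomial in $n$, the rank $\noderank{x} = \lfloor \log_a \nodeweight{x} \rfloor$ of every node is $\mathcal{O}(\log n)$. Invariant \invtwo{} requires at least $a$ nodes of height $i-1$ between any two consecutive nodes of height at least $i$, so the count of nodes at height at least $i$ decreases geometrically as $i$ grows above the maximum rank. Combined with the total node count being at most $n$, this forces $\intheight{S} = \mathcal{O}(\log n)$ for every biased skip list in the structure.

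Next I would apply the operation-specific bounds. The \Bslfsrc{} bound in Section~\ref{subsec:BSLOps} is $\mathcal{O}(1 + \log(\interweight{S}{i}{\suc{S}{j}}/\min(\nodeweight{i},\nodeweight{\pre{S}{j}},\nodeweight{\suc{S}{j}})))$, which is $\mathcal{O}(\log n)$ under the polynomial-weight bound. For \Bslfrew{}, Lemma~\ref{lem:costofbslfreweight} gives a worst-case bound of $\mathcal{O}(\max(\intheight{S},\nodenewrank{f}) - \min(\nodeheight{f},\nodenewrank{f}) + 1)$, which is $\mathcal{O}(\log n)$ since both heights and ranks are $\mathcal{O}(\log n)$. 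For \Bslfspl{} and \Bslfjoin{}, Lemmas~\ref{lem:costofbslfsplit} and~\ref{lem:costofbslfjoin} give per-operation amortized bounds of $\mathcal{O}(\log n)$ under the same collapse.

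The main obstacle is upgrading the amortized bounds of Lemmas~\ref{lem:costofbslfsplit} and~\ref{lem:costofbslfjoin} to worst-case. I would revisit the algorithms directly: Step~1 of each performs $\mathcal{O}(\log n)$ pointer surgery, and Step~2 processes cover nodes level-by-level. For \Bslfjoin{}, Step~2 consists solely of upward-sweeping \invone{} promotions with $\mathcal{O}(1)$ work per level, yielding a worst-case of $\mathcal{O}(\log n)$ directly. For \Bslfspl{}, the delicate point is accounting for the cascading \invone{} repairs triggered by each \invtwo{}-restoring demotion; since each cascade advances strictly upward in levels and the aggregate cascade cost across a single \Bslfspl{} operation is $\mathcal{O}(\log n)$, the worst-case cost is also $\mathcal{O}(\log n)$.
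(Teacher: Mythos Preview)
Your proposal is correct, but it takes a noticeably different route from the paper. The paper's proof is a two-line observation: the total weight $W'$ of the elements involved is polynomial in $n$ by assumption, and the finger versions are no slower than the non-finger versions \Bslsrc{}, \Bslspl{}, \Bsljoin{}, \Bslrew{}, which already have worst-case $\mathcal{O}(\log W')=\mathcal{O}(\log n)$ time from \cite{journals/algorithmica/BagchiBG05} as listed in Section~\ref{subsec:BSLOps}. No height bound is derived and Lemmas~\ref{lem:costofbslfsplit}--\ref{lem:costofbslfreweight} are not invoked at all.

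Your approach instead rebuilds the worst-case bound from first principles: you establish $\intheight{S}=\mathcal{O}(\log n)$ via \invtwo{} and the polynomial weight cap, then argue operation-by-operation that Steps~1 and~2 do $\mathcal{O}(1)$ work per level. This is self-contained and does not rely on the black-box comparison to the non-finger operations, which is a genuine advantage if one is uneasy about that comparison for the newly introduced finger variants. The trade-off is that your \Bslfspl{} case needs one more sentence of justification: the reason the aggregate cascade cost is $\mathcal{O}(\log n)$ is that, by the correctness discussion, the \invone{} promotions triggered by demoting a cover node $u$ never percolate above the pre-demotion height of $u$, and since cover nodes are processed in increasing height with each demotion landing at $\max(\noderank{u},\prevheight)$, the per-demotion work telescopes to $\mathcal{O}(\intheight{S})$. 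With that made explicit, your argument is complete.
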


\begin{proof} 
Let $W'$ be the sum of the weights of all elements in the sets involved in any of these four operations. By our previous assumptions, $W'$ is bounded from above by a polynomial in $n$. Since these versions of the operations are more efficient than the non-finger versions which take ${\cal O}(\log W')$ time in the worst case, these operations have a worst-case time complexity of  ${\cal O}(\log n)$. 
\end{proof}

We now present our data structure, the \Ds{}, which improves the worst-case bound in Theorem~\ref{thm:logsq} by a factor of $\log n$, matching the lower bound of \cite{conf/stoc/PatrascuD04}.

\section{Our Data Structure: The Mergeable Dictionary} 
\label{sec:DS}

The \Ds{} stores each set in the collection \collection S as a biased skip list.  The weight of each node in each biased skip list is determined by \collection S. When the collection of sets is modified, for instance via a \Union{} operation, in order to reflect this change in the data structure, besides splitting and joining biased skip lists, we need to ensure the weights of the affected nodes are properly updated and biased skip list invariants \invzero{}, \invone{}, and \invtwo{} are preserved. For simplicity we assume that \datast 0 is the collection of singleton sets and \datast i, for all $i$, partitions the universe \universe. This lets us precompute, for each node \node x, \nodeposition{\universe}{x}, the global position of \node x. 
  
For the \Union{} algorithm, we will use the same basic approach outlined in Section~\ref{sec:LogSQ}, the \segment{} merging heuristic, which works by extracting the \segments{} from each set and then gluing them together to form the union of the two sets. 

As previously mentioned, while we do not have control over the number of \segments{} that need to be processed which has been shown to have an amortized lower bound of $\Omega(\log n)$ per operation, we need to process each \segment{} faster. In order to do so, we depart from balanced search trees and instead use Biased Skip Lists \cite{journals/algorithmica/BagchiBG05} with the extended operations we introduced in Section~\ref{sec:BSL}.

Before we discuss the implementation of each operation in detail, we need to describe the weighting scheme. 

\subsection{Weighting Scheme} 
\label{subsec:DSWeights} 
Let the weight of a node $x$, $\nodeweight x$, be the sum of the sizes of its adjacent gaps. In other words, if $\nodeposition{S}{x} = k$ for some node $x\in S$, then we have 
\[ 
\label{eqn:weight} 
\nodeweight x = \numgap{S}{k-1} +\numgap{S}{k}. 
\] 
Recall that $\numgap{S}{0} = \numgap{S}{|\set S|} = 1$. Observe that this implies for any set \set S, $\setweight S \leq 2n$. 

\subsection{The \hide{\Ms{}, }\Find{}, \Src{}, and \Spl{} Operations} 
\label{subsec:DSOthers}

The \Findx{x} operation can simply return the maximum element of the set which contains $x$ by invoking \Bslfsrcx{i}{+\infty}. The \Srcx{X}{i} operation can be performed by simply invoking \Bslsrcx{X}{i}. The \Splx{X}{i} operation can be performed by simply invoking \Bslsplx{X}{i} and running \Bslfrew{} on one node in each of the resulting biased skip lists to restore the weights.

\subsection{The \Union{} Operation} 
\label{subsec:DSUnion} 
We will use the \Ds{} in Figure \ref{fig:example} to illustrate the \Union{} operation. \Unionx{A}{B} operation can be viewed as having four essential phases: finding the \segments{}, extracting the \segments{}, updating the weights, and gluing the \segments{}. 
A more detailed description follows.

\paragraph{Phase I: Finding the \segments{}} 
Assume $\intmin{A} < \intmin{B}$ w.l.o.g. Let $\lasta = \lceil\intnum{A}{B}/2\rceil$ and $\lastb = \lfloor\intnum{A}{B}/2\rfloor$. Recall that $\intsubs{A}{B} = \{\intl A_1, \intl B_1, \intl A_2, \intl B_2, \ldots\}$ is the set of \segments{} associated with the \Unionx{A}{B} operation where $\intl A_i$ and $\intl B_i$ are the \ith\ \segment{} of $A$ and $B$ respectively. We have $\intmin{A_1} = \intmin{A}$ and $\intmin{B_1} = \intmin{B}$. Given \intmin{A_i} and \intmin{B_i}, we find \intmax{A_i} by invoking \Bslfsrcx{\intmin{A_i}}{\intmin{B_i}}. Similarly, given \intmin{B_i} and \intmin{A_{i+1}} we find \intmax{B_i} by invoking \Bslfsrcx{\intmin{B_i}}{\intmin{A_{i+1}}}. Lastly, given \intmax{A_i} and \intmax{B_i}, observe that $\intmin{A_{i+1}} = \suc{A}{\intmax{A_i}}$ and $\intmin{B_{i+1}} = \suc{B}{\intmax{B_i}}$. 
Note that the \suc{}{} operation is performed in constant time in a biased skip list using the lowest successor link of a node. 
At the end of this phase, all the \segments{} are found (see Figure \ref{fig:findints}). Specifically, we have computed for all $i$ and $j$ (\intmin{A_i}, \intmax{A_i}) and (\intmin{B_j}, \intmax{B_j}). 

\paragraph{Phase II: Extracting the \segments{}} 

Since we know where the minimum and maximum node of each \segment{} is from the previous phase, we can extract all the \segments{} easily in order by invoking \Bslfsplx{\intmax{A_i}} for $1\leq i< \lasta$, and \Bslfsplx{\intmax{B_j}} for $1\leq j< \lastb$ (see Figure \ref{fig:extractints}).

\begin{figure}[H]
\centering 
\includegraphics[scale=1]{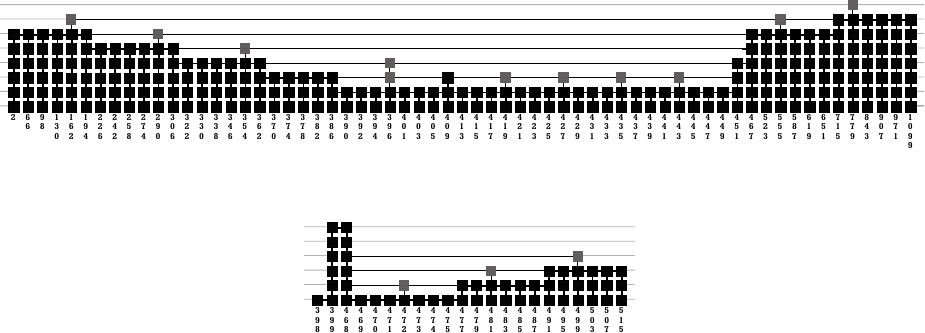} 
\caption{We have two biased skip lists \bsl A (top) and \bsl B (bottom). The tower of each node is represented by a set of vertically stacked squares which represent individual levels of a node. The predecessor/successor pointers are also shown. The lightly shaded levels exist due to promotions. The position of each node \node x, \nodeposition{\universe}{x}, is indicated at the bottom of the tower of each node.} 
\label{fig:example} 
\end{figure}
\begin{figure}[H]
\centering 
\includegraphics[scale=1]{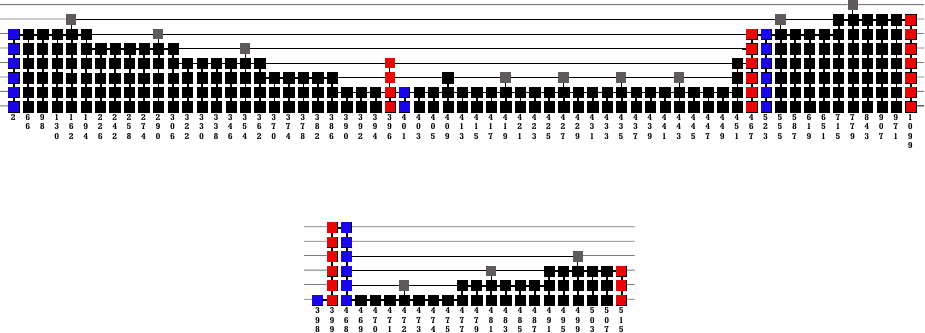} 
\caption{At the end of Phase~I, we have all the minimum and maximum nodes of the \segments{} of the \Unionx{A}{B} operation. The minimum nodes are colored blue, the maximum nodes are colored red.} 
\label{fig:findints} 
\end{figure}

\paragraph{Phase III: Updating Weights} Next, we need to update the weights of the affected nodes. Let the new weight of item $x$ be $\nodenewweight x$. Then

{\allowdisplaybreaks 
\begin{align*} 
\mbox{For } 2\leq i &\leq \lasta, \mbox{ let}\\ 
&\nodenewweight{\intmin{A_i}} = \nodeweight{\intmin{A_i}} + \nodeposition{U}{\intmax{A_{i-1}}} - \nodeposition{U}{\intmax{B_{i-1}}},\\ 
\mbox{For } 2\leq i &\leq \lastb , \mbox{ let}\\ 
&\nodenewweight{\intmin{B_i}} = \nodeweight{\intmin{B_i}} + \nodeposition{U}{\intmax{B_{i-1}}} - \nodeposition{U}{\intmax{A_{i}}}, \\ 
\mbox{For } 1\leq i &< \lasta, \mbox{ let}\\ 
&\nodenewweight{\intmax{A_i}} = \nodeweight{\intmax{A_i}} + \nodeposition{U}{\intmin{B_i}} - \nodeposition{U}{\intmin{A_{i+1}}}, \\ 
\mbox{For } 1\leq i &< \lastb , \mbox{ let}\\ 
&\nodenewweight{\intmax{B_i}} = \nodeweight{\intmax{B_i}} + \nodeposition{U}{\intmin{A_{i+1}}} - \nodeposition{U}{\intmin{B_{i+1}}}.
\end{align*}
}
We also have 
\[\nodenewweight{\intmin{B_1}} = \nodeweight{\intmin{B_1}} -1+ \nodeposition{U}{\intmin{B_{1}}} - \nodeposition{U}{\intmax{A_{1}}}. \] 
If $\intnum{A}{B}$ is even, we have 
\[\nodenewweight{\intmax{A_\lasta}} = \nodeweight{\intmax{A_\lasta}} -1 + \nodeposition{U}{\intmin{B_\lasta}} - \nodeposition{U}{\intmax{A_{\lasta}}}. \] 
If $\intnum{A}{B}$ is odd, we have 
\[\nodenewweight{\intmax{B_{\lastb}}} = \nodeweight{\intmax{B_{\lastb}}} -1+ \nodeposition{U}{\intmin{A_{\lasta}}} - \nodeposition{U}{\intmax{B_{\lastb}}}.\]  We can perform these weight updates (see Figure \ref{fig:updateweights}) by invoking

\begin{itemize} 
\item \Bslfrewx{\intmin{A_i}}{\nodenewweight{\intmin{A_i}}} \quad for  $2\leq i \leq \lasta$, 
\item \Bslfrewx{\intmax{A_i}}{\nodenewweight{\intmax{A_i}}} \quad for  $1\leq i \leq \lastb$, 
\item \Bslfrewx{\intmin{B_j}}{\nodenewweight{\intmin{B_j}}} \quad for  $1\leq j \leq \lastb$, 
\item \Bslfrewx{\intmax{B_j}}{\nodenewweight{\intmax{B_j}}} \quad for  $1\leq j < \lasta$. 
\end{itemize}

\paragraph{Phase IV: Gluing the \segments{}} Since we assumed w.l.o.g.~that $\intmin A < \intmin B$, the correct order of the \segments{} is $(\intl A_1, \intl B_1, \intl A_2, \intl B_2, \ldots)$ by construction. We can glue all the \segments{} by invoking \Bslfjoinx{\intmax{A_{i}}}{\intmin{B_i}} for $1\leq i \leq \lastb$ and \Bslfjoinx{\intmax{B_{i}}}{\intmin{A_{i+1}}} for $1\leq i < \lasta$ (see Figure \ref{fig:glueints}). 

This concludes the presentation of our data structure. In the next section we will analyze the running time of each of the \opcount{} operations.

\begin{figure}[H]
\centering 
\includegraphics[scale=1]{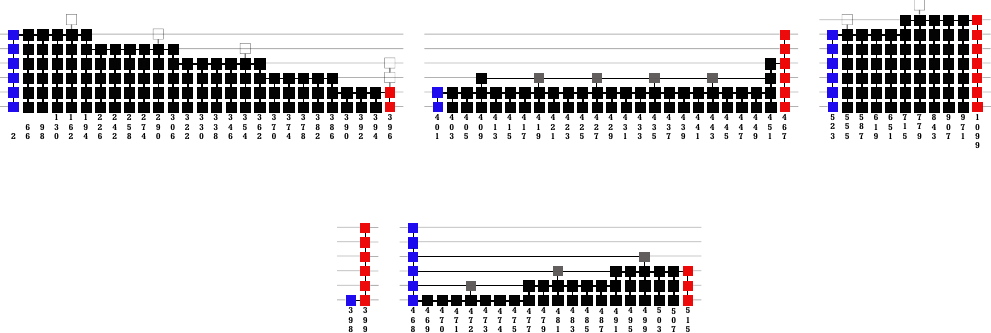} 
\caption{At the end of Phase~II, all the \segments{} are extracted. These extractions cause invariants \invone{} and \invtwo{} to be violated, which are then restored by \Bslfspl{}. Hollow levels denote demotions.} 
\label{fig:extractints} 
\end{figure}

\begin{figure}[H]
\centering 
\includegraphics[scale=1]{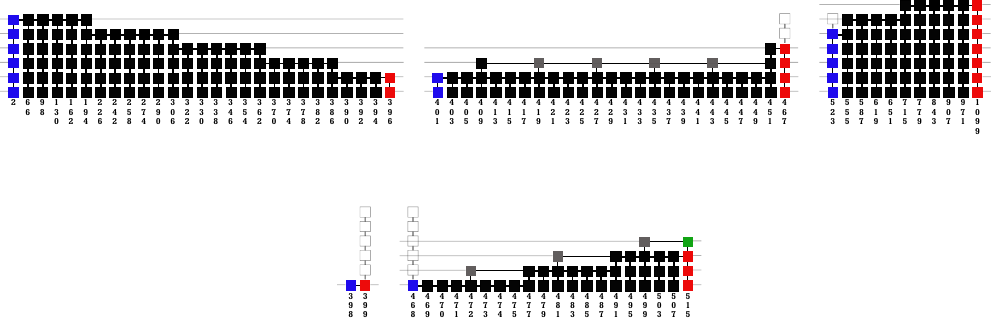} 
\caption{The weights of the minimum and maximum nodes of each \segment{} are affected by the \Unionx{A}{B} operation. Accordingly, in Phase~III, we update the weights of these nodes. This update causes \invzero{} violations. The \Bslfrew{} operation first restores \invzero{}. Restoring \invzero{} causes \invone{} and \invtwo{} violations which are then again restored by \Bslfrew{}. The hollow levels denote demotions and green levels denote promotions.} 
\label{fig:updateweights} 
\end{figure}

\begin{figure}[H]
\centering 
\includegraphics[scale=1]{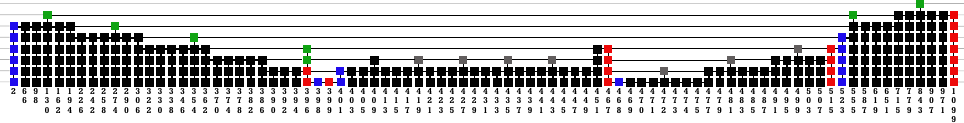} 
\caption{In Phase~IV, we join all the \segments{} to obtain the union of \set A and \set B. These joins cause invariants \invone{} and \invtwo{} to be violated, which are then restored by \Bslfjoin{}. The green levels denote promotions.} 
\label{fig:glueints} 
\end{figure}

\section{Analysis of the Mergeable Dictionary} 
\label{sec:Analysis}

Before we can analyze the amortized time complexity of the \Ds{} operations, we need a new potential function which we will present in Section~\ref{subsec:AnalysisPotential}. We then prove that all the operations except \Union{} have a worst-case and amortized time complexity of ${\cal O}(\log n)$ in Section~\ref{subsec:AnalysisOthers}. Lastly, we will show that the \Union{} operation has an amortized time complexity of ${\cal O}(\log n)$ in Section~\ref{subsec:AnalysisUnion}.

\subsection{The New Potential Function} 
\label{subsec:AnalysisPotential}

Let \datast i be the data structure containing our dynamic collection of disjoint sets, $\collection S^{(i)} = \{\set S_1^{(i)}, \set S_2^{(i)},\ldots\}$ after the \ith\ operation. Let 
\[\varphi(\set S) = \sum_{\node x\in \set S} (\log \numgap{S}{\nodeposition{S}{x}-1} + \log \numgap{S}{\nodeposition{S}{x}}).\] 
 Then we define the potential after the \ith\ operation as follows. 
\begin{equation}
\label{eq:mainpotential}
\Phi(\datast i) = \cnsd\cdot\sum_j \varphi(\set S_j^{(i)}) 
\end{equation}
where $\cnsd$ is a constant to be determined later. 

Note that the main difference between this 
function and the one in Section~\ref{subsec:LogSQPotFun} is the elimination of the $\log n$ term.

\subsection{The Analysis of the \hide{\Ms{}, }\Find{}, \Src{}, and \Spl{} Operations} 
\label{subsec:AnalysisOthers} 
We now show that all the operations except \Union{} have a worst-case time  complexity of ${\cal O}(\log n)$, and they do not cause a substantial increase in the potential which yields that  their amortized time complexity is also ${\cal O}(\log n)$. 

\begin{thm} 
\label{thm:otheropsbound} 
The worst-case and amortized time complexity of the \Findx{x}, \Srcx{S}{x}, and \Splx{S}{x} operations is ${\cal O}(\log n)$. 
\end{thm}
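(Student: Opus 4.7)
The plan is to reduce each of the three operations to a constant number of biased skip list operations whose worst-case time is $\mathcal{O}(\log n)$ by Lemma~\ref{lem:logncostofextended}, and then to separately argue that the potential $\Phi$ of equation~(\ref{eq:mainpotential}) does not increase under any of them. Combining these two facts gives both the worst-case and the amortized $\mathcal{O}(\log n)$ bounds in one stroke.

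First I would handle the worst-case bounds directly from the implementations in Section~\ref{subsec:DSOthers}. \Findx{x} is a single call to \Bslfsrcx{x}{+\infty}; \Srcx{S}{x} is a single call to \Bslsrcx{S}{x}; and \Splx{S}{x} is a single \Bslsplx{S}{x} followed by two \Bslfrew{} calls on the new boundary nodes of the two resulting lists. Each of these subroutines runs in $\mathcal{O}(\log n)$ by Lemma~\ref{lem:logncostofextended} (or, for \Bslsrc, by the classical biased skip list bound of~\cite{journals/algorithmica/BagchiBG05}), so a constant number of them gives a worst-case $\mathcal{O}(\log n)$ cost.

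For the amortized bounds I would show that the potential change $\Phi(\datast{i}) - \Phi(\datast{i-1})$ is non-positive for each of these operations, so that the amortized cost is at most the worst-case cost. For \Find{} and \Src{} this is immediate: neither operation modifies the collection $\collection{S}$, so every gap-length $\numgap{S}{k}$ is preserved and $\Delta\Phi=0$. For \Splx{S}{x}, let $a^* = \max(A)$ and $b^* = \min(B)$ where $A,B$ are the resulting sets. Every gap interior to $A$ and every gap interior to $B$ is unchanged from the corresponding gap in $S$, and therefore contributes the same to $\varphi$. The only gap that changes is the one between $a^*$ and $b^*$ in $S$, whose length $d$ contributed $2\log d$ to $\varphi(S)$ (once as the right gap of $a^*$, once as the left gap of $b^*$); in the new sets $A$ and $B$ it is replaced by the two sentinel gaps $\numgap{A}{|A|} = \numgap{B}{0} = 1$, contributing $0$. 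Hence $\Delta\Phi = -2\cnsd\log d \leq 0$.

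The one subtlety worth flagging, and the closest thing to an obstacle, is to verify that the internal \Bslfrew{} calls invoked inside \Spl{} to restore the weighting scheme of Section~\ref{subsec:DSWeights} do not themselves inflate $\Phi$. This is straightforward because $\Phi$ is defined purely in terms of the logical gap sizes $\numgap{S}{k}$ of the sets, not in terms of biased skip list ranks, heights, or plateaus; the local BSL potential $\locpotfun{\cdot}$ used to amortize \Bslfrew{} is already absorbed into the worst-case $\mathcal{O}(\log n)$ guarantee of Lemma~\ref{lem:logncostofextended}, so the two accounting schemes remain cleanly separated. Putting the worst-case bound together with $\Delta\Phi \leq 0$ then yields the stated $\mathcal{O}(\log n)$ amortized bound for all three operations.
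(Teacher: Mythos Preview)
Your proposal is correct and follows essentially the same argument as the paper: cite Lemma~\ref{lem:logncostofextended} (plus the standard BSL bounds for the non-finger \Bslsrc{}/\Bslspl{}) for the worst-case cost, observe that \Find{} and \Src{} leave $\Phi$ unchanged, and that \Spl{} can only decrease $\Phi$. The paper simply asserts this last point, whereas you compute the drop $-2\cnsd\log d$ explicitly; your added remark separating the BSL potential $\locpotfun{\cdot}$ from $\Phi$ is also apt and not something the paper spells out.
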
 

\begin{proof} 
The worst-case time complexity of the \Bslfsrc{} operation invoked by the \Find{} and \Src{} operations is ${\cal O}(\log n)$ by Lemma~\ref{lem:logncostofextended}.  Recall that since these operations do not change the structure, the potential remains the same. Therefore, worst-case and amortized time complexity of \Find{} and \Src{} is ${\cal O}(\log n)$. 
The worst-case time-complexity of the \Bslspl{} and \Bslrew{} operations invoked by the \Spl{} operation is ${\cal O}(\log n)$ by Lemma~\ref{lem:logncostofextended}. Observe that \Spl{} can only decrease the potential. Therefore, the worst-case and amortized time complexity of \Spl{} is ${\cal O}(\log n)$. 
\end{proof}

\subsection{The Analysis of the \Union{} Operation} 
\label{subsec:AnalysisUnion} 
All we have left to do is show that the amortized time complexity of the \Union{} operation is ${\cal O}(\log n)$.  In order to do this, first we will show that the worst-case time complexity of the \Unionx{A}{B} operation is $\mathcal O\left(\log n + \sum_i \intfun{A_i}+\sum_j \intfun{B_j}\right)$. We define \intfun{A_i} and \intfun{B_j} next. 

\begin{defn} 
Consider the \Unionx{A}{B} operation. 
Recall that \nodenewweight{x} is the new weight of node \node x after the \Unionx{A}{B} operation. Also recall that $\lasta = \lceil\intnum{A}{B}/2\rceil$ and $\lastb = \lfloor\intnum{A}{B}/2\rfloor$. 
Then, for $1< i < \lasta$, let 
\[ 
\intfun{A_i} = \log\frac{\nodeweight{\intmax{A_{i-1}}}+\nodeweight{\intmin{A_{i+1}}} + \sum_{\node x\in \intl A_i}{\nodeweight x}}{\min(\nodenewweight{\intmax{{B_{i-1}}}}, \nodenewweight{\intmin{{A_i}}}, \nodenewweight{\intmax{{A_i}}}, \nodenewweight{\intmin{{B_{i}}}})} 
\] 
and for $1< j < \lastb$, let 
\[ 
\intfun{B_j} = \log\frac{\nodeweight{\intmax{B_{j-1}}}+\nodeweight{\intmin{B_{j+1}}} + \sum_{\node x\in \intl B_j}{\nodeweight x}}{\min(\nodenewweight{\intmax{{A_{j}}}}, \nodenewweight{\intmin{{B_j}}}, \nodenewweight{\intmax{{B_j}}}, \nodenewweight{\intmin{{A_{j+1}}}})}. 
\] 
For the boundary cases, let $\intfun{A_1} = \intfun{B_1} = \intfun{A_{\lasta}} = \intfun{B_{\lastb}} = \log n$. 
\end{defn}

\subsubsection{The Worst-Case Time Complexity} 
\label{subsubsec:AnalysisUnionWorst}

We need to bound the worst-case time complexity of each phase of the \Unionx{A}{B} operation.

\begin{lemma} 
\label{lem:phaseone} 
Phase~I of the \Union{} operation has a worst-case time complexity of \[\mathcal O\left(\log n+ \sum_{i=2}^{\lasta-1} \intfun{A_i} +\sum_{j=2}^{\lastb-1} \intfun{B_j}\right).\] 
\end{lemma}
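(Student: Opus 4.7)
The plan is to account for Phase I as a sequence of $\mathcal{O}(|I(A,B)|)$ finger searches interleaved with constant-time successor operations, and to bound each individual search by the $\intfun{\cdot}$ of its corresponding segment. At iteration $i$, having already computed pointers to $\intmin{A_i}$ and $\intmin{B_i}$, we perform two BSL-FSearch calls: $\Bslfsrcx{\intmin{A_i}}{\intmin{B_i}}$ to locate $\intmax{A_i}$, and $\Bslfsrcx{\intmin{B_i}}{\intmin{A_{i+1}}}$ to locate $\intmax{B_i}$; the successors $\intmin{A_{i+1}}$ and $\intmin{B_{i+1}}$ are then obtained via the lowest-level successor pointers in constant time. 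Since there are $\mathcal{O}(|I(A,B)|)$ such successor operations in total, and each $\intfun{\cdot}$ contributes at least a constant, these are absorbed into the $\sum \intfun{\cdot}$ terms.

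Given the finger search at iteration $i$, apply the cost formula from Section~\ref{subsec:BSLOps}. Since $\pre{A}{\intmin{B_i}} = \intmax{A_i}$ and $\suc{A}{\intmin{B_i}} = \intmin{A_{i+1}}$, the numerator $\interweight{A}{\intmin{A_i}}{\intmin{A_{i+1}}}$ equals $\sum_{x \in A_i} \nodeweight{x} + \nodeweight{\intmin{A_{i+1}}}$, which is bounded above by the numerator of $\intfun{A_i}$, namely $\nodeweight{\intmax{A_{i-1}}}+\nodeweight{\intmin{A_{i+1}}} + \sum_{x\in A_i}\nodeweight{x}$. The denominator requires a matching step: to convert $\min(\nodeweight{\intmin{A_i}}, \nodeweight{\intmax{A_i}}, \nodeweight{\intmin{A_{i+1}}})$ into $\min(\nodenewweight{\intmax{B_{i-1}}}, \nodenewweight{\intmin{A_i}}, \nodenewweight{\intmax{A_i}}, \nodenewweight{\intmin{B_i}})$, use the key observation that the merge strictly shrinks the gaps adjacent to the segment boundary nodes, so $\nodenewweight{\intmin{A_i}} \leq \nodeweight{\intmin{A_i}}$ and $\nodenewweight{\intmax{A_i}} \leq \nodeweight{\intmax{A_i}}$. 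Hence both $\nodenewweight{\intmin{A_i}}$ and $\nodenewweight{\intmax{A_i}}$ appear as candidates in the $\intfun{A_i}$ minimum and are $\leq$ the corresponding old weights. The symmetric analysis handles the search for $\intmax{B_i}$ and charges it to $\intfun{B_i}$.

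The boundary searches for $\intmax{A_1}$, $\intmax{A_{\lasta}}$, $\intmax{B_1}$, $\intmax{B_{\lastb}}$, which involve segments outside the summation range $i=2,\dots,\lasta-1$ (respectively $j=2,\dots,\lastb-1$), do not have a natural neighboring segment to charge to. For these, we simply invoke the global worst-case bound $\mathcal{O}(\log n)$ from Lemma~\ref{lem:logncostofextended}. Since there are at most four such boundary searches, their total contribution is $\mathcal{O}(\log n)$, matching the $\log n$ term in the statement.

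The main obstacle is the denominator gap noted above: in the BSL-FSearch denominator, the term $\nodeweight{\intmin{A_{i+1}}}$ has no direct counterpart inside $\intfun{A_i}$'s minimum. When the FSearch min is achieved at this term, a more delicate argument is needed: either show that $\nodeweight{\intmin{A_{i+1}}}$ is, within the regime of interest, no smaller than $\min(\nodenewweight{\intmin{A_i}}, \nodenewweight{\intmax{A_i}})$ by exploiting that a small $\nodeweight{\intmin{A_{i+1}}}$ forces its neighbors in $A$ to be close, limiting how large the ratio can become; or charge the slack to $\intfun{A_{i+1}}$, whose denominator contains $\nodenewweight{\intmin{A_{i+1}}} \leq \nodeweight{\intmin{A_{i+1}}}$. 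Either reduction adds at most a constant factor to the per-segment cost, which is absorbed in the $\mathcal{O}(\cdot)$. Summing the interior bounds then gives $\mathcal{O}\!\left(\log n + \sum_{i=2}^{\lasta-1}\intfun{A_i} + \sum_{j=2}^{\lastb-1}\intfun{B_j}\right)$, as claimed.
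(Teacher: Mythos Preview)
Your overall decomposition is correct and matches the paper: account for Phase~I as a sequence of \Bslfsrc{} calls, bound each via the formula in Section~\ref{subsec:BSLOps}, compare numerator and denominator with those of $F(A_i)$, and absorb the (at most four) boundary searches into the $\mathcal O(\log n)$ term via Lemma~\ref{lem:logncostofextended}. The gap is exactly where you flag it --- the term $w(\intmin{A_{i+1}})$ in the finger-search denominator --- and neither of your two proposed fixes actually works. Charging to $F(A_{i+1})$ fails because its numerator, $w(\intmax{A_i})+w(\intmin{A_{i+2}})+\sum_{x\in A_{i+1}}w(x)$, need not dominate the search numerator $\sum_{x\in A_i}w(x)+w(\intmin{A_{i+1}})$ when $A_i$ carries large total weight; the smaller denominator does not compensate. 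Your first alternative is not an argument: nothing in the weighting scheme forces $w(\intmin{A_{i+1}})$ to control $w'(\intmin{A_i})$ or $w'(\intmax{A_i})$.

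The clean resolution, which the paper uses, is a one-line observation you overlooked. The entire segment $B_i$ sits inside the $A$-gap of size $a_i$ between $\intmax{A_i}$ and $\intmin{A_{i+1}}$, so the two gaps adjacent to $\intmin{B_i}$ in $A\cup B$ sum to at most $a_i$. Hence
\[
w'(\intmin{B_i})\ \le\ a_i\ \le\ w(\intmin{A_{i+1}})\quad\text{and}\quad w'(\intmin{B_i})\ \le\ a_i\ \le\ w(\intmax{A_i}).
\]
But $w'(\intmin{B_i})$ is already one of the four terms in the denominator of $F(A_i)$. Combined with your (correct) observations $w'(\intmin{A_i})\le w(\intmin{A_i})$ and $w'(\intmax{A_i})\le w(\intmax{A_i})$, this gives
\[
\min\bigl(w(\intmin{A_i}),\,w(\intmax{A_i}),\,w(\intmin{A_{i+1}})\bigr)\ \ge\ \min\bigl(w'(\intmin{A_i}),\,w'(\intmax{A_i}),\,w'(\intmin{B_i})\bigr),
\]
so the search cost is bounded by $F(A_i)$ directly, with no cross-segment charging needed. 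The stray $w(\intmin{A_{i+1}})$ is handled not by comparing it to an $A$-weight but by noticing it dominates the $B$-weight $w'(\intmin{B_i})$ that is already present in the $F(A_i)$ denominator.
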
 

\begin{proof} 
During Phase~I, \Bslfsrcx{\intmin{A_i}}{t} is invoked for $1\leq i < \lasta$ where $\pre{A}{t} = \intmax{A_i}$ and $\suc{A}{t} = \intmin{A_{i+1}}$;  and \Bslfsrcx{\intmin{B_j}}{s} is invoked for $1\leq j < \lastb$ where $\pre{B}{s} = \intmax{B_j}$ and $\suc{B}{s} = \intmin{B_{j+1}}$. Observe that $\nodenewweight{\intmin{{B_i}}}<\nodeweight{\intmin{A_{i+1}}}$ and $\nodenewweight{\intmin{{B_i}}}<\nodeweight{\intmax{A_{i}}}$. Therefore, by the definition of \intfun{A_i} and \intfun{B_j}, and Lemma~\ref{lem:logncostofextended}, the worst-case time complexity of Phase~I is $\mathcal O\left(\log n+ \sum_{i=2}^{\lasta-1} \intfun{A_i} +\sum_{j=2}^{\lastb-1} \intfun{B_j}\right)$. 
\end{proof} 

We will need the following lemma to bound the worst-case time complexity of Phase~II-IV. 

\begin{lemma} 
\label{lem:height} 
Given a biased skip list \bsl S and any node $\node f \in \bsl S$, recall that $\prebsl{S}{f} = \{x\in S\,|\,x\leq f\}$. Then, 
\[ 
\intheight{\prebsl{S}{f}} \leq \log \setweight{\prebsl{S}{f}}. 
\] 
\end{lemma}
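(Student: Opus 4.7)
The plan is to prove the equivalent inequality $\setweight{\prebsl{S}{f}} \geq 2^H$ where $H = \intheight{\prebsl{S}{f}}$, by strong induction on $H$. The naive single-claim induction does not close, because the symmetric two-sided \invtwo{}-based argument for the unique tallest node breaks when the right side is truncated by $f$. I would therefore strengthen the inductive hypothesis to prove simultaneously: \emph{(a)} $\setweight{\prebsl{S}{f}} \geq 2^H$, and \emph{(b)} if $\prebsl{S}{f}$ contains at least two nodes of height $H$, then $\setweight{\prebsl{S}{f}} \geq 2^{H+1}$. The base case $H=0$ is immediate in both parts since any node weighs $\geq 1$.

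For the inductive step at height $H\geq 1$, let $u$ be a tallest node in $\prebsl{S}{f}$. By \invzero{}, $\noderank{u}\leq H$; if $\noderank{u}=H$ then $\nodeweight{u}\geq a^H = 2^H$ and (a) holds immediately. Otherwise $\noderank{u}<H$, and I apply \invtwo{} at level $H$ to $u$: with the sentinels $\pm\infty$ of $\bsl S$ at height $\intheight{S}\geq H$, there are at least $a=2$ nodes of height $H-1$ between $u$ and each of its consecutive height-$\geq H$ neighbors in $\bsl S$. The leftward such nodes lie in $\prebsl{S}{f}$ because they are $<u\leq f$. For (a) in the unique-tallest case, the strict left sub-prefix $\prebsl{S}{u^-}$ then has max height exactly $H-1$ and contains $\geq 2$ height-$(H-1)$ nodes, so hypothesis (b) at height $H-1$ applied to it yields $\setweight{\prebsl{S}{u^-}}\geq 2^H$. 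For (b) with tallest nodes $u_1<\cdots<u_k$, $k\geq 2$: if all $u_i$ have rank $H$ then each weighs $\geq 2^H$ and the count $k\geq 2$ closes the bound; otherwise some $u_j$ is promoted, and \invtwo{} produces $\geq 2$ height-$(H-1)$ nodes in each sub-interval of $\bsl S$ flanking $u_j$. Each such sub-interval is a valid sub-BSL of max height $H-1$ with at least two top-height nodes lying inside $\prebsl{S}{f}$, so hypothesis (b) at height $H-1$ gives weight $\geq 2^H$ per sub-interval, summing to $\geq 2^{H+1}$.

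The main obstacle is the boundary case of (b) where the promoted tallest node sits at an extreme position, particularly $j=k$ with $u_k=f$ so that the rightward sub-interval contributes nothing within the prefix. I would resolve this by exploiting $u_{k-1}$: if $u_{k-1}$ also has rank $<H$, the \invtwo{} decomposition can instead be anchored at $u_{k-1}$, producing two sub-intervals fully inside the prefix; otherwise $u_{k-1}$ has rank $H$ and directly contributes weight $\geq 2^H$, which combined with the single left sub-interval $(u_{k-1},u_k)$ of weight $\geq 2^H$ (by hypothesis (b) at height $H-1$) assembles the required $\geq 2^{H+1}$.
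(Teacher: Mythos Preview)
Your proposal is correct and takes a genuinely different route from the paper's proof.

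The paper argues by unrolling a single level-by-level recurrence: writing $N_h(t)$ for the number of prefix nodes with $h(x)\ge t$ and $r(x)\le t$, it observes from \invtwo{} that $N_h(i)\ge 2\bigl(N_h(i+1)-N_r(i+1)\bigr)$, telescopes this down from level $H$ to level $1$, and then bounds $W(\prebsl{S}{f})$ from below by $\sum_i 2^i N_r(i) + 2N_h(1) \ge 2^H$. This is a global counting argument with no explicit induction or decomposition into sub-structures.

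Your approach instead does structural strong induction on $H$ with the strengthened pair (a)/(b), decomposing the prefix at a tallest (promoted) node into the flanking sub-intervals and recursing. What this buys is a cleaner conceptual picture of where the weight comes from: either a rank-$H$ node contributes $2^H$ directly, or a promoted height-$H$ node forces, via \invtwo{}, two sub-BSLs of top height $H-1$ each worth $\ge 2^H$. The paper's recurrence encodes the same phenomenon but in aggregate.

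Two points worth making explicit when you write this out. First, your induction must range over all biased skip lists and all prefixes simultaneously (not over prefixes of a single fixed $S$), since in part (b) you apply the hypothesis to a sub-interval $(u_{j-1},u_j)$ viewed as its own BSL; you should note that such a sub-interval inherits \invzero{}--\invtwo{} from $S$ because its bounding nodes $u_{j-1},u_j$ (or the sentinel) sit at height $\ge H$, so consecutive height-$\ge i$ neighbors for any interior node coincide in $S$ and in the sub-interval. Second, in the case of multiple tallest nodes you silently use that (b) implies (a); just say so.
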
 

\begin{proof} 
Let $\maxrank = \max_{\node x\in \prebslsub{S}{f}} \noderank x$ and $\ranksatlevel{t} = \{x\in\prebsl{S}{f}\,|\,\noderank{x} = t\}$. Also, let $\nodesatlevel{t} = \{x\in\prebsl{S}{f}\,|\,\nodeheight{x} \geq t,\, \noderank x \leq t\}$. 
Observe that since $a=2$, due to \invtwo{}, $\nodesatlevel i \geq 2(\nodesatlevel{i+1}-\ranksatlevel{i+1})$.
This yields
\begin{align*} 
\nodesatlevel{1} 
&\geq 2(\nodesatlevel{2} - \ranksatlevel{2})\\ 
&\quad\vdots\\
&\geq 2^{j-1}\nodesatlevel{j} - \sum_{i=1}^{j-1} 2^{i-1}\ranksatlevel{1+i}\\ 
&= 2^{j-1}\nodesatlevel{j} - \sum_{i=2}^{j} 2^{i}\ranksatlevel{i}\\ 
&\geq 2^{\intheight{\prebslsub{S}{f}}-1}\nodesatlevel{\intheight{\prebslsub{S}{f}}} - \sum_{i=2}^{\intheight{\prebslsub{S}{f}}} 2^{i}\ranksatlevel{i}\\ 
&= 2^{\intheight{\prebslsub{S}{f}}-1}\nodesatlevel{\intheight{\prebslsub{S}{f}}} - \sum_{i=2}^{\maxrank} 2^{i}\ranksatlevel{i}
&& \ranksatlevel{t} = 0 \text{ for } t>R\\  
&\geq 2^{\intheight{\prebslsub{S}{f}}-1} - \sum_{i=2}^{\maxrank} 2^{i}\ranksatlevel{i}
\end{align*}
Then we have 
\begin{align*} 
\setweight{\prebsl{S}{f}} 
&\geq \sum_{i=2}^\maxrank 2^i\ranksatlevel{i} + \sum_{x\in\ranksatlevel 1} \nodeweight x\\ 
&\geq \sum_{i=2}^\maxrank 2^i\ranksatlevel{i} + 2\nodesatlevel{1}\\
&\geq\sum_{i=2}^\maxrank 2^i\ranksatlevel{i} + 2\left(2^{\intheight{\prebslsub{S}{f}}-1} - \sum_{i=2}^R 2^{i-1}\ranksatlevel{i}\right)\\ 
&\geq 2^{\intheight{\prebslsub{S}{f}}}\\ 
\log\setweight{\prebsl{S}{f}} &\geq \intheight{\prebsl{S}{f}}.  
\end{align*} 
\end{proof}

\begin{lemma} 
\label{lem:phasetwo} 
Phase~II of the \Union{} operation has a worst-case time complexity of \[\mathcal O\left(\log n + \sum_{i=2}^{\lasta-1} \intfun{A_i} +\sum_{j=2}^{\lastb-1} \intfun{B_j}\right).\] 
\end{lemma}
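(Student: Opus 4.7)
The plan is to invoke Lemma~\ref{lem:wccostofbslfsplit} on the sequence of finger-splits executed in Phase~II, namely the $\lasta-1$ calls $\Bslfsplx{\intmax{A_i}}$ and the $\lastb-1$ calls $\Bslfsplx{\intmax{B_j}}$, starting from the initial collection $\indatast 0=\{\bsl A,\bsl B\}$. I would perform the splits left-to-right within each list, so each call $\Bslfsplx{\intmax{A_i}}$ acts on a BSL whose left piece $\beforeop{S_k}$ is exactly the finished segment $\intl A_i$ in isolation, carrying its pre-Union weights (Phase~III has not yet run). The initial-potential term in Lemma~\ref{lem:wccostofbslfsplit} is $\mathcal O(\intheight{A}+\intheight{B})$; by Lemma~\ref{lem:height} combined with $\intweight{A},\intweight{B}\le 2n$, this is $\mathcal O(\log n)$, already accounting for the $\log n$ summand in the claimed bound.

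Next I would bound each per-split contribution
\[
\min(\intheight{\beforeop{S_k}},\intheight{\beforeop{T_k}})-\min(\noderank{\intmax{\beforeop{S_k}}},\noderank{\intmin{\beforeop{T_k}}})+1
\]
in terms of pre-Union weights. Since $\beforeop{S_k}=\intl A_i$, Lemma~\ref{lem:height} gives $\intheight{\beforeop{S_k}}\le\log\intweight{A_i}$; and since $\noderank x=\lfloor\log_2\nodeweight x\rfloor$, I get $\min(\noderank{\intmax{A_i}},\noderank{\intmin{A_{i+1}}})\ge\log_2\min(\nodeweight{\intmax{A_i}},\nodeweight{\intmin{A_{i+1}}})-1$. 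Thus the per-split contribution is at most $\log\intweight{A_i}-\log\min(\nodeweight{\intmax{A_i}},\nodeweight{\intmin{A_{i+1}}})+\mathcal O(1)$.

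The crucial step is to convert this pre-Union quantity into a multiple of $\intfun{A_i}$, whose denominator uses post-Union weights. Observe that a Union only shrinks gaps adjacent to segment boundaries, so $\nodenewweight x\le\nodeweight x$ for every boundary node $x$, while $\intweight{A_i}$ is always dominated by the numerator of $\intfun{A_i}$. If the rank-minimum is attained at $\intmax{A_i}$, then $\nodeweight{\intmax{A_i}}\ge\nodenewweight{\intmax{A_i}}$, and since $\nodenewweight{\intmax{A_i}}$ appears inside the $\min$ defining the denominator of $\intfun{A_i}$, the split contribution is $\mathcal O(\intfun{A_i})$. If instead the minimum is attained at $\intmin{A_{i+1}}$, the symmetric argument charges the cost to $\intfun{A_{i+1}}$, whose denominator contains $\nodenewweight{\intmin{A_{i+1}}}$. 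Each interior term $\intfun{A_k}$ is thus charged at most twice, giving the desired $\mathcal O(\sum_{i=2}^{\lasta-1}\intfun{A_i})$. The boundary splits at $i=1$ and $i=\lasta-1$ cost $\mathcal O(\log\intweight{A_i})=\mathcal O(\log n)$, absorbed into the first term; the $\bsl B$ splits are handled symmetrically.

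The main obstacle will be precisely this denominator mismatch: $\intfun{A_i}$ does not contain $\nodenewweight{\intmin{A_{i+1}}}$, so a given split cannot always be attributed to the $\intfun{}$ term of the segment being extracted and must sometimes be pushed forward to the next segment. Verifying that this two-sided charging keeps every $\intfun{A_k}$ (and $\intfun{B_k}$) charged only $\mathcal O(1)$ times, and that the first and last splits in each list cleanly align with the $\log n$ term rather than leaking into the interior sum, is the one piece of bookkeeping that requires some care.
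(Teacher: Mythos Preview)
Your overall plan—apply Lemma~\ref{lem:wccostofbslfsplit} with $\indatast 0=\{\bsl A,\bsl B\}$, bound the initial-potential term by $\mathcal O(\log n)$ via Lemma~\ref{lem:height}, then bound each per-split contribution using $\intheight{\beforeop{S_k}}\le\log\intweight{A_i}$—matches the paper exactly.

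The gap is in the forward-charging step. When the rank-minimum is attained at $\intmin{A_{i+1}}$, you propose to bound $\log\intweight{A_i}-\log\nodeweight{\intmin{A_{i+1}}}$ by $\intfun{A_{i+1}}$, relying on the fact that $\nodenewweight{\intmin{A_{i+1}}}\le\nodeweight{\intmin{A_{i+1}}}$ sits in the denominator of $\intfun{A_{i+1}}$. But the \emph{numerator} of $\intfun{A_{i+1}}$ is $\nodeweight{\intmax{A_i}}+\nodeweight{\intmin{A_{i+2}}}+\intweight{A_{i+1}}$, which need not dominate $\intweight{A_i}$: if the span of $A_i$ (which can be as large as $\bgap{i-1}$) or the gap $\agap{i-1}$ is large while $\agap i$ and $\intweight{A_{i+1}}$ are small, then $\intweight{A_i}$ exceeds that numerator by an arbitrary factor. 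So pushing the cost to $\intfun{A_{i+1}}$ alone does not work; you have traded a denominator mismatch for a numerator mismatch.

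The paper avoids the issue with a single observation: $\nodenewweight{\intmin{B_i}}<\nodeweight{\intmin{A_{i+1}}}$, because both post-merge gaps adjacent to $\intmin{B_i}$ lie inside the single pre-merge gap $\agap i$, so $\nodenewweight{\intmin{B_i}}\le\agap i<\nodeweight{\intmin{A_{i+1}}}$. Hence $\min(\noderank{\intmax{A_i}},\noderank{\intmin{A_{i+1}}})$ is bounded below by $\lfloor\log\min(\nodenewweight{\intmax{A_i}},\nodenewweight{\intmin{B_i}})\rfloor$, and since both $\nodenewweight{\intmax{A_i}}$ and $\nodenewweight{\intmin{B_i}}$ already appear in the denominator of $\intfun{A_i}$, every split at $\intmax{A_i}$ is absorbed directly into $\intfun{A_i}$ with no forward charging at all. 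Your route can be repaired by charging the Case-2 cost to $\intfun{A_i}+\intfun{A_{i+1}}$ (this works because the denominator of $\intfun{A_i}$ is at most $\nodenewweight{\intmax{A_i}}\le\nodeweight{\intmax{A_i}}$, a summand in the numerator of $\intfun{A_{i+1}}$), still leaving each $\intfun{A_k}$ charged $\mathcal O(1)$ times—but the paper's one-line inequality is cleaner.
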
 

\begin{proof} 
During Phase~II, \Bslfsplx{\intmax{A_i}} is invoked for $1\leq i< \lasta$, and \Bslfsplx{\intmax{B_j}} is invoked for $1\leq j< \lastb$. 
By Lemma~\ref{lem:wccostofbslfsplit}, 
where $L_{0} = \{A,B\}$,
and Lemma~\ref{lem:height}, the worst-case time complexity of Phase~II is 
\[ 
\mathcal O\left(\sum_{i=1}^{\lasta-1} (\log\intweight{A_i} - \min(\noderank{\intmax{A_i}},\noderank{\intmin{A_{i+1}}})+1)\right)+ 
\] 
\[\mathcal O\left(\sum_{j=1}^{\lastb-1} (\log\intweight{B_j} - \min(\noderank{\intmax{B_j}},\noderank{\intmin{B_{j+1}}})+1)\right)+
\mathcal O(\log n)\] 
Observe that $\nodenewweight{\intmin{B_i}}<\nodeweight{\intmin{A_{i+1}}}$ and $\nodenewweight{\intmin{A_{j+1}}}<\nodeweight{\intmin{B_{j+1}}}$. Then, by Lemma~\ref{lem:logncostofextended} and the definitions of \intfun{A_i} and \intfun{B_j}, the worst-case complexity of Phase~II is 
\[\mathcal O\left(\log n + \sum_{i=2}^{\lasta-1} \intfun{A_i} +\sum_{j=2}^{\lastb-1} \intfun{B_j}\right).\] 
\end{proof}

\begin{lemma} 
\label{lem:phasethree} 
Phase~III of the \Union{} operation has a worst-case time complexity of \[\mathcal O\left(\log n +\sum_{i=2}^{\lasta-1} \intfun{A_i}  +\sum_{j=2}^{\lastb-1} \intfun{B_j} \right).\] 
\end{lemma}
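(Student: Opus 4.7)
The plan is to enumerate the four families of \Bslfrew{} invocations listed at the end of Phase~III, apply Lemma~\ref{lem:costofbslfreweight} to each, and match the resulting sum against the definitions of $\intfun{A_i}$ and $\intfun{B_j}$. By Lemma~\ref{lem:costofbslfreweight}, a call $\Bslfrewx{f}{w}$ on a node $f$ of a (post-Phase-II) segment $S$ costs $\mathcal O(\max(\intheight{S},\nodenewrank f) - \min(\nodeheight f,\nodenewrank f)+1)$. I would then use Lemma~\ref{lem:height} to bound $\intheight{S} \leq \log \intweight{S}$, invariant~\invzero{} to bound $\nodeheight{f} \geq \noderank{f} = \lfloor \log \nodeweight{f}\rfloor$, and the definition $\nodenewrank{f} = \lfloor \log \nodenewweight{f}\rfloor$, so that each cost becomes $\mathcal O\bigl(\log \max(\intweight{S}, \nodenewweight{f}) - \log \min(\nodeweight{f}, \nodenewweight{f}) + 1\bigr)$. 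Since \Union{} only adds elements to the containing set, the adjacent gaps of every boundary node can only shrink, giving $\nodenewweight{f} \leq \nodeweight{f}$ and reducing the bound to $\mathcal O(\log(\intweight{S}/\nodenewweight{f}) + 1)$.

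Next I would group the costs by segment. Each non-boundary segment $A_i$ (with $1 < i < \lasta$) is touched by exactly two reweights, one on $\intmin{A_i}$ and one on $\intmax{A_i}$, both acting on a segment of weight $\intweight{A_i}$. Since the numerator of $\intfun{A_i}$ contains $\intweight{A_i}$ as one of its three summands, and the four-way minimum in the denominator of $\intfun{A_i}$ is at most both $\nodenewweight{\intmin{A_i}}$ and $\nodenewweight{\intmax{A_i}}$, each of the two reweight costs is $\mathcal O(\intfun{A_i})$, so their sum is $\mathcal O(\intfun{A_i})$. An entirely symmetric argument for $B_j$ with $1 < j < \lastb$ yields a per-segment cost of $\mathcal O(\intfun{B_j})$.

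For the four boundary reweights corresponding to $i \in \{1, \lasta\}$ and $j \in \{1, \lastb\}$, I would invoke Lemma~\ref{lem:logncostofextended} directly to bound each by $\mathcal O(\log n)$; since there are only a constant number of such boundary reweights, together they contribute the additive $\log n$ term in the claim.

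The main obstacle will be verifying carefully that the numerator and denominator of $\intfun{A_i}$ indeed bracket the corresponding weights for both reweight operations on $\intmin{A_i}$ and $\intmax{A_i}$, and likewise for $B_j$. This bookkeeping relies on the weighting scheme of Section~\ref{subsec:DSWeights} (a boundary node's new weight is the sum of its two adjacent gap sizes in the merged set), the monotonicity of gap sizes under \Union{}, and the deliberately asymmetric four-way minimum in the denominator of $\intfun{A_i}$, which was chosen precisely so that it reaches across the $A/B$ boundary to provide a simultaneous lower bound for the new weights of both $\intmin{A_i}$ and $\intmax{A_i}$. Once these inequalities are in place, the remainder is straightforward summation.
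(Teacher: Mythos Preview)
Your proposal is correct and matches the paper's proof: apply Lemmas~\ref{lem:costofbslfreweight} and~\ref{lem:height} to each reweight, use weight monotonicity to reduce the per-call cost to $\mathcal O(\log(\intweight{S}/\nodenewweight{f})+1)$, match against the definitions of $\intfun{A_i},\intfun{B_j}$, and absorb the constant number of boundary-segment reweights into the $\log n$ term via Lemma~\ref{lem:logncostofextended}. One small correction: your blanket claim that ``the adjacent gaps of every boundary node can only shrink'' fails for the two nodes $t_1=\intmin{B}$ and $t_2\in\{\intmax{A},\intmax{B}\}$, whose outer gap was the sentinel value $1$ before the merge and becomes a genuine (larger) gap afterward---the paper singles these two nodes out explicitly---but since both lie in your boundary segments and are already handled by the $\mathcal O(\log n)$ fallback, the argument goes through unchanged.
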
 

\begin{proof} 
During Phase~III, we invoke 

\begin{itemize} 
\item \Bslfrewx{\intmin{A_i}}{\nodenewweight{\intmin{A_i}}} \quad for  $2\leq i \leq \lasta$, 
\item \Bslfrewx{\intmax{A_i}}{\nodenewweight{\intmax{A_i}}} \quad for  $1\leq i \leq \lastb$, 
\item \Bslfrewx{\intmin{B_j}}{\nodenewweight{\intmin{B_j}}} \quad for  $1\leq j \leq \lastb$, 
\item \Bslfrewx{\intmax{B_j}}{\nodenewweight{\intmax{B_j}}} \quad for  $1\leq j < \lasta$. 
\end{itemize}

Let $t_1 = \intmin{B}$. If \intnum{A}{B} is even, let $t_2=\intmax{A}$, otherwise let $t_2 = \intmax{B}$. Observe that $\nodenewrank{x}\leq\noderank x\leq\nodeheight{x}$ if and only if $x\not\in \{t_1,t_2\}$. 
Then, by Lemma~\ref{lem:costofbslfreweight} and Lemma~\ref{lem:height}, \Bslfrewx{x}{\nodenewweight{x}} has a worst-case time complexity of 
\begin{itemize} 
\item ${\cal O}(\log\intweight{A_i} - \nodenewrank{\intmin{A_i}}+1)$ for all 
$x\in \{\intmin{A_i}\,|\,1\leq i\leq\lasta\}$, 
\item ${\cal O}(\log\intweight{A_i} - \nodenewrank{\intmax{A_i}}+1)$ for all 
$x\in \{\intmax{A_i}\,|\,1\leq i\leq\lasta\}\setminus\{t_2\}$, 
\item ${\cal O}(\log\intweight{B_j} - \nodenewrank{\intmin{B_j}}+1)$ for all 
$x\in \{\intmin{B_j}\,|\,1\leq j\leq\lastb\}\setminus\{t_1\}$, 
\item ${\cal O}(\log\intweight{B_j} - \nodenewrank{\intmax{B_j}}+1)$ for all 
$x\in \{\intmax{B_j}\,|\,1\leq j\leq\lastb\}\setminus\{t_2\}$, 
\end{itemize} 
and a worst-case time complexity of $\mathcal O(\log n)$ for $x\in\{t_1,t_2\}$. 

Therefore, by Lemma~\ref{lem:logncostofextended} and the definitions of \intfun{A_i} and \intfun{B_j}, the worst-case time complexity of Phase~III is $\mathcal O\left(\log n + \sum_{i=2}^{\lasta-1} \intfun{A_i}  +\sum_{j=2}^{\lastb-1} \intfun{B_j} \right).$ 

\end{proof}

\begin{lemma} 
\label{lem:phasefour} 
Phase~IV of the \Union{} operation has a worst-case time complexity of \[\mathcal O\left(\log n+\sum_{i=2}^{\lasta-1} \intfun{A_i} +\sum_{j=2}^{\lastb-1} \intfun{B_j} \right).\] 
\end{lemma}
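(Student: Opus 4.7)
The plan is to follow the same blueprint used for Phase~II in Lemma~\ref{lem:phasetwo}, but driven by the worst-case join bound (Lemma~\ref{lem:wccostofbslfjoin}) in place of the split bound. Phase~IV executes the sequence of \Bslfjoinx{\intmax{A_{i}}}{\intmin{B_i}} for $1\leq i \leq \lastb$ interleaved with \Bslfjoinx{\intmax{B_{i}}}{\intmin{A_{i+1}}} for $1\leq i < \lasta$, applied to the collection of segment biased skip lists $\initbsl 1,\ldots,\initbsl m$ that Phase~II produced and Phase~III reweighted. So I would sum the cost of all these joins in one application of Lemma~\ref{lem:wccostofbslfjoin}, obtaining a per-join term plus an initial-potential term accounting for the input configuration.

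Next I would bound the per-join terms. For each join between adjacent segments $S_k$ and $T_k$, Lemma~\ref{lem:height} converts $\min(\intheight{S_k},\intheight{T_k})$ into $\log\min(\setweight{S_k},\setweight{T_k})$, and the numerator naturally matches $\nodeweight{\intmax{A_{i-1}}}+\nodeweight{\intmin{A_{i+1}}}+\sum_{x\in\intl A_i}\nodeweight x$ (or the $B$-analog) because each segment carries precisely those nodes and the two updated endpoints adjacent to it. Since Phase~III has already installed the new weights on the minimum and maximum of each segment, the subtracted $\min(\nodeheight{\intmax{S_k}},\nodeheight{\intmin{T_k}})$ is bounded below by the new rank of one of $\{\intmax{B_{i-1}},\intmin{A_i},\intmax{A_i},\intmin{B_i}\}$ (respectively $\{\intmax{A_j},\intmin{B_j},\intmax{B_j},\intmin{A_{j+1}}\}$), which is exactly the denominator in the definition of $\intfun{A_i}$ (respectively $\intfun{B_j}$). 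Hence each interior per-join term is absorbed into the corresponding $\intfun{A_i}$ or $\intfun{B_j}$ in the sum. The joins that touch the boundary segments $A_1,B_1,A_{\lasta},B_{\lastb}$ are handled by Lemma~\ref{lem:logncostofextended}, contributing $\mathcal O(\log n)$ in total, consistent with the convention $\intfun{A_1}=\intfun{B_1}=\intfun{A_{\lasta}}=\intfun{B_{\lastb}}=\log n$.

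Finally, the initial-potential term from Lemma~\ref{lem:wccostofbslfjoin} is $\mathcal O\!\left(\sum_i(\intheight{\initbsl i}-\min(\nodeheight{\intmin{\initbsl i}},\nodeheight{\intmax{\initbsl i}})+1)\right)$; applying Lemma~\ref{lem:height} once more and noting that the total weight across all segment BSLs is at most $2n$ shows this piece is $\mathcal O(\log n)$, which folds into the leading $\log n$ in the claimed bound. The main obstacle I anticipate is the bookkeeping to verify that, for each interior join, the rank of the touched endpoint after Phase~III is indeed one of the four weights appearing in the corresponding $\intfun{A_i}$ or $\intfun{B_j}$ denominator; this requires tracking which endpoints have already been reweighted and which still have their pre-Union weight at the moment the join fires, and exploiting the inequality $\nodenewweight{\intmin{B_i}}<\nodeweight{\intmax{A_i}}$ used similarly in Lemma~\ref{lem:phasetwo} to match up pre- and post-update weights.
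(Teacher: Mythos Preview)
Your overall blueprint matches the paper's: invoke Lemma~\ref{lem:wccostofbslfjoin} with $\indatast 0=\{A_1,B_1,A_2,B_2,\ldots\}$, convert heights to $\log W$ via Lemma~\ref{lem:height}, absorb interior terms into the $F(A_i),F(B_j)$, and charge boundary terms to $\mathcal O(\log n)$ via Lemma~\ref{lem:logncostofextended}.

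There is, however, a genuine error in your treatment of the initial-potential term
\[
\sum_i\bigl(\intheight{\initbsl i}-\min(\nodeheight{\intmin{\initbsl i}},\nodeheight{\intmax{\initbsl i}})+1\bigr).
\]
You claim that ``total weight across all segment BSLs is at most $2n$'' implies this sum is $\mathcal O(\log n)$. That does not follow: Lemma~\ref{lem:height} gives $\intheight{\initbsl i}\leq \log W(\initbsl i)$ term-by-term, but $\sum_i \log W(\initbsl i)$ is not bounded by $\log\sum_i W(\initbsl i)$; with $\Theta(\log n)$ segments of comparable weight the sum is $\Theta(\log^2 n)$, and in general it can be as large as $\Theta(|I(A,B)|\log n)$. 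So this piece cannot be folded into the leading $\log n$.

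The fix, and what the paper actually does, is to absorb the initial-potential summand for each segment into the corresponding $F$ term rather than into $\log n$. Concretely, combine the per-join term for the join ending at $A_i$ with the initial-potential term for $A_i$; both are bounded above by $\log W(A_i)-\min\bigl(\noderank{\intmax{B_{i-1}}},\noderank{\intmin{A_i}},\noderank{\intmax{A_i}}\bigr)+1$, yielding the three-way $\min$ the paper displays, which is then dominated by $F(A_i)$ (whose denominator is a four-way $\min$ over the new weights). With that correction your argument goes through and coincides with the paper's.
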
 

\begin{proof} 

During Phase~IV, we invoke \Bslfjoinx{\intmax{A_{i}}}{\intmin{B_i}} for $1\leq i \leq \lastb$ and we invoke \Bslfjoinx{\intmax{B_{i}}}{\intmin{A_{i+1}}} for $1\leq i<z$. 
By Lemma~\ref{lem:wccostofbslfjoin}, where $L_{0}=\{A_{1},B_{1},A_{2},B_{2},\ldots\}$, and Lemma~\ref{lem:height}, the worst-case time complexity of Phase~IV is 
\[ 
\mathcal O\left(\sum_{i=2}^{\lasta} (\log\intweight{A_i} - \min(\noderank{\intmax{B_{i-1}}},\noderank{\intmin{A_i}},\noderank{\intmax{A_{i}}})+1)\right)+ 
\] 
\[\mathcal O\left(\sum_{j=1}^{\lastb} (\log\intweight{B_j} - \min(\noderank{\intmax{A_j}},\noderank{\intmin{B_j}},\noderank{\intmax{B_{j}}})+1)\right)
\] 
which by Lemma~\ref{lem:logncostofextended} and the definitions of \intfun{A_i} and \intfun{B_j} yields 
\[\mathcal O\left(\log n + \sum_{i=2}^{\lasta-1} \intfun{A_i} +\sum_{j=2}^{\lastb-1} \intfun{B_j}\right).\] 
\end{proof} 

The next theorem bounds the worst-case time complexity of the \Unionx{A}{B} operation.

\begin{thm} 
\label{thm:unionworstcase} 
The \Unionx{A}{B} operation has a worst-case time complexity of 
\[{\cal O}\left(\log n + \sum_{i=2}^{\lasta-1}\intfun{A_i} + \sum_{j=2}^{\lastb-1}\intfun{B_j} \right).\] 
\end{thm}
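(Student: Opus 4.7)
The plan is to assemble the theorem directly from the four phase lemmas that were just proved. The \Unionx{A}{B} operation is executed as the concatenation of Phase~I (finding the \segments{}), Phase~II (extracting the \segments{} via \Bslfspl{}), Phase~III (updating weights via \Bslfrew{}), and Phase~IV (gluing the \segments{} via \Bslfjoin{}). Since the worst-case running time of the entire operation is at most the sum of the worst-case running times of its phases, summing the bounds provided by Lemmas~\ref{lem:phaseone}, \ref{lem:phasetwo}, \ref{lem:phasethree}, and \ref{lem:phasefour} will yield the desired bound.

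Each of these four lemmas gives the same bound, namely
\[
\mathcal O\!\left(\log n + \sum_{i=2}^{\lasta-1}\intfun{A_i} + \sum_{j=2}^{\lastb-1}\intfun{B_j}\right),
\]
so adding them together absorbs the four occurrences into one using the standard fact that a constant number of identical $\mathcal O(\cdot)$ terms remain the same $\mathcal O(\cdot)$. This matches the statement of the theorem exactly, so no further bookkeeping is required.

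There is essentially no hard step; the work has already been done in Lemmas~\ref{lem:phaseone}--\ref{lem:phasefour}, each of which handled its own phase by combining Lemma~\ref{lem:logncostofextended} (the $\mathcal O(\log n)$ bound on individual extended operations), Lemma~\ref{lem:height} (converting height bounds into weight bounds so that they line up with the definition of \intfun{A_i} and \intfun{B_j}), and the appropriate worst-case cost lemmas (\ref{lem:wccostofbslfsplit} and \ref{lem:wccostofbslfjoin}) for the sequences of \Bslfspl{} and \Bslfjoin{} operations. The only minor point to mention in the proof is that the four phases are executed sequentially, so their worst-case costs are additive; nothing in the data structure state between phases can make the summed bound tighter or looser than the algebraic sum of the four phase bounds.
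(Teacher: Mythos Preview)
Your proposal is correct and mirrors the paper's own proof exactly: the paper simply states that the worst-case cost of \Unionx{A}{B} is the sum of the costs of its four phases and invokes Lemmas~\ref{lem:phaseone}--\ref{lem:phasefour}. Your additional commentary about how those lemmas were obtained is accurate but not needed for the proof itself.
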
 
\begin{proof} 
The worst-case time complexity of the \Unionx{A}{B} operation is determined by the time it spends on each of the four phases. 
Therefore, by Lemmas \ref{lem:phaseone}, \ref{lem:phasetwo}, \ref{lem:phasethree}, and \ref{lem:phasefour}, the theorem follows.

\end{proof}

\subsubsection{Amortized Time Complexity} 
\label{subsubsec:AnalysisUnionAmortized} 

Before we can show that the amortized time complexity of the \Unionx{A}{B} operation is ${\cal O}(\log n)$, we will need to prove 3 lemmas. Let us first define the potential loss associated with a gap. Recall the definitions of gaps $\agap{i}, \leftofa i, \rightofa i$ and similarly $ \bgap{j}, \leftofb j, \rightofb j$ first defined in Section~\ref{subsec:LogSQBound}. 

\begin{defn} 
We define \potlossa i and \potlossb j, the potential loss associated with gap \agap i and the potential loss associated with gap \bgap i respectively, for $1\leq i<z$ and $1\leq j<v$, as follows:

\begin{align} 
\label{eq:potloss} 
\potlossa i &= 2\log\agap i - \log \leftofa i - \log \rightofa i\\ 
\potlossb j &= 2\log\bgap j - \log \leftofb j - \log \rightofb j 
\end{align} 
Assume w.l.o.g.~that $\min(A) < \min(B)$. Then we also let $\potlossa 0 = 0$ and 
\begin{equation} 
\label{eq:potlossboundarybeginning} 
\potlossb 0 = -\log \leftofa 1. 
\end{equation} 
If $\max(A) >\max(B)$, then we have $\potlossa{\lasta} = 0$ and 
\begin{equation} 
\label{eq:potlossboundaryending1} 
\potlossb{\lastb} = -\log \rightofa{\lasta-1}. 
\end{equation} 
Otherwise, if $\max(A) <\max(B)$, we have $\potlossb{\lastb} = 0$ and 
\begin{equation} 
\label{eq:potlossboundaryending2} 
\potlossa{\lasta} = -\log \rightofb{\lastb-1}. 
\end{equation} 

\end{defn} 

Note that the potential loss associated with operation \Unionx{A}{B} is \cnsd{} times the sum of all defined \potlossa i and \potlossb j, where \cnsd{} is the constant in the potential function.

\begin{lemma} 
\label{lem:gapratio} 
Consider gap \agap i for any $1\leq i< \lasta$. Let $\maxgap{\agap i} = \max(\leftofa i, \rightofa i)$ and $\mingap{\agap i} = \min(\leftofa i, \rightofa i)$. Then we have 
\[ 
2^{-\potlossa i} \leq\frac{\agap i}{\maxgap{\agap i}} \leq \frac{\agap i}{\mingap{\agap i}} \leq 2^{\potlossa i}\quad \text{and} \quad 2^{-\potlossa i} \leq \frac{\maxgap{\agap i}}{\mingap{\agap i}}\leq 2^{\potlossa i}. 
\] 
\noindent Similarly, consider gap \bgap j for any $1\leq j< \lastb$, where $\maxgap{\bgap j} = \max(\leftofb j, \rightofb j)$ and $\mingap{\bgap j} = \min(\leftofb j, \rightofb j)$. Then we have 
\[ 
2^{-\potlossb j} \leq\frac{\bgap j}{\maxgap{\bgap j}} \leq \frac{\bgap j}{\mingap{\bgap j}} \leq 2^{\potlossb j}\quad \text{and} \quad 2^{-\potlossb j} \leq \frac{\maxgap{\bgap j}}{\mingap{\bgap j}}\leq 2^{\potlossb j}. 
\] 
\end{lemma}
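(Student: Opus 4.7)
The plan is to reduce both chains of inequalities to a single geometric fact about how the three gaps sit inside one another, and then to do pure algebra. By definition
\[
2^{\potlossa i} = \frac{\agap i^{\,2}}{\leftofa i \cdot \rightofa i} = \frac{\agap i^{\,2}}{\maxgap{\agap i}\cdot\mingap{\agap i}},
\]
so the claim to be proved is equivalent to the four inequalities
\[
\frac{\maxgap{\agap i}\cdot\mingap{\agap i}}{\agap i^{\,2}}\;\leq\;\frac{\agap i}{\maxgap{\agap i}}
\;\leq\;\frac{\agap i}{\mingap{\agap i}}\;\leq\;\frac{\agap i^{\,2}}{\maxgap{\agap i}\cdot\mingap{\agap i}},
\]
together with the analogous pair for $\maxgap{\agap i}/\mingap{\agap i}$.

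First I would establish the key containment inequality $\leftofa i + \rightofa i \leq \agap i$. This is immediate from the definitions in Section~\ref{subsec:LogSQBound}: the gap $\agap i$ runs from $\intmax{A_i}$ to $\intmin{A_{i+1}}$ in the universe, while the entire segment $\intl B_i$ sits strictly between $\intmax{A_i}$ and $\intmin{A_{i+1}}$, and $\leftofa i$, $\rightofa i$ are just the two sub-gaps flanking $\intl B_i$ inside $\agap i$. In particular $\maxgap{\agap i}\leq \agap i$ and $\mingap{\agap i}\leq \agap i$.

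Given these two basic size bounds, every inequality in the chain reduces to a single easy comparison. For the upper bounds, $\frac{\agap i}{\mingap{\agap i}} \leq \frac{\agap i^{\,2}}{\maxgap{\agap i}\cdot\mingap{\agap i}}$ is exactly $\maxgap{\agap i}\leq \agap i$, and $\frac{\maxgap{\agap i}}{\mingap{\agap i}}\leq \frac{\agap i^{\,2}}{\maxgap{\agap i}\cdot\mingap{\agap i}}$ is $\maxgap{\agap i}^{\,2}\leq \agap i^{\,2}$. For the lower bounds, $\frac{\maxgap{\agap i}\cdot\mingap{\agap i}}{\agap i^{\,2}}\leq \frac{\agap i}{\maxgap{\agap i}}$ is $\maxgap{\agap i}^{\,2}\cdot\mingap{\agap i}\leq \agap i^{\,3}$, which holds since both $\maxgap{\agap i},\mingap{\agap i}\leq \agap i$; and similarly $\frac{\maxgap{\agap i}\cdot\mingap{\agap i}}{\agap i^{\,2}}\leq \frac{\maxgap{\agap i}}{\mingap{\agap i}}$ is $\mingap{\agap i}^{\,2}\leq \agap i^{\,2}$. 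The middle inequality $\frac{\agap i}{\maxgap{\agap i}}\leq\frac{\agap i}{\mingap{\agap i}}$ is just the definition of min and max. The $\bgap j$ statement is proved identically, swapping the roles of $A$ and $B$.

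The only place where any thought is required is verifying $\leftofa i+\rightofa i\leq \agap i$; this is not so much an obstacle as a bookkeeping point, and once it is in place the rest is mechanical. Note that the boundary cases where $\potlossa 0$, $\potlossb 0$, $\potlossa{\lasta}$, or $\potlossb{\lastb}$ are defined separately do not appear in the statement, which is restricted to $1\leq i<\lasta$ and $1\leq j<\lastb$, so no special handling of edge indices is needed.
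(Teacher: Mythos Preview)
Your proof is correct and essentially the same as the paper's: both arguments rest on the single geometric fact that $\leftofa i,\rightofa i\leq \agap i$ (which the paper also derives from $\leftofa i+\rightofa i\leq \agap i$), after which everything is elementary algebra. The only cosmetic difference is that the paper works additively with the identity $\log\frac{\agap i}{\mingap{\agap i}}+\log\frac{\agap i}{\maxgap{\agap i}}=\potlossa i$ and uses nonnegativity of each summand, whereas you rewrite $2^{\potlossa i}$ multiplicatively and cross-multiply each inequality directly.
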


\begin{proof} 
We will prove the first part dealing with gap \agap i. The proof of the second part is symmetric. Assume w.l.o.g.~that $\leftofa i \leq \rightofa i$. By definition, we have
\begin{equation*} 
\label{eq:gapratioupperbound} 
\frac{\agap i}{\maxgap{\agap i}} \leq \frac{\agap i}{\mingap{\agap i}}.
\end{equation*} 
Since we have $2\log\agap i -\log \maxgap{\agap i}- \log \mingap{\agap i} = \potlossa i$ this yields
\begin{align*} 
\log\frac{\agap i}{\mingap{\agap i}} +\log\frac{\agap i}{\maxgap{\agap i}} &= \potlossa i\\
\log\frac{\agap i}{\mingap{\agap i}} &\leq \potlossa i\\
\frac{\agap i}{\mingap{\agap i}} &\leq 2^{\potlossa i}.
\end{align*} 
On the other hand, since $\mingap{\agap i}\leq\maxgap{\agap i}\leq \agap i$, we have
\begin{align*} 
\frac{\agap i}{\maxgap{\agap i}}&\leq 2^{\potlossa i}\\ 
\frac{1}{2^{\potlossa i}} &\leq \frac{\maxgap{\agap i}}{\agap i}\\ 
\frac{1}{2^{\potlossa i}} &\leq \frac{\maxgap{\agap i}}{\agap i} \leq \frac{\agap i}{\maxgap{\agap i}}
\end{align*} 
Lastly, observe that 
\begin{align*} 
\frac{\maxgap{\agap i}}{\mingap{\agap i}} \leq \frac{\agap i}{\mingap{\agap i}} \leq 2^{\potlossa i}
\end{align*} 
and 
\begin{align*} 
\frac{1}{2^{\potlossa i}} \leq \frac{\maxgap{\agap i}}{\agap i} &\leq \frac{\maxgap{\agap i}}{\mingap{\agap i}}
\end{align*} 
This concludes the proof. Second part of the lemma can be proven using symmetric arguments. 

\end{proof}

\begin{lemma} 
\label{lem:costeqpotloss} 
Let $\intsubs{A}{B} = \{\intl A_1, \intl B_1,\intl A_2, \intl B_2, \ldots\}$ be the set of \segments{} of  with respect to operation \Unionx{A}{B}. For any $i$ and $j$, where $1< i <\lasta$ and $1< j < \lastb$, let 
\[\maxpotlossa i = \max\left(2^{\potlossa{i-2}}, 2^{\potlossb{i-2}},2^{\potlossa{i-1}}, 2^{\potlossb{i-1}}, 2^{\potlossa i},2^{\potlossb i}, 2^{\potlossa{i+1}}\right)\] 
and 
\[\maxpotlossb j = \max\left(2^{\potlossb{j-2}},2^{\potlossa{j-1}}, 2^{\potlossb{j-1}}, 2^{\potlossa j},2^{\potlossb j}, 2^{\potlossa{j+1}},2^{\potlossb{j+1}}\right).\] 
Then for $1< i <\lasta$ and $1< j < \lastb$, we have 
\[\intfun{A_i} = \mathcal O(\log \maxpotlossa i)\quad \mbox{ and } \quad \intfun{B_j} = \mathcal O(\log \maxpotlossb j).\] 
\end{lemma}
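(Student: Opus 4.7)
The plan is to expand the numerator $N_i$ and denominator $D_i$ of $\intfun{A_i}$ into their constituent universe gaps, then show $\log(N_i/D_i)=O(\log\maxpotlossa i)$ by bounding each summand of $N_i$ divided by $D_i$ by a power of two drawn from a short chain of Lemma~\ref{lem:gapratio} applications. Since every weight $\nodeweight y$ is a sum of the two gaps adjacent to $y$ in its set, $N_i$ is a bounded-coefficient sum of gaps of $A$: namely $\agap{i-1}$ and $\agap i$ (each with coefficient at most $2$), every internal gap of $A_i$ (coefficient $2$), and one gap apiece at the extreme ends of $A_{i-1}$ and $A_{i+1}$---either an internal gap of that segment, or $\agap{i-2}$ or $\agap{i+1}$ in the singleton cases $|A_{i-1}|=1$ or $|A_{i+1}|=1$. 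For the denominator, each of the four new weights in the $\min$ is itself a sum of two gaps in the union: $\nodenewweight{\intmax{B_{i-1}}}$ and $\nodenewweight{\intmin{A_i}}$ both contain $\rightofa{i-1}$, while $\nodenewweight{\intmax{A_i}}$ and $\nodenewweight{\intmin{B_i}}$ both contain $\leftofa i$, so $D_i\geq\min(\rightofa{i-1},\leftofa i)$.

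The key geometric observation is that the universe span of $A_i$ equals $\bgap{i-1}-\rightofa{i-1}-\leftofa i$, so the internal gaps of $A_i$ sum to at most $\bgap{i-1}$; analogously the span of $A_{i-1}$ (resp.\ $A_{i+1}$) is at most $\bgap{i-2}$ (resp.\ $\bgap i$), bounding every internal gap of the neighbor segment by that $b$-gap. Combining these span bounds with Lemma~\ref{lem:gapratio}---which yields $\agap k\leq 2^{\potlossa k}\min(\leftofa k,\rightofa k)$, $\max(\leftofa k,\rightofa k)/\min(\leftofa k,\rightofa k)\leq 2^{\potlossa k}$, and symmetric statements for $\bgap k$---together with the identifications $\leftofb{i-1}=\rightofa{i-1}$ and $\rightofb{i-1}=\leftofa i$, each summand of $N_i/D_i$ walks through a chain of length $O(1)$ down to $2$ raised to a sum of at most four potential losses drawn from the seven appearing in $\maxpotlossa i$. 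Sample bounds: $\agap{i-1}/D_i\leq 2^{\potlossa{i-1}+\potlossb{i-1}}$; each internal gap of $A_i$ divided by $D_i$ is at most $\bgap{i-1}/D_i\leq 2^{\potlossb{i-1}}$; an internal gap of $A_{i-1}$ divided by $D_i$ is at most $\bgap{i-2}/D_i\leq 2^{\potlossb{i-2}+\potlossa{i-1}+\potlossb{i-1}}$; and in the singleton case $\agap{i-2}/D_i\leq 2^{\potlossa{i-2}+\potlossb{i-2}+\potlossa{i-1}+\potlossb{i-1}}$.

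In every case the exponent is $O(\log\maxpotlossa i)$; after summing the $O(1)$ summands one still has $N_i/D_i\leq c\cdot 2^{O(\log\maxpotlossa i)}$, giving $\intfun{A_i}=O(\log\maxpotlossa i)$. The bound for $\intfun{B_j}$ is entirely symmetric, the mirror-image definition of $\maxpotlossb j$ reflecting that the outer $B$-neighbors of $B_j$ are spanned by $a$-gaps rather than $b$-gaps. I expect the main obstacle to be the bookkeeping: the degenerate singleton cases (where an ``internal'' gap collapses onto $\agap{i-2}$ or $\agap{i+1}$) and the ``cross-bridge'' steps that convert between $\rightofa{i-1}$ and $\leftofa i$ via $\bgap{i-1}$ must be handled uniformly, and for every summand one must verify that each potential loss invoked along its chain lies within the seven-element window prescribed by $\maxpotlossa i$ rather than at an index further out.
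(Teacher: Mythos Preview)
Your proposal is correct and follows essentially the same argument as the paper. Both proofs (i) decompose the numerator of $\intfun{A_i}$ into a bounded-coefficient sum of the gaps $\agap{i-2},\agap{i-1},\agap i,\agap{i+1},\bgap{i-2},\bgap{i-1},\bgap i$ via the span observation $\sum_{x\in A_i}\nodeweight x\le \agap{i-1}+2\bgap{i-1}+\agap i$ (and its neighbors' analogues), (ii) lower-bound the denominator by $\min(\rightofa{i-1},\leftofa i)=\min(\leftofb{i-1},\rightofb{i-1})$, and (iii) chain Lemma~\ref{lem:gapratio} through the identifications $\rightofa{k}=\leftofb{k}$, $\leftofa{k}=\rightofb{k-1}$ to bound each ratio by a power of $\maxpotlossa i$. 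The only cosmetic difference is that the paper fixes the anchor $x=\rightofa{i-1}$ and sandwiches every gap as $x/\maxpotlossa i^k\le\text{gap}\le x\,\maxpotlossa i^k$, whereas you bound each term-over-denominator ratio directly; the chains and the seven-element window are the same.
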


\begin{proof}

We will present the proof of the first equality. The proof of the second one is analogous. Let $\rightof{a_{i-1}} = \leftof{b_{i-1}} = x$. Then, by Lemma~\ref{lem:gapratio} we have 

\[ 
\frac{1}{2^{\potlossa{i-1}}} \leq \frac{\leftof{a_{i-1}}}{\rightof{a_{i-1}}} \leq 2^{\potlossa{i-1}} 
\quad \mbox{ and }\quad 
\frac{1}{2^{\potlossb{i-1}}} \leq \frac{\rightof{b_{i-1}}}{\leftof{b_{i-1}}} \leq 2^{\potlossb{i-1}} 
\] 
which imply 
\begin{equation} 
\label{eq:robim0}
\frac{x}{\maxpotlossa i}\leq\frac{x}{2^{\potlossa{i-1}}} \leq \leftof{a_{i-1}} \leq x2^{\potlossa{i-1}} \leq x \maxpotlossa i 
\end{equation} 
and 
\begin{equation} 
\label{eq:robim1} 
\frac{x}{\maxpotlossa i}\leq\frac{x}{2^{\potlossb{i-1}}} \leq \rightof{b_{i-1}} \leq x2^{\potlossb{i-1}} \leq x \maxpotlossa i 
\end{equation} 
Note that \leftof{a_i} = \rightof{b_{i-1}}, and \rightof{b_{i-2}} = \leftof{a_{i-1}}. By Lemma~\ref{lem:gapratio}, we have 
\begin{equation*}
\frac{1}{2^{\potlossa{i}}} \leq \frac{\rightof{a_{i}}}{\leftof{a_{i}}} \leq 2^{\potlossa{i}} 
\quad \mbox{ and }\quad 
\frac{1}{2^{\potlossb{i-2}}} \leq \frac{\leftof{b_{i-2}}}{\rightof{b_{i-2}}} \leq 2^{\potlossb{i-2}} 
\end{equation*}
which by (\ref{eq:robim0}) and (\ref{eq:robim1}) imply 
\begin{equation} 
\label{eq:robim2}
\frac{x}{\maxpotlossa i^2}\leq\frac{1}{2^{\potlossa{i}}}\cdot\frac{x}{\maxpotlossa i}\leq \rightof{a_{i}} \leq 2^{\potlossa{i}} \cdot x\maxpotlossa i  \leq x \maxpotlossa i^2 
\end{equation} 
and 
\begin{equation}
\label{eq:robim3} 
\frac{x}{\maxpotlossa i^2}\leq\frac{1}{2^{\potlossb{i-2}}}\cdot\frac{x}{\maxpotlossa i}\leq \leftof{b_{i-2}} \leq 2^{\potlossb{i-2}} \cdot x\maxpotlossa i  \leq x \maxpotlossa i^2 
\end{equation} 
Note that \rightof{a_{i-2}} = \leftof{b_{i-2}}, and \leftof{b_{i}} = \rightof{a_{i}}. By Lemma~\ref{lem:gapratio}, we have 
\[ 
\frac{1}{2^{\potlossa{i-2}}} \leq \frac{\leftof{a_{i-2}}}{\rightof{a_{i-2}}} \leq 2^{\potlossa{i-2}} 
\quad \mbox{ and }\quad 
\frac{1}{2^{\potlossb{i}}} \leq \frac{\rightof{b_{i}}}{\leftof{b_{i}}} \leq 2^{\potlossb{i}} 
\] 
which by (\ref{eq:robim2}) and (\ref{eq:robim3}) imply 
\begin{equation} 
\label{eq:robim4}
\frac{x}{\maxpotlossa i^3}\leq\frac{1}{2^{\potlossa{i-2}}}\cdot\frac{x}{\maxpotlossa i^2} \leq \leftof{a_{i-2}} \leq 2^{\potlossa{i-2}}\cdot x\maxpotlossa i^2 \leq x \maxpotlossa i^3 
\end{equation} 
and 
\begin{equation}
\label{eq:robim5} 
\frac{x}{\maxpotlossa i^3}\leq\frac{1}{2^{\potlossb{i}}}\cdot\frac{x}{\maxpotlossa i^2} \leq \rightof{b_{i}} \leq 2^{\potlossb{i}}\cdot x\maxpotlossa i^2 \leq x \maxpotlossa i^3 
\end{equation} 
Note that \leftof{a_{i+1}} = \rightof{b_{i}}. By Lemma~\ref{lem:gapratio}, we have 
\[ 
\frac{1}{2^{\potlossa{i+1}}} \leq \frac{\rightof{a_{i+1}}}{\leftof{a_{i+1}}} \leq 2^{\potlossa{i+1}} 
\] 
which by (\ref{eq:robim4}) and (\ref{eq:robim5}) implies 
\begin{equation} 
\label{eq:robim6}
\frac{x}{\maxpotlossa i^4}\leq\frac{1}{2^{\potlossa{i+1}}}\cdot\frac{x}{\maxpotlossa i^3}\leq \rightof{a_{i+1}} \leq 2^{\potlossa{i+1}} \cdot x\maxpotlossa i^3  \leq x \maxpotlossa i^4 
\end{equation} 
Similarly, by Lemma~\ref{lem:gapratio}, we have 
\[ 
\frac{1}{2^{\potlossa{i-1}}} \leq \frac{a_{i-1}}{\rightof{a_{i-1}}} \leq 2^{\potlossa{i-1}} 
\quad \mbox{ and }\quad 
\frac{1}{2^{\potlossb{i-1}}} \leq \frac{b_{i-1}}{\leftof{b_{i-1}}} \leq 2^{\potlossb{i-1}} 
\] 
which imply 
\begin{equation} 
\label{eq:aim1} 
\frac{x}{\maxpotlossa i}\leq\frac{x}{2^{\potlossa{i-1}}} \leq a_{i-1} \leq x2^{\potlossa{i-1}} \leq x \maxpotlossa i 
\end{equation} 
and 
\begin{equation} 
\label{eq:bim1} 
\frac{x}{\maxpotlossa i}\leq\frac{x}{2^{\potlossb{i-1}}} \leq b_{i-1} \leq x2^{\potlossb{i-1}} \leq x \maxpotlossa i 
\end{equation} 
By Lemma~\ref{lem:gapratio}, we have 
\[ 
\frac{1}{2^{\potlossa{i}}} \leq \frac{a_{i}}{\leftof{a_{i}}} \leq 2^{\potlossa{i}} 
\quad \mbox{ and }\quad 
\frac{1}{2^{\potlossb{i-2}}} \leq \frac{b_{i-2}}{\rightof{b_{i-2}}} \leq 2^{\potlossb{i-2}} 
\] 
which by (\ref{eq:aim1}) and (\ref{eq:bim1}) imply 
\begin{equation} 
\label{eq:ai} 
\frac{x}{\maxpotlossa i^2}\leq\frac{1}{2^{\potlossa{i}}}\cdot\frac{x}{\maxpotlossa i}\leq a_{i} \leq 2^{\potlossa{i}} \cdot x\maxpotlossa i  \leq x \maxpotlossa i^2 
\end{equation} 
and 
\begin{equation} 
\label{eq:bi} 
\frac{x}{\maxpotlossa i^2}\leq\frac{1}{2^{\potlossb{i-2}}}\cdot\frac{x}{\maxpotlossa i}\leq b_{i-2} \leq 2^{\potlossb{i-2}} \cdot x\maxpotlossa i  \leq x \maxpotlossa i^2 
\end{equation} 
By Lemma~\ref{lem:gapratio}, we have 
\[ 
\frac{1}{2^{\potlossa{i-2}}} \leq \frac{a_{i-2}}{\rightof{a_{i-2}}} \leq 2^{\potlossa{i-2}} 
\quad \mbox{ and }\quad 
\frac{1}{2^{\potlossb{i}}} \leq \frac{b_{i}}{\leftof{b_{i}}} \leq 2^{\potlossb{i}} 
\] 
which by (\ref{eq:ai}) and (\ref{eq:bi}) imply 
\begin{equation} 
\label{eq:aim2} 
\frac{x}{\maxpotlossa i^3}\leq\frac{1}{2^{\potlossa{i-2}}}\cdot\frac{x}{\maxpotlossa i^2} \leq a_{i-2} \leq 2^{\potlossa{i-2}}\cdot x\maxpotlossa i^2 \leq x \maxpotlossa i^3 
\end{equation} 
and 
\begin{equation} 
\label{eq:bim2}
\frac{x}{\maxpotlossa i^3}\leq\frac{1}{2^{\potlossb{i}}}\cdot\frac{x}{\maxpotlossa i^2} \leq b_{i} \leq 2^{\potlossb{i}}\cdot x\maxpotlossa i^2 \leq x \maxpotlossa i^3 
\end{equation} 
By Lemma~\ref{lem:gapratio}, we have 
\[ 
\frac{1}{2^{\potlossa{i+1}}} \leq \frac{a_{i+1}}{\leftof{a_{i+1}}} \leq 2^{\potlossa{i+1}} 
\] 
which by (\ref{eq:aim2}) implies 
\begin{equation} 
\label{eq:aip1} 
\frac{x}{\maxpotlossa i^4}\leq\frac{1}{2^{\potlossa{i+1}}}\cdot\frac{x}{\maxpotlossa i^3}\leq a_{i+1} \leq 2^{\potlossa{i+1}} \cdot x\maxpotlossa i^3  \leq x \maxpotlossa i^4. 
\end{equation}

We proceed as follows. For $1<i<z$, we have 
{\allowdisplaybreaks 
\begin{align*} 
\intfun{A_i} 
&= 
\log
\frac
{\nodeweight{\intmax{A_{i-1}}}+\nodeweight{\intmin{A_{i+1}}} + \sum_{\node x\in \intl A_i}{\nodeweight x}}%
{\min(\nodenewweight{\intmax{{B_{i-1}}}}, \nodenewweight{\intmin{{A_i}}}, \nodenewweight{\intmax{{A_i}}}, \nodenewweight{\intmin{{B_{i}}}})}
\\ 
&\leq 
\log
\frac
{\gap a_{i-2} +\gap a_{i-1} + \gap b_{i-2}+\nodeweight{\intmin{A_{i+1}}} + \sum_{\node x\in \intl A_i}{\nodeweight x}}%
{\min(\nodenewweight{\intmax{{B_{i-1}}}}, \nodenewweight{\intmin{{A_i}}}, \nodenewweight{\intmax{{A_i}}}, \nodenewweight{\intmin{{B_{i}}}})}
\\
&\leq 
\log
\frac
{\gap a_{i-2} +\gap a_{i-1} +\gap a_i + \gap a_{i+1} + \gap b_{i-2}+ \gap b_{i}+ \sum_{\node x\in \intl A_i}{\nodeweight x}}%
{\min(\nodenewweight{\intmax{{B_{i-1}}}}, \nodenewweight{\intmin{{A_i}}}, \nodenewweight{\intmax{{A_i}}}, \nodenewweight{\intmin{{B_{i}}}})}
\\
&\leq 
\log
\frac{\gap a_{i-2} +\gap a_{i-1} +\gap a_i + \gap a_{i+1} + \gap b_{i-2}+ \gap b_{i} + \gap 2b_{i-1}}
{\min(\nodenewweight{\intmax{{B_{i-1}}}}, \nodenewweight{\intmin{{A_i}}}, \nodenewweight{\intmax{{A_i}}}, \nodenewweight{\intmin{{B_{i}}}})}
\\
&\leq 
\log
\frac{\gap a_{i-2} +\gap a_{i-1} +\gap a_i + \gap a_{i+1} + \gap b_{i-2}+ \gap b_{i}+ \gap 2b_{i-1}}
{\min(\leftof{b_{i-1}}, \rightof{a_{i-1}}, \leftof{a_i},\rightof{b_{i-1}})}
\\ 
&= 
\mathcal O\left(
\log\frac{x\maxpotlossa i^4}{\min(\leftof{b_{i-1}}, \rightof{a_{i-1}}, \leftof{a_i},\rightof{b_{i-1}})}
\right)
&&\mbox{by (\ref{eq:aim1})$-$(\ref{eq:aip1}).}
\\ 
&= \mathcal O\left(\log\frac{x\maxpotlossa i^4}{x/\maxpotlossa i}\right)&&\mbox{by (\ref{eq:robim1})}.\\ 
&= \mathcal O(\log \maxpotlossa i). 
\end{align*} 
} 
The proof of the second equality, $\intfun{B_j} = \mathcal O(\log \maxpotlossb j)$ for $1<j<\lastb$, is analogous. 

\end{proof}

\begin{lemma} 
\label{lem:maxmappingbound} 
For $1< i < \lasta$ and $1< j < \lastb$, we have 
\[ 
7\cdot\left(\sum_i \log 2^{\potlossa i}+ \sum_j \log 2^{\potlossb j}\right) > \left(\sum_i \log\maxpotlossa i + \sum_j \log\maxpotlossb j\right). 
\] 
\end{lemma}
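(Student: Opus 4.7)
The plan is to apply the trivial bound $\max(x_1,\ldots,x_7) \le x_1+\cdots+x_7$ (valid for non-negative reals) to each maximum on the right-hand side, and then to account carefully for how often each individual potential loss appears across the two sums on the left.

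First I would rewrite, using monotonicity of $\log$ and the definitions of $\maxpotlossa i$ and $\maxpotlossb j$,
\[\log \maxpotlossa i = \max\!\left(\potlossa{i-2},\, \potlossb{i-2},\, \potlossa{i-1},\, \potlossb{i-1},\, \potlossa i,\, \potlossb i,\, \potlossa{i+1}\right),\]
with a symmetric expression for $\log \maxpotlossb j$. I would then observe that for interior indices every potential loss appearing is non-negative: since $\leftofa i + \rightofa i \le \agap i$, AM--GM yields $\leftofa i \cdot \rightofa i \le (\agap i)^2/4$, so $\potlossa i \ge \log 4 = 2$, and symmetrically $\potlossb j \ge 2$.

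With non-negativity in hand, the trivial max/sum bound gives
\[\log \maxpotlossa i \;\le\; \sum_{k\in\{i-2,i-1,i,i+1\}} \potlossa k \;+\; \sum_{k\in\{i-2,i-1,i\}} \potlossb k,\]
and symmetrically
\[\log \maxpotlossb j \;\le\; \sum_{k\in\{j-1,j,j+1\}} \potlossa k \;+\; \sum_{k\in\{j-2,j-1,j,j+1\}} \potlossb k.\]
The key counting is now straightforward: summing $\log\maxpotlossa i$ over interior $i$, a fixed $\potlossa k$ appears for the four values $i\in\{k-1,k,k+1,k+2\}$ and a fixed $\potlossb k$ appears for the three values $i\in\{k,k+1,k+2\}$; by the mirror symmetry between the two maxes, summing $\log\maxpotlossb j$ contributes three copies of each $\potlossa k$ and four copies of each $\potlossb k$. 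Each $\potlossa k$ and each $\potlossb k$ is thus counted exactly $4+3=3+4=7$ times in total. Substituting and using $\log 2^{\potlossa i} = \potlossa i$ yields
\[\sum_i \log\maxpotlossa i \;+\; \sum_j \log\maxpotlossb j \;\le\; 7\sum_i \log 2^{\potlossa i} \;+\; 7\sum_j \log 2^{\potlossb j},\]
which is the stated inequality; strictness follows because every interior potential loss is at least $2>0$, so in any non-empty window the $\max$ is strictly dominated by the sum of its seven constituents.

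The main obstacle is purely bookkeeping: one must verify that the asymmetric index shifts (namely $\{-2,-1,0,+1\}$ on the $a$-index inside $\maxpotlossa i$ versus $\{-2,-1,0\}$ on the $b$-index, and the mirror pattern inside $\maxpotlossb j$) combine to give the matched multiplicity of $7$ for each potential loss. Boundary indices near $i\in\{2,\lasta-1\}$ and $j\in\{2,\lastb-1\}$ can only reduce the multiplicity of a few boundary terms on the right-hand side, and since those potlosses are themselves non-negative (with the possible exception of the explicitly boundary-valued $\potlossa 0,\potlossb 0,\potlossa{\lasta},\potlossb{\lastb}$), this can only strengthen the inequality.
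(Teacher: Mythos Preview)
Your proof is correct and follows essentially the same counting argument as the paper: each potential loss $\potlossa k$ (resp.\ $\potlossb k$) appears as an argument of at most seven of the $\maxpotlossa i$'s and $\maxpotlossb j$'s combined, yielding the factor $7$. The only difference is cosmetic: the paper phrases this as a direct charging argument (each $\log\maxpotlossa i$ equals one of its seven arguments, and each argument is shared by at most seven maxes), whereas you first bound $\max\le\text{sum}$ and then count multiplicities, which forces you to verify non-negativity of the interior potential losses---a step the paper's formulation does not need.
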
 

\begin{proof} 
Observe that a gap $\agap k$ can be mapped to at most seven times by unique $\maxpotlossa i$'s and $\maxpotlossb j$'s; namely only by $\maxpotlossa{k-1}, \maxpotlossb{k-1}, \maxpotlossa{k}, \maxpotlossb{k}, \maxpotlossa{k+1}, \maxpotlossb{k+1}, \maxpotlossa{k+2}$. Similarly, a gap \bgap{k} can be mapped to at most seven times by unique $\maxpotlossa i$'s and $\maxpotlossb j$'s; namely only by $\maxpotlossb{k-1}, \maxpotlossa{k}, \maxpotlossb{k}, \maxpotlossa{k+1}, \maxpotlossb{k+1}, \maxpotlossa{k+2}, \maxpotlossb{k+2}$. The lemma follows. 
\end{proof}

We are now ready to bound the amortized time complexity of the \Union{} operation.

\begin{thm} 
\label{thm:unionamortizedcase} 
The \Unionx{A}{B} operation has an amortized time complexity of ${\cal O}(\log n)$. 
\end{thm}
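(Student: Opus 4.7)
The plan is to bound the amortized cost $\amcost i = \actcost i + \potfun{\datast i} - \potfun{\datast{i-1}}$ by matching the worst-case bound of Theorem~\ref{thm:unionworstcase} against an element-by-element accounting of the potential drop, using Lemmas~\ref{lem:costeqpotloss} and~\ref{lem:maxmappingbound} as the bridge between the two sides.

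On the cost side, Theorem~\ref{thm:unionworstcase} gives $\actcost i \leq K_1\bigl(\log n + \sum_{i=2}^{\lasta-1}\intfun{A_i} + \sum_{j=2}^{\lastb-1}\intfun{B_j}\bigr)$ for some constant $K_1$. Applying Lemma~\ref{lem:costeqpotloss} term-by-term rewrites each $\intfun{A_i}$ as $\mathcal O(\log \maxpotlossa i)$ and each $\intfun{B_j}$ as $\mathcal O(\log \maxpotlossb j)$, and Lemma~\ref{lem:maxmappingbound} then bounds the resulting total by $7\,\mathrm{NBPL}$, where $\mathrm{NBPL} := \sum_{i=1}^{\lasta-1}\potlossa i + \sum_{j=1}^{\lastb-1}\potlossb j$. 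Hence $\actcost i \leq K_2(\log n + \mathrm{NBPL})$ for an absolute constant $K_2$.

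On the potential side, I will compute $\varphi(A\cup B) - \varphi(A) - \varphi(B)$ by tracking every element whose left- or right-gap contribution actually changes. Intra-\segment{} gaps survive the merge unchanged, so only the extrema of each \segment{} contribute. For each $1\leq i\leq \lasta-1$, the pair $\intmax{A_i}, \intmin{A_{i+1}}$ collectively shifts the potential by $\log \leftofa i + \log \rightofa i - 2\log \agap i = -\potlossa i$, and an analogous pair at each $\bgap j$-gap yields $-\potlossb j$. The only remaining changes occur at $\intmin B$ (whose left gap shifts from $1$ to $\leftofa 1$) and at whichever of $\intmax A, \intmax B$ is not the global maximum (whose right gap shifts from $1$ to the appropriate new inter-\segment{} gap); by the boundary definitions $\potlossa 0, \potlossa\lasta, \potlossb 0, \potlossb\lastb$, each of these stray $+\log(\cdot)$ terms equals the negative of a single boundary $\mathsf{pl}$-value. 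Summing everything gives $\potfun{\datast i} - \potfun{\datast{i-1}} = -\cnsd\cdot\mathrm{TotalPL}$, where $\mathrm{TotalPL}$ is the sum of every defined $\mathsf{pl}$-value. Since $\leftofa i \leq \agap i$ and $\rightofa i \leq \agap i$ (and analogously for $\bgap j$), every non-boundary $\mathsf{pl}$ is non-negative, while every boundary $\mathsf{pl}$ has magnitude at most $\log n$; consequently $\mathrm{TotalPL} \geq \mathrm{NBPL} - 2\log n$.

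Combining the two sides,
\[
\amcost i \;\leq\; K_2 \log n + K_2\,\mathrm{NBPL} - \cnsd\,\mathrm{TotalPL} \;\leq\; (K_2 + 2\cnsd)\log n + (K_2 - \cnsd)\,\mathrm{NBPL},
\]
which is $\mathcal O(\log n)$ upon choosing $\cnsd \geq K_2$ and using $\mathrm{NBPL} \geq 0$. The main obstacle is the potential-change computation: it requires a case split on whether $\intmax A > \intmax B$ or the reverse and a careful tally of both endpoints of every inter-\segment{} gap, so that the four boundary $\mathsf{pl}$-definitions introduced earlier absorb exactly the stray logarithmic terms at $\intmin B$ and the sub-maximal extremum, leaving a clean identity $-\cnsd\cdot\mathrm{TotalPL}$ for the potential change.
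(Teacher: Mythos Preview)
Your proposal is correct and follows essentially the same route as the paper: bound the actual cost via Theorem~\ref{thm:unionworstcase}, convert the $\intfun{\cdot}$ terms to $\log\maxpotlossa i,\log\maxpotlossb j$ via Lemma~\ref{lem:costeqpotloss}, pass to the $\mathsf{pl}$-sums via Lemma~\ref{lem:maxmappingbound}, and then cancel against the exact potential drop $-\cnsd\cdot\mathrm{TotalPL}$, absorbing the constantly many boundary $\mathsf{pl}$-terms into the $\mathcal O(\log n)$ slack. The only cosmetic difference is that you apply Lemmas~\ref{lem:costeqpotloss} and~\ref{lem:maxmappingbound} on the cost side to reach $K_2\,\mathrm{NBPL}$ before comparing with the potential, whereas the paper applies them on the potential side; the chain of inequalities is the same.
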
 
\begin{proof} 
We will analyze the \Union{} operation using the potential method \cite{amortizedcomplexity}. Recall that \datast i represent the data structure after operation $i$, where $D_0$ is the initial data structure. The amortized cost of operation $i$ is $\amcost i = \actcost i + \potfun{\datast i} - \potfun{\datast{i-1}}$. Then the amortized cost of the \Unionx{A}{B} operations is

{\allowdisplaybreaks
\begin{align*} 
\amcost i &= \actcost i + \Delta\Phi \\ 
&\leq \cnse\left(\log n +\sum_{i=2}^{\lasta-1} \intfun{A_i} + \sum_{j=2}^{\lastb-1} \intfun{B_j}\right)+ \Delta\Phi&&\mbox{by Theorem~\ref{thm:unionworstcase}.}\\ 
&=\mathcal O(\log n) + \cnse\left(\sum_{i=2}^{\lasta-1} \intfun{A_i} + \sum_{j=2}^{\lastb-1} \intfun{B_j}\right)+ \Delta\Phi\\ 
&=\mathcal O(\log n) + \cnse\left(\sum_{i=2}^{\lasta-1} \intfun{A_i} + \sum_{j=2}^{\lastb-1} \intfun{B_j}\right)- \cnsd\cdot\left(\sum_{i=0}^{\lasta} \potlossa{i} + \sum_{j=0}^{\lastb} \potlossb{j}\right)&&\mbox{by (\ref{eq:mainpotential})}\\
&=\mathcal O(\log n) + \cnse\left(\sum_{i=2}^{\lasta-1} \intfun{A_i} + \sum_{j=2}^{\lastb-1} \intfun{B_j}\right)- \cnsd\cdot\left(\sum_{i=2}^{\lasta-1} \potlossa{i} + \sum_{j=2}^{\lastb-1} \potlossb{j}\right)&& \mbox{$\mathsf{pl}(\cdot)={\cal O}(\log n).$ }\\ 
&<\mathcal O(\log n) + \cnse\left(\sum_{i=2}^{\lasta-1} \intfun{A_i} + \sum_{j=2}^{\lastb-1} \intfun{B_j}\right)- \frac{\cnsd}{7}\cdot\left(\sum_{i=2}^{\lasta-1} \log\maxpotlossa i + \sum_{j=2}^{\lastb-1} \log\maxpotlossb j\right)&& \mbox{by Lemma~\ref{lem:maxmappingbound}.}\\ 
&\leq\mathcal O(\log n) + \cnsf\left(\sum_{i=2}^{\lasta-1} \log\maxpotlossa i + \sum_{j=2}^{\lastb-1} \log\maxpotlossb j\right)- \frac{\cnsd}{7}\cdot\left(\sum_{i=2}^{\lasta-1} \log\maxpotlossa i + \sum_{j=2}^{\lastb-1} \log\maxpotlossb j\right)&& \mbox{by Lemma~\ref{lem:costeqpotloss}.}\\ 
&\leq\mathcal O(\log n) && \mbox{Set ($\cnsd = 7\cnsf$).} 
\end{align*}}
\end{proof} 
We can now state our main theorem. 

\begin{thm} 
\label{thm:maintheorem} 
The \Ds{} executes a sequence of $m$ \hide{\Ms{}, }\Find{}, \Src{}, \Spl{}, and \Union{} operations \hide{, $n$ of which are \Ms{} operations, }in worst-case ${\cal O}(m\log n)$ time. 
\end{thm}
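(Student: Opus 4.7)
The plan is to combine the per-operation amortized bounds established in Theorem~\ref{thm:otheropsbound} and Theorem~\ref{thm:unionamortizedcase} via a standard telescoping argument using the potential function $\Phi$ defined in equation~(\ref{eq:mainpotential}). First I would verify the two boundary conditions needed for the potential method. The initial data structure $\datast 0$ is the collection of $n$ singleton sets, so every gap $\numgap{S}{k}$ is either the all-sentinel value $1$ or a gap over the universe. Inspecting $\varphi(\set S)$ shows that for singleton sets each term is $\log 1 = 0$, hence $\potfun{\datast 0} = 0$. Non-negativity $\potfun{\datast i} \geq 0$ follows because each $\numgap{S}{k} \geq 1$ makes every summand in $\varphi$ non-negative, and $\cnsd$ is a positive constant.

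Next I would telescope the amortized costs. For any sequence of $m$ operations,
\begin{equation*}
\sum_{i=1}^m \actcost i \;=\; \sum_{i=1}^m \amcost i \;+\; \potfun{\datast 0} - \potfun{\datast m} \;\leq\; \sum_{i=1}^m \amcost i,
\end{equation*}
using $\potfun{\datast 0} = 0$ and $\potfun{\datast m} \geq 0$. By Theorem~\ref{thm:otheropsbound}, each \Find{}, \Src{}, and \Spl{} operation contributes $\amcost i = \mathcal O(\log n)$; by Theorem~\ref{thm:unionamortizedcase}, each \Union{} operation also contributes $\amcost i = \mathcal O(\log n)$. Summing over all $m$ operations yields $\sum_{i=1}^m \actcost i = \mathcal O(m \log n)$, which is precisely the claimed worst-case bound on the sequence.

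There is no serious obstacle here; the theorem is essentially a bookkeeping corollary of the preceding analysis. The only subtlety to double-check is that the potential function defined in Section~\ref{subsec:AnalysisPotential} is consistent across operation types, in the sense that each of the amortized bounds invoked was proved with respect to the same $\Phi$. A one-line remark confirming that Theorem~\ref{thm:otheropsbound} and Theorem~\ref{thm:unionamortizedcase} both use the potential in (\ref{eq:mainpotential}) suffices, and the proof concludes with the displayed telescoping inequality above.
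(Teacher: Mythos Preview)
Your proposal is correct and follows the same approach as the paper. The paper's proof is a single sentence (``Follows directly from Theorem~\ref{thm:otheropsbound} and Theorem~\ref{thm:unionamortizedcase}''), and your version simply spells out the standard potential-method telescoping and boundary checks ($\potfun{\datast 0}=0$, $\potfun{\datast i}\geq 0$) that the paper leaves implicit.
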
 

\begin{proof} 
Follows directly from Theorem~\ref{thm:otheropsbound} and Theorem~\ref{thm:unionamortizedcase}. 
\end{proof}

\bibliographystyle{plain}
\bibliography{lgfussbib}

\end{document}